\documentclass{amsart}

\usepackage{mathrsfs, listings, enumerate, hyperref}

\title{Sachs equations and plane waves III: Microcosms}
\author{Jonathan Holland}
\address{Compunetix\\
  2420 Mosside Blvd \# 1\\
  Monroeville, PA 15146}
\author{George Sparling}
\address{University of Pittsburgh\\
  Department of Mathematics\\
  301 Thackeray Hall\\ Pittsburgh, PA 15260
}
\date{\today}
\newtheorem{theorem}{Theorem}
\newtheorem{lemma}{Lemma}
\newtheorem{corollary}{Corollary}
\newtheorem{definition}{Definition}
\newcommand{\im}{\operatorname{im}}

\newcommand{\op}{\operatorname}

\begin{document}
\begin{abstract}
  This article examines the structure of plane wave spacetimes (of signature $(1,n+1)$, $n\ge 2$) that are homogeneous (the isometry group is transitive) and geodesically complete -- which we call {\em microcosms}.  In general, a plane wave is shown to determine a smooth positive curve in the Lagrangian Grassmannian associated with the $2n$ dimensional symplectic vector space of Jacobi fields.  We show how to solve the Sachs equations in full generality for microcosms and, moreover, we relate the power series expansions of canonical solutions to the Sachs equations on a general plane wave to Bernoulli-like recursions.  It is shown that for microcosms, the curve in the Lagrangian Grassmannian associated to a microcosm is an orbit of a one parameter group in $\op{Sp}(2n,\mathbb R)$.  We also give an effective method for determining the orbit.  Finally, we specialize to the case of $n=2$, and give analytical formulae for the solutions to the Sachs equations and the associated one-parameter group orbit.
\end{abstract}
\maketitle

\tableofcontents

\section{Introduction}
In \cite{SEPI}, the authors introduced and described plane wave spacetimes.  We recollect some relevant definitions.  
\begin{definition}
  A {\em Penrose limit} is a quadruple $(M,G,\mathcal D,\gamma)$ where:
  \begin{itemize}
  \item $M$ is a connected simply-connected manifold, and $G$ is a smooth metric tensor on $M$ of signature $(1,n+1)$;
  \item $\mathcal D_t:M\to M$ is a smooth one-parameter proper group of diffeomorphisms, called the {\em dilation} group;
  \item $\mathcal D_t^*G=e^{2t}G$ for all $t\in\mathbb R$; and
  \item the limit $\lim_{t\to\infty}\mathcal D_t^{-1}x$ exists for all $x\in M$, and accumulates on a curve $\gamma$ which is a smooth curve in $M$.
  \end{itemize}
  Moreover, the curve $\gamma$ is always a null geodesic in $M$, so it is called the {\em central null geodesic}.
\end{definition}

We then define a {\em plane wave} to be a Penrose limit spacetime where the $\mathcal D_t$ and $\gamma$ are ``forgotten''.  Thus a plane wave is a pair $(M,G)$ for which there exists a $\mathcal D_t:M\to M$ making $(M,G,\mathcal D_t,\op{Fix}(\mathcal D))$ a Penrose limit.  In general, every plane wave admits many dilations (at least a $(2n+1)$-dimensional family).  

The unstable manifolds of the dilation group $\mathcal D_t$ of a Penrose limit define the {\em wave fronts}.  In \cite{SEPI}, we show that the wave fronts are $n+1$ dimensional affine spaces on which the dilation acts as a linear transformations fixing an origin defined by the point where the central null geodesic intersects the wave front.  These spaces are ``unstable'' manifolds of $\mathcal D_t$ because they are the directions along which $\mathcal D_t$ expands.  We can thus form an intuitive picture of a Penrose limit as a null geodesic $\gamma$, with affine parameter $u$, and a foliation by wave fronts, one for each $\gamma(u)$ in $\gamma$.

The basic examples of Penrose limits are metrics on the space $\mathbb M = \mathbb U\times\mathbb R\times\mathbb X$ where $\mathbb U$ is a real interval and $\mathbb X$ a fixed real Euclidean space.  We use coordinates $(u,v,x)$ for $\mathbb M$ throughout, where $u\in\mathbb U, v\in\mathbb R$, and $x\in \mathbb X$.

\begin{itemize}
\item A {\em Brinkmann metric}:
  \begin{equation}\label{Brinkmann1}
    \mathcal G_B(p) = 2\,du\,dv + x^Tp(u)x\,du^2 - dx^Tdx
\end{equation}
where $u$ belongs to a real interval, $v\in\mathbb R$, $x$ belongs to a fixed real Euclidean $\mathbb R^n$, and $p(u)$ is an $n\times n$ symmetric real matrix depending smoothly on $u$
  Here the dilation group is
  \begin{equation}\label{BasicDilation}
    \mathcal D_t(u,v,x) = (u,e^{2t}v,e^tx).
  \end{equation}
  The central null geodesic is the curve $v=0,x=0$, $u$ arbitrary.  The wave fronts are the $u=$constant hypersurfaces.
\item A {\em Rosen universe}:
  \begin{equation}\label{RosenUniverse}
  \mathcal G_R(g) = 2\,du\,dv - dx^Tg(u)dx
\end{equation}
where $g(u)$ is a $n\times n$ positive-semidefinite matrix depending smoothly on $u\in\mathbb U$, a real interval.  (The matrix $g$ is positive definite away from a discrete set of points $\mathbb S\subset\mathbb U$, and the points of $\mathbb S$ where $g$ degenerates are removable singularities of the metric.)  The dilation is as in \eqref{BasicDilation}, with the same wave fronts and central null geodesic as in the Brinkmann case.
\item The {\em Alekseevsky metric}:
  \begin{align}
    \label{Alekseevsky1} \mathcal G_\alpha(p,\omega) &= du\,\alpha - dx^Tdx\\
    \notag \alpha &= 2\,dv + x^Tp(u)x\,du - 2x^T\omega(u)\,dx
  \end{align}
where $p(u),\omega(u)$ are smooth $n\times n$ matrices, $p=p^T,\omega=-\omega^T$.  (Again, the dilation, central null geodesic, and wave fronts are as in the other cases.)
\end{itemize}
An important result, ultimately due to Alekseevsky \cite{Alekseevsky}, but reformulated in this way in \cite{SEPI}, is:
\begin{theorem}\label{SEPITheorem1}
Every plane wave is a Brinkmann metric, a Rosen universe, and an Alekseevsky metric.
\end{theorem}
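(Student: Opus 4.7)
The plan is to build each of the three canonical forms explicitly, with the dilation group and the central null geodesic providing all the rigidity needed. I would begin by analyzing the linearization $d\mathcal{D}_t$ at a fixed point $\gamma(u)\in \op{Fix}(\mathcal{D}_t)$. The condition $\mathcal{D}_t^* G = e^{2t}G$ forces the eigenvalues of $d\mathcal{D}_t$ on $T_{\gamma(u)}M$ (written as $e^{at}$) to pair with $G$ so that $a+a' = 2$; the vector $\dot\gamma$ is fixed (since $\gamma$ is pointwise fixed), contributing $a=0$; and the signature $(1,n+1)$ plus nondegeneracy of $G$ then forces the weights to be $(0,1,\ldots,1,2)$ with multiplicities $(1,n,1)$. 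The $a=2$ line is null and $G$-dual to $\dot\gamma$, while the $a=1$ subspace is spacelike of dimension $n$. This is the algebraic skeleton from which the three coordinate systems hang.

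Next I would construct Brinkmann coordinates. Choose an orthonormal basis of the $a=1$ eigenspace at one point of $\gamma$, parallel-transport the full eigenframe along $\gamma$ (parallel transport preserves the weight decomposition, since the dilation is an isometry up to scale), and then extend off $\gamma$ by exponentiating along each wave front. The wave fronts are, by the results recalled from \cite{SEPI}, affine $(n+1)$-spaces on which $\mathcal{D}_t$ acts linearly with origin $\gamma(u)$, so this exponentiation produces bona fide global coordinates $(u,v,x)$ compatible with \eqref{BasicDilation}. Homogeneity of weights under the dilation forces every term of the metric to scale correctly; the only admissible $(u,v,x)$-polynomial with those weights, after using the residual frame freedom along $\gamma$ to kill $du\,dx^i$ cross terms and $x\,du^2$ linear terms, is precisely \eqref{Brinkmann1}, with a smooth symmetric $p(u)$.

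The Rosen and Alekseevsky forms then follow from explicit substitutions in the Brinkmann form, whose coefficients satisfy linear ODEs along $\gamma$. For Rosen, write $x = E(u)y$ where $E$ is a matrix solution of the Sachs equation $\ddot E = p(u)E$ with $E(u_0)=I$, and $v = w + \tfrac12 y^T E^T \dot E\, y$; the pullback of $\mathcal{G}_B$ becomes $2\,du\,dw - dy^T g(u)\,dy$ with $g = E^T E$, which is \eqref{RosenUniverse}. The degeneracies of $g$ occur only at conjugate points of $E$, a discrete set, and are removable because the Brinkmann form is smooth there. For Alekseevsky, I would instead let the $x$-frame rotate along $\gamma$ by $R(u) \in O(n)$ with $\dot R R^{-1} = \omega(u)$ skew, absorbing the resulting $dx^T\omega x\,du$ cross term into the term $2x^T\omega(u)\,dx\,du$ of $\alpha$ in \eqref{Alekseevsky1}.

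The main obstacle is the very first step: extracting the precise weight decomposition $(0,1,2)$ on $T_\gamma M$ and the affine structure of the wave fronts from the bare Penrose-limit axioms. These are structural facts established in \cite{SEPI}; once invoked, the remainder of the argument is essentially linear algebra along $\gamma$ plus the existence theory for linear ODEs, without any genuinely nonlinear input.
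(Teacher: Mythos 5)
First, a point of comparison: the paper does not actually prove Theorem \ref{SEPITheorem1} here --- it is imported from \cite{SEPI} (ultimately from Alekseevsky \cite{Alekseevsky}); the only pieces re-derived in this paper are the interconversions, Rosen $\leftrightarrow$ Brinkmann in \S\ref{RosenSection} and Alekseevsky $\leftrightarrow$ Rosen in Theorem \ref{AlekseevskyToRosen}. Your steps three and four essentially reproduce those conversions (and note that Brinkmann $\Rightarrow$ Alekseevsky is immediate, since \eqref{Brinkmann1} is \eqref{Alekseevsky1} with $\omega=0$; the rotating frame is only needed in the other direction, cf.\ Lemma \ref{GAGBIso}). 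In the Rosen step there is a small but genuine gap: the substitution $v = w + \tfrac12 y^TE^T\dot E\,y$ kills the $du\,dy$ cross terms only if $E^T\dot E$ is symmetric, i.e.\ only if the columns of $E$ form an abreast (Lagrangian) family, $\dot E^TE - E^T\dot E = 0$. Prescribing $E(u_0)=I$ alone does not guarantee this; you must also take $\dot E(u_0)$ symmetric (and, terminologically, $\ddot E = p(u)E$ is the Jacobi equation --- the Sachs equation is the Riccati equation satisfied by $S=\dot E E^{-1}$).

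The more serious gap is in step two, which carries the real content of the theorem. Dilation-homogeneity of $G$ under \eqref{BasicDilation}, together with smoothness at the central geodesic, only fixes the fiberwise weights of the coefficients: it yields a metric of the form $2h(u)\,du\,dv + \bigl(a(u)v + x^Tp(u)x\bigr)du^2 + 2x^T\omega(u)\,dx\,du + \ldots - dx^Tg(u)\,dx$, with $g(u)$ not necessarily constant and with rotation terms $x^T\omega(u)\,dx\,du$ allowed (these are exactly the terms that distinguish the Alekseevsky form from the Brinkmann form). Your claim that ``the only admissible polynomial with those weights, after using the residual frame freedom along $\gamma$,'' is precisely \eqref{Brinkmann1} therefore asserts the conclusion rather than deriving it: eliminating $a(u)$ and $h(u)$ requires reparametrizing $u$ and rescaling $v$ (using that $\gamma$ is an affinely parametrized null geodesic), and flattening $g(u)$ to $\delta_{ij}$ while simultaneously removing the $du\,dx$ cross terms is not a pointwise linear-algebra normalization of the frame along $\gamma$ --- it is again the Rosen/rotating-frame-to-Brinkmann step, i.e.\ an ODE argument with a Lagrangian (parallel-propagated) family of Jacobi fields, which is precisely the nontrivial part of Alekseevsky's theorem that your sketch leaves implicit. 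The honest version of your argument is: homogeneity gives the general weight-allowed ansatz above, and then the coordinate changes of your steps three and four (applied in the other direction, with the Lagrangian condition supplied) reduce it to \eqref{Brinkmann1}; as written, that reduction is missing.
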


In all of these examples:
\begin{itemize}
\item the central null geodesic is the curve $\gamma : (u\in\mathbb U)\mapsto (u,0_{\mathbb R},0_{\mathbb X})$;
\item the dilation is the group $\mathcal D_t(u,v,x)=(u,e^{2t}v,e^tx)$; and
\item the wave fronts are the $u=$constant hypersurfaces.
\end{itemize}

Every plane wave has a group of at least $2n+1$ symmetries.  For a plane wave with $p(u)$, its Lie algebra of infinitesimal Killing symmetries typically annihilates $u$, and therefore the level sets of $u$ are invariant manifolds on which the Lie algebra integrates to a transitive group of transformations that all fix $u$.  In that case, the Killing algebra generates an isometry group that is transitive on the $2n+1$ dimensional manifold of null directions transverse to any fixed wave front.

In the exceptional case of a plane wave with a group of symmetries that does not fix the $u$ coordinate, the automorphism group is transitive.  They are called {\em homogeneous plane waves}.

The two main examples of homogeneous plane waves are as follows.  First, consider the Alekseevsky metric
\begin{align*}
 \mathcal G_A(\omega, p) &= du\,\alpha - dx^Tdx, \\
 \alpha &= 2\,dv - 2\,x^T\omega\,dx + x^Tpx\,du
\end{align*}
where $p,\omega$ are constant $n\times n$ matrices with $\omega^T=-\omega$, $p^T=p$.  Clearly these have $u\mapsto u+c$ for an extra symmetry, since nothing depends on $u$.  Secondly, consider the Brinkmann metric
\begin{equation}\label{MicrocosmBrink}
  \mathcal G_B(\omega,p) = 2\,du\,dv + x^Te^{-u\omega}pe^{u\omega}x\,du^2 - dx^Tdx.
\end{equation}
(Note that $e^{u\omega}$ denotes the one parameter subgroup in $SO(n)$ generated by the constant skew matrix $\omega$.)  Now the extra symmetry is
$$u\mapsto u+c,\quad x\mapsto e^{-c\omega}x.$$
Every homogeneous plane wave is globally conformal to either of these forms, with conformal factor $e^{bu}$ for a constant $b$.  The obvious change of variables ($X = e^{u\omega}x$) is an isomorphism of Penrose limits.  We summarize:
\begin{lemma}\label{GAGBIso}
  Let $\omega=-\omega^T$ and $p=p^T$ be constant.  Then
  $$\mathcal G_B(\omega,p)\cong\mathcal G_A(\omega,p+\omega^2),$$
  and
  $$\mathcal G_A(\omega,p)\cong\mathcal G_B(\omega,p-\omega^2)$$
  where the isometric isomorphisms preserve the dilation and fix the central null geodesic pointwise.
\end{lemma}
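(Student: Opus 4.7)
The plan is to verify both isometries by the explicit change of variables $X = e^{u\omega}x$ (equivalently $x = e^{-u\omega}X$), and then observe that the two statements in the lemma are equivalent upon replacing $p$ by $p-\omega^2$.

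First I would work out the pullback of $\mathcal G_B(\omega,p)$ under $\Phi:(u,v,X)\mapsto(u,v,e^{-u\omega}X)$. Two small identities drive everything: since $\omega^T=-\omega$, we have $(e^{u\omega})^T = e^{-u\omega}$, and since $\omega$ commutes with its own exponential, $e^{u\omega}\omega e^{-u\omega} = \omega$. With $x=e^{-u\omega}X$, one immediately gets
\[
  x^T e^{-u\omega}p\, e^{u\omega} x = X^T e^{u\omega}e^{-u\omega}\,p\, e^{u\omega}e^{-u\omega}X = X^T p\,X,
\]
so all $u$-dependence in the Brinkmann potential is absorbed into the coordinate change. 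Next I compute $dx = -\omega e^{-u\omega}X\,du + e^{-u\omega}dX$, and expand $dx^T dx$; commuting $\omega$ past $e^{u\omega}$ and using $dX^T\omega X = -X^T\omega\,dX$ (scalar transpose with $\omega^T=-\omega$) collapses the four terms to
\[
  dx^T dx = -X^T\omega^2 X\,du^2 + 2\,du\,X^T\omega\,dX + dX^T dX.
\]
Substituting these into $\mathcal G_B(\omega,p)$ yields
\[
  2\,du\,dv + X^T(p+\omega^2)X\,du^2 - 2\,du\,X^T\omega\,dX - dX^T dX,
\]
which is exactly $\mathcal G_A(\omega,p+\omega^2)$ after reading off $\alpha = 2\,dv - 2X^T\omega\,dX + X^T(p+\omega^2)X\,du$. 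The second identity $\mathcal G_A(\omega,p)\cong\mathcal G_B(\omega,p-\omega^2)$ is then the first with $p$ replaced by $p-\omega^2$ and the roles of source and target swapped.

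Finally I would verify the compatibility claims. The map $\Phi$ fixes the $u$ and $v$ coordinates and sends $X=0$ to $x=0$, so it fixes the central null geodesic $\{v=0,x=0\}$ pointwise. Compatibility with the dilation reduces to the commutation check
\[
  \Phi\circ\mathcal D_t(u,v,X) = (u,e^{2t}v,e^t e^{-u\omega}X) = \mathcal D_t\circ\Phi(u,v,X),
\]
which is immediate because $e^{-u\omega}$ is linear in $X$ and depends only on $u$.

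The computation is entirely mechanical; the only place to be careful is the sign in $dX^T\omega X = -X^T\omega\,dX$, which is what makes the cross term combine into a single $-2\,du\,X^T\omega\,dX$ rather than cancel, and the appearance of the $+\omega^2$ shift, which comes from the $du^2$ piece of $dx^T dx$ picking up a minus sign from the overall $-dx^T dx$ in the Brinkmann metric. That sign bookkeeping is the only genuine place a slip could occur.
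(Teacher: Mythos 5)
Your proposal is correct and follows exactly the paper's route: the paper simply asserts that the ``obvious change of variables'' $X=e^{u\omega}x$ gives the isomorphism, and your calculation is the verification of that assertion, with the sign bookkeeping (the $\omega^2$ shift from the $du^2$ part of $dx^Tdx$ and the cross term $-2\,du\,X^T\omega\,dX$) done correctly, along with the dilation-equivariance and fixed-geodesic checks.
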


\begin{definition}\label{MicrocosmDef}
A homogeneous plane wave that is geodesically complete is called a {\em microcosm}.
\end{definition}
See \cite{SEPII} for details.  Thus the metrics $\mathcal G_A$ and $\mathcal G_B$ are microcosms, and every homogeneous plane wave is globally conformal to a microcosm.

In the Brinkmann microcosm, the tidal curvature $p(u)$ is the orbit of a one-parameter subgroup of the rotation group.  The purpose of this paper is to obtain an analogous characterization of the Rosen microcosms.  With suitable definitions in place, we shall prove that every Rosen microcosm is the orbit of a one-parameter subgroup of the symplectic group $\op{Sp}(2n,\mathbb R)$.  This theorem is proven in \S\ref{StructureMicrocosm} in general.  

A key ingredient in our construction is algebraic Riccati equation, for an unknown symmetric matrix $X$,
\begin{equation}\label{AlgRiccati1}
  X^2 - \omega X + X\omega + p = 0
\end{equation}
where $\omega=-\omega^T, p=p^T$ are constant real $n\times n$ matrices, but we allow the solution $X$ to be complex.  In \S\ref{SachsEquationsAlekseevsky}, we show how this equation arises naturally from microcosms.  The formalism developed in this section is used in \S\ref{2x2case} to find the solutions in the case relevant to four-dimensional spacetime, when all matrices are $2\times 2$.  Here we can be very explicit about the solutions.  In particular, there are always {\em complex} solutions, and we obtain a characterization of real solutions.  It is significant in our view that the typical case of interest in gravitation, where the energy density of spacetime is positive, there are no real solutions.  Thus complex numbers appear naturally in connection with positive energy.  Although complex numbers are needed to describe the solutions to \eqref{AlgRiccati1}, ultimately the group orbits they describe belong to the {\em real} symplectic group, a phenomenon well-known in algebra.

Algebraic Riccati equations have been studied extensively, often in connection with applications to optimal control.\footnote{A few references are \cite{lancaster1995algebraic}, \cite{laub1979schur}, \cite{paige1981schur}.}
In the optimal control setting, one nearly always assumes that the eigenvalues of the Hamiltonian matrix are not purely imaginary.  Since this excludes some of the most important physical cases of plane waves, we have not been able to find theorems in the literature that map directly onto the results that we need in the proof of Theorem \ref{SpecialRosens}.  While in many applications, it is important that the solutions to the matrix Riccati differential equations should be real, we offer what we hope is a novel perspective that one can still obtain solutions to the differential equation that are real, beginning from complex solutions of the algebraic Riccati equation.  This perspective may in turn have applications to other areas.  We have given complete proofs from scratch.

This paper is organized as follows.

Since it is the conformal structure that governs the abreast Jacobi equation, we focus on the geodesically complete case, i.e., microcosms (Definition \ref{MicrocosmDef}).  In \S\ref{RosenSection}, we recount some features of Rosen metrics that are important to the present article.  A more thorough treatment is given in the earlier article \cite{SEPI}.  A key point of this section is that, while any null geodesic naturally embeds into the Lagrangian Grassmannian of the Jacobi fields abreast of it, for a Rosen metric, the embedding can be written down explicitly.

In \S\ref{SachsEquationsAlekseevsky}, the Sachs and Jacobi equations for microcosms are introduced and studied.  Theorems \ref{TheoremSachsOmega0} and \ref{TheoremSachsOmega} give the general solution of the Sachs equation for microcosms.  These are constructed with the aid of solutions to the Riccati equation \eqref{AlgRiccati1}.  The constant solutions in the case of $n=2$ can be written down by brute force, in \S\ref{2x2case}.

Section \ref{StructureMicrocosm} proves the main result of this paper is Theorem \ref{SpecialRosens}, which is that the embedding of a microcosm into the Lagrangian Grassmannian is an orbit of a one-parameter subgroup of the symplectic group $\op{Sp}(2n,\mathbb R)$.

The final section is devoted to the example case when $n=2$, so for a four-dimensional plane wave spacetime.  We give explicit analytic formulas for the orbit of the symplectic group, and determine the conjugate points.

\section{Preliminaries}
We briefly recall some relevant definitions from \cite{SEPI}.  Let $(M,g)$ be a spacetime, where $M$ is a smooth manifold of dimension $n+2$ and $g$ is a smooth symmetric form of signature $(1,n+1)$ on $M$.  We work at a fixed null geodesic $\gamma$ in $M$.
\begin{definition}
  The space $\mathbb J(\gamma)$ of {\em abreast Jacobi fields} is the space of Jacobi fields along $\gamma$ that are orthogonal to $\gamma$.
\end{definition}
By Jacobi field, we mean a solution of the Jacobi equation
$$\ddot x(u) = P(u)x(u) $$
where $P$ is the tidal curvature operator along $\gamma$, $x$ is a vector field along $\gamma$, and the dot is (covariant) differentiation with respect to the affine parameter $u$ along $\gamma$.  Because the right-hand side of this equation is invariant under translation by a multiple of the tangent vector to $\gamma$, we shall always regard Jacobi fields modulo this tangent vector.    Also, the the tidal curvature along $\gamma$ is the symmetric form on $\gamma^{-1}TM$
$$P(X,Z) = R(\dot\gamma,X,\dot\gamma,Z).$$
(A Jacobi field is called abreast if it is orthogonal to the tangent vector $\dot \gamma$ of $\gamma$.)

We now observe:
\begin{lemma}
  Define a skew-symmetric form on $\mathbb J(\gamma)$ by
  $$\omega(X,Y) = g(\dot X,Y) - g(X,\dot Y).$$
  Then, for each $X,Y\in\mathbb J(\gamma)$, $\omega(X,Y)$ is well-defined, constant along $\gamma$, and non-degenerate (symplectic).
\end{lemma}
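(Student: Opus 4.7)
The plan is to address the three assertions---well-definedness on the quotient, constancy along $\gamma$, and non-degeneracy---in sequence, with all of them reduced to a careful accounting of which pieces live in which quotient space.

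For well-definedness, I would first note that $\dot\gamma$ is parallel along $\gamma$ (since $\gamma$ is a geodesic) and satisfies $P\dot\gamma = R(\dot\gamma,\dot\gamma,\dot\gamma,\cdot)=0$ by the symmetries of the Riemann tensor, so $\dot\gamma$ itself is a Jacobi field; the same computation shows $u\dot\gamma$ is one as well. These span the subspace of abreast Jacobi fields proportional to $\dot\gamma$ and constitute exactly what we are quotienting out. It therefore suffices to check that $\omega(X,Y)$ vanishes whenever $X=(au+b)\dot\gamma$ (and similarly in the second slot). Substituting, $\omega(X,Y) = a\,g(\dot\gamma,Y) - (au+b)\,g(\dot\gamma,\dot Y)$; the first term vanishes because $Y$ is abreast, and differentiating the identity $g(Y,\dot\gamma)=0$ and using $\nabla_{\dot\gamma}\dot\gamma=0$ shows that $\dot Y\in\dot\gamma^\perp$ as well, killing the second term.

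For constancy, I would differentiate directly:
$$\frac{d}{du}\omega(X,Y) = g(\ddot X,Y)+g(\dot X,\dot Y) - g(\dot X,\dot Y)-g(X,\ddot Y) = g(PX,Y)-g(X,PY),$$
which vanishes because the tidal curvature operator $P$ is $g$-self-adjoint (the tensor $R(\dot\gamma,\cdot,\dot\gamma,\cdot)$ is symmetric in its two free slots by the pair-symmetry of the curvature tensor).

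For non-degeneracy, I would use the initial-value description of Jacobi fields. Evaluation at a base point $u=u_0$ identifies $\mathbb J(\gamma)$ with pairs $(X_0,\dot X_0)\in V\oplus V$, where $V := \dot\gamma(u_0)^\perp/\mathbb R\,\dot\gamma(u_0)$; abreast-ness ensures both components land in $V$, and the quotient by multiples of $\dot\gamma$ on the Jacobi side is exactly the quotient defining $V$. The inherited form $g$ on $V$ is positive definite of rank $n$ because $\dot\gamma$ is null in a Lorentzian space, so $g|_V$ is non-degenerate. Under this identification, $\omega$ at $u_0$ becomes the standard symplectic form $(X_0,\dot X_0;Y_0,\dot Y_0)\mapsto g(\dot X_0,Y_0)-g(X_0,\dot Y_0)$ on $V\oplus V$, whose non-degeneracy is immediate from that of $g|_V$.

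The only real subtlety---and where I expect to need the most care---is the bookkeeping in the first and third steps: one must check both that the descent to $\dot\gamma^\perp/\mathbb R\,\dot\gamma$ is compatible with differentiation of abreast fields, and that the initial-data identification respects the quotient on both the field side and the target side. Once the two quotients are matched, the symplectic structure is essentially forced by the Jacobi equation and the symmetries of curvature.
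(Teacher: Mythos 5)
Your proof is correct and complete. The paper itself states this lemma without proof (it is recalled as background from the first article in the series), so there is no in-paper argument to compare against; your three steps -- descent to the quotient using that the Jacobi fields proportional to $\dot\gamma$ are exactly the affine multiples $(au+b)\dot\gamma$ together with $g(\dot Y,\dot\gamma)=0$ for abreast $Y$, constancy from the $g$-self-adjointness of the tidal operator $P$, and non-degeneracy via the identification of $\mathbb J(\gamma)$ with $V\oplus V$ for $V=\dot\gamma^\perp/\mathbb R\,\dot\gamma$ -- are precisely the standard argument that fills this gap. One small nit: with the paper's signature convention $(1,n+1)$ (e.g.\ $2\,du\,dv-dx^Tdx$), the induced metric on the screen space $V$ is negative definite rather than positive definite, but only its definiteness, hence non-degeneracy, enters your argument, so nothing is affected.
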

The basic arena for this work is then the symplectic vector space $(\mathbb J,\omega)$ (we henceforth suppress the $\gamma$ from the notation).

\subsection{Rosen metrics}\label{RosenSection}
A {\em Rosen metric} is a smooth metric of the form
$$\mathcal G_R = 2\,du\,dv - dx^Th(u)dx.$$
Here $u\in\mathbb U$, a non-empty real interval, $v\in\mathbb R$, $x\in\mathbb X$, and $h(u)$ is a positive-definite symmetric endomorphism of $\mathbb X$, smoothly dependent on $u$.  As in the case of a Brinkmann metric, the vector field $D=2v\partial_v + x\cdot\partial_x$ generates a dilation which fixes the null geodesic $v=0, x=0$.

Every Rosen metric has a change of coordinates that puts it into Brinkmann form.  It is easy to see that the vector fields $\partial_x$ form a Lagrangian basis of Jacobi fields.  Then there exists a factorization of $g(u)$ as $g=L^TL$, with $\dot L = SL$ for some symmetric tensor $S$ satisfying the Sachs equation $\dot S + S^2 + p=0$.\footnote{Note that the decomposition $g=L^TL$ exists globally on $\mathcal G_R$, with $L$ being the components of the Jacobi fields $\partial_x$ in a parallel frame up the central geodesic.}

With $X=Lx$, $V=v+\tfrac12 X^TSX$, we obtain immediately from the equations
$$\mathcal G_R = 2\,du\,dV + X^TpX\,du^2 - dX^TdX.$$
So every Rosen is a Brinkmann, globally in the $u$ coordinate.

Conversely, every Brinkmann is locally a Rosen, by reversing this argument.  On occasion, it is convenient to change freely between Brinkmann and Rosen coordinates.

We think of Rosen metrics as a convenient model for the symplectic space $(\mathbb J,\omega)$.  Specifically, the following is shown in \cite{SEPI}:
\begin{lemma}
The vector fields $\partial_x$ are abreast Jacobi fields for the central null geodesic $\gamma$.
\end{lemma}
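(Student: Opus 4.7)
The plan is to check the two claims separately. The abreast condition is essentially tautological: the metric $\mathcal G_R = 2\,du\,dv - h_{ij}(u)\,dx^i\,dx^j$ has no $du\,dx^i$ cross-term, so $\mathcal G_R(\partial_u, \partial_{x^i}) = 0$ identically on $M$. Since $\dot\gamma = \partial_u$ along $\gamma$, each $\partial_{x^i}$ is orthogonal to $\dot\gamma$, settling abreastness.

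For the Jacobi equation I plan to compute both sides directly in Rosen coordinates. A short calculation shows that the only non-vanishing Christoffel symbols of $\mathcal G_R$ are $\Gamma^v_{ij} = \tfrac12\dot h_{ij}$ and $\Gamma^k_{iu} = \Gamma^k_{ui} = \tfrac12(h^{-1}\dot h)^k{}_i$; in particular $\Gamma^\lambda_{uu} \equiv 0$, which confirms that $\gamma$ is affinely parametrized. Setting $A(u) := h(u)^{-1}\dot h(u)$, the first covariant derivative computes to $\nabla_{\partial_u}\partial_{x^i} = \tfrac12 A^k{}_i\,\partial_{x^k}$ (valid throughout $M$, since $A$ depends only on $u$), and iterating yields
\[
\nabla_{\partial_u}^2\partial_{x^i} = \bigl(\tfrac12\dot A + \tfrac14 A^2\bigr)^k{}_i\,\partial_{x^k}.
\]
On the curvature side, $\Gamma^\lambda_{uu} \equiv 0$ gives $\nabla_{\partial_u}\partial_u = 0$ globally, so the second term of $R(\partial_u,\partial_{x^i})\partial_u = \nabla_{\partial_u}\nabla_{\partial_{x^i}}\partial_u - \nabla_{\partial_{x^i}}\nabla_{\partial_u}\partial_u$ vanishes. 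By Christoffel symmetry, $\nabla_{\partial_{x^i}}\partial_u = \nabla_{\partial_u}\partial_{x^i} = \tfrac12 A^k{}_i\partial_{x^k}$, and a further application of $\nabla_{\partial_u}$ produces exactly $\bigl(\tfrac12\dot A + \tfrac14 A^2\bigr)^k{}_i\partial_{x^k}$ again. The two sides of the Jacobi equation therefore coincide.

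The main obstacle is conceptual rather than computational: neither side of the Jacobi equation is individually trivial (the fields $\partial_{x^i}$ are not parallel along $\gamma$, and the Riemann tensor of $\mathcal G_R$ is generally nonzero), yet they match identically. What makes the computation close is that both sides collapse to the same polynomial expression $\tfrac12\dot A + \tfrac14 A^2$ in $A = h^{-1}\dot h$. This coincidence is precisely the feature that renders the Rosen form particularly convenient as a model for the symplectic space $(\mathbb J, \omega)$: a full basis of abreast Jacobi fields is furnished by the coordinate vector fields $\partial_{x^i}$ themselves, obviating the need to solve the Jacobi ODE separately as one would in Brinkmann coordinates.
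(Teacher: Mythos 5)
Your verification is correct: the Christoffel symbols of $\mathcal G_R$ are exactly the ones you list ($\Gamma^v_{ij}=\tfrac12\dot h_{ij}$, $\Gamma^k_{ui}=\tfrac12(h^{-1}\dot h)^k{}_i$, all others zero), both sides of the geodesic deviation equation collapse to $\bigl(\tfrac12\dot A+\tfrac14A^2\bigr)^k{}_i\partial_{x^k}$ with $A=h^{-1}\dot h$, and the absence of $du\,dx$ cross-terms gives abreastness, so the lemma is established. Note, however, that the paper itself offers no argument for this statement -- it is quoted from \cite{SEPI} -- so there is no internal proof to compare with; your computation supplies a self-contained substitute. Two shortcuts are worth knowing, since they explain why no real computation is needed: first, the coefficients of $\mathcal G_R$ are independent of $x$, so each $\partial_{x^i}$ is a Killing field, and the restriction of a Killing field to any geodesic is automatically a Jacobi field; second, since $\Gamma^\lambda_{uu}\equiv 0$ throughout $M$, the $x$-translates $u\mapsto(u,0,x_0+s e_i)$ of $\gamma$ are themselves affinely parametrized null geodesics, so $\partial_{x^i}$ arises as the variation field of a one-parameter family of geodesics, which is the definition of a Jacobi field. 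Either observation replaces the curvature computation entirely; your direct check buys the explicit formula $\tfrac12\dot A+\tfrac14 A^2$ for the tidal term in Rosen coordinates, which is not needed for the lemma but is consistent with the Sachs formalism used later.
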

Note that there are $n$ independent such vector fields, spanning the $n$-dimensional vector space $\mathbb X$.  Moreover, we have $\omega(\partial_x,\partial_x)=0$, so the space of vector fields $\mathbb X$ is Lagrangian in $\mathbb J$.

Conversely, given an arbitrary spacetime $M$, null geodesic $\gamma$, and Lagrangian subspace $\mathbb X\subset\mathbb J(\gamma)$, we can write down a Rosen metric in any region of the spacetime where the Jacobi fields belonging to $\mathbb X$ are linearly independent.  This Rosen metric can be identified with a Penrose limit of the spacetime around $\gamma$ \cite{penrose1976any}.

\subsection{Embedding in the Lagrangian Grassmannian}
Given a Rosen metric $\mathcal G_R(h)$, for each $u\in\mathbb U$, let $\mathbb H(u)\subset\mathbb J$ be the subspace space of Jacobi fields of $\mathbb J$ vanishing at the point $\gamma(u)$ of the central null geodesic.  Then $\mathbb H(u)$ is a Lagrangian subspace of $\mathbb J$.  This is obvious, by the definition of the symplectic form.  It is also clear on general principles that $\mathbb H$ is smooth.  To see this more directly, a basis of $\mathbb H(u_0)$ consists of the Jacobi fields
$$x^i\partial_v + (H^{ij}(u) - H^{ij}(u_0))\partial_j $$
where $H^{ij}$ is any solution of the differential equation
$$\dot H = h^{-1}.$$

For technical reasons, it shall sometimes be convenient to consider the space $\mathbb J\otimes\mathbb C$ {\em complex} Jacobi fields in this setting, i.e., complex solutions to the differential equation
$$\ddot J + p J = 0.\footnote{Here $p$ is the same {\em real} tidal curvature, and the dot is differentiation with respect to the {\em real} variable $u$ on the null geodesic $\gamma$, and $J$ is a section of $(\gamma^{-1}T\mathbb M/\dot\gamma)\otimes\mathbb C$}$$

A basis $L=[L_1\dots L_n]$ of solutions is Lagrangian provided $\dot L^T L - L^T \dot L=0$.  The Rosen metric associated to a complex Lagrangian subspace is no longer real (and in particular, it is no longer a Lorentzian metric).  However, the set of complex Jacobi fields vanishing at the point $u=u_0$ of the geodesic $\gamma$ is given by
$$z^i\partial_v + (H^{ij}(u) - H^{ij}(u_0))L_j$$
as before, where now $z^i$ are complex coordinates dual to the $L_i$.  This subspace $\mathbb H_{\mathbb C}(u_0) \subset \mathbb J\otimes\mathbb C$ is the complexification of the real subspace of $\mathbb H(u_0)$, i.e.:
\begin{lemma}\label{ComplexificationLemma}
$$\mathbb H_{\mathbb C}(u_0) = \mathbb H(u_0)\otimes\mathbb C \subset\mathbb J\otimes\mathbb C.$$
\end{lemma}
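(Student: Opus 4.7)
The plan is to prove this by exploiting the fact that the Jacobi equation $\ddot J + pJ = 0$ has real coefficients, so complex solutions decompose canonically into real and imaginary parts, each of which is again a Jacobi field.

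First I would observe that $\mathbb J \otimes \mathbb C$ is canonically identified with the space of complex solutions of $\ddot J + p J = 0$: given any complex Jacobi field $J$, write $J = J_1 + i J_2$; since $p$ is real, both $J_1$ and $J_2$ solve the Jacobi equation, so they lie in $\mathbb J$. Conversely every $J_1 + i J_2$ with $J_1, J_2 \in \mathbb J$ solves the equation. This identification is compatible with the natural inclusion $\mathbb J \hookrightarrow \mathbb J \otimes \mathbb C$.

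Next, consider the evaluation map $\operatorname{ev}_{u_0} : \mathbb J \to (\gamma^{-1}T\mathbb M/\dot\gamma)|_{u_0}$ sending a Jacobi field $J$ to its value $J(u_0)$ at the point $\gamma(u_0)$ of the central null geodesic. This is a real-linear map whose kernel is, by definition, $\mathbb H(u_0)$. Because $\operatorname{ev}_{u_0}$ is defined over $\mathbb R$, it complexifies to a $\mathbb C$-linear map $\operatorname{ev}_{u_0}\otimes\mathbb C$ on $\mathbb J \otimes\mathbb C$, and a standard fact of linear algebra states that the kernel of the complexification equals the complexification of the kernel, i.e., $\ker(\operatorname{ev}_{u_0}\otimes\mathbb C) = \mathbb H(u_0)\otimes\mathbb C$. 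Under the identification from the previous step, this kernel is precisely the space of complex Jacobi fields vanishing at $\gamma(u_0)$, which is $\mathbb H_{\mathbb C}(u_0)$.

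Putting these two observations together yields the desired equality. There is essentially no obstacle here; the only subtlety is making sure that the evaluation map and the identification of complex Jacobi fields with $\mathbb J \otimes \mathbb C$ are set up consistently, but both are immediate from the reality of the Jacobi equation. One could alternatively verify the equality concretely using the explicit basis $z^i\partial_v + (H^{ij}(u) - H^{ij}(u_0))L_j$ displayed just before the statement: taking the $L_i$ to be a real basis realizes $\mathbb H_{\mathbb C}(u_0)$ as the complex span of the real basis of $\mathbb H(u_0)$, which is the definition of $\mathbb H(u_0)\otimes\mathbb C$.
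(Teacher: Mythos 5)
Your proof is correct, and it takes a slightly different route from the paper's. The paper disposes of the lemma in one line: the inclusion $\mathbb H(u_0)\otimes\mathbb C \subseteq \mathbb H_{\mathbb C}(u_0)$ is canonical, and both sides have complex dimension $n$ (the latter read off from the explicit basis $z^i\partial_v + (H^{ij}(u)-H^{ij}(u_0))L_j$ displayed just before the statement), so they coincide. You instead prove the nontrivial inclusion $\mathbb H_{\mathbb C}(u_0)\subseteq \mathbb H(u_0)\otimes\mathbb C$ directly: since the Jacobi equation has real coefficients, a complex Jacobi field splits as $J_1+iJ_2$ with $J_1,J_2\in\mathbb J$, and since the evaluation map $\operatorname{ev}_{u_0}$ is real-linear, $J(u_0)=0$ forces $J_1(u_0)=J_2(u_0)=0$; packaged as ``complexification commutes with kernels,'' this gives the equality without any dimension count. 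What your argument buys is robustness: it does not rely on the explicit basis of $\mathbb H_{\mathbb C}(u_0)$ or on knowing its dimension in advance, and it would apply verbatim to any real-linear condition imposed on complex solutions of a real ODE. What the paper's argument buys is brevity, since the basis and hence the dimension of $\mathbb H_{\mathbb C}(u_0)$ have already been written down at that point in the text. Your closing remark, that one could alternatively verify the equality from that explicit basis with the $L_i$ real, is in effect the paper's own argument, so you have both routes in hand.
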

(Indeed, both sides of the equation have the same dimension over $\mathbb C$, whilst the right-hand side is canonically included in the left-hand side.)

\section{Sachs equation}
Because the central null geodesic of a plane wave embeds into the Lagrangian Grassmannian, we can examine it from the point of view of the Sachs equation.  So, as above, let $\mathbb H(t)$ be the Lagrangian subspace of $\mathbb J$ consisting of the Jacobi fields vanishing at the point $u=t$.  Then there is associated a solution of the Sachs equations $S_t=S_t(u)$:
\begin{equation}\label{Steqn}
  \dot S_t + S_t^2 + p(u) = 0.
\end{equation}

\begin{lemma}
$$S_t(u) = (u-t)^{-1}I - \frac13 (u-t)p(t) - \frac14 (u-t)^2p'(t) + O(u-t)^3.$$
\end{lemma}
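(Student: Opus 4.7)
\medskip

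\noindent\textbf{Proof plan.}  The plan is to show that $S_t$ has a simple pole at $u=t$ with residue the identity, and then to match coefficients of a Laurent series expansion against the Sachs equation \eqref{Steqn}.  For the pole structure, I would choose a matrix-valued solution $J(u)$ of the Jacobi equation $\ddot J + p(u)J = 0$ whose columns span $\mathbb H(t)$, that is, with $J(t)=0$ and $\dot J(t)=C$ invertible.  Because $J(t)=0$, the Jacobi equation forces $\ddot J(t)=0$, so $J(u) = C(u-t) + O((u-t)^3)$, and hence
\[
  S_t(u) \;=\; \dot J(u)\, J(u)^{-1} \;=\; (u-t)^{-1}\,I + O(u-t),
\]
independent of the choice of $C$.

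Having fixed the leading term, I would set $s = u-t$ and write the ansatz
\[
  S_t(u) \;=\; s^{-1}I + A_0 + A_1 s + A_2 s^2 + O(s^3),\qquad
  p(u) \;=\; p(t) + p'(t)\, s + \tfrac{1}{2}p''(t)\, s^2 + O(s^3).
\]
The only mildly delicate input is the expansion of $S_t^2$; here the fact that the singular part $s^{-1}I$ is a scalar makes all the cross-terms with $A_k$ simplify to $2 s^{k-1} A_k$, giving
\[
  S_t^2 \;=\; s^{-2}I + 2s^{-1}A_0 + (A_0^2 + 2A_1) + (2A_2 + A_0 A_1 + A_1 A_0)\,s + O(s^2).
\]
Adding $\dot S_t = -s^{-2}I + A_1 + 2A_2 s + O(s^2)$ and $p(u)$ and collecting orders, the $s^{-2}$ terms cancel automatically, the $s^{-1}$ coefficient forces $A_0=0$, the $s^0$ coefficient gives $3A_1 + p(t) = 0$ so $A_1 = -\tfrac{1}{3}p(t)$, and (using $A_0=0$) the $s^{1}$ coefficient yields $4A_2 + p'(t) = 0$ so $A_2 = -\tfrac{1}{4}p'(t)$.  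This recovers precisely the formula in the statement.

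There is really no genuine obstacle here; the only place to be careful is the bookkeeping in $S_t^2$, where the $s^{-1}I$ residue contributes to two orders higher than one might na\"ively expect.  The recursion continues to all orders and gives a Bernoulli-like recursion for the higher coefficients (of the sort foreshadowed in the abstract), but for the lemma as stated only the first three coefficients are needed.
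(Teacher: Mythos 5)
Your proposal is correct and takes essentially the same route as the paper: write $S_t$ as $(u-t)^{-1}I$ plus a power series and match coefficients in the Sachs equation, which kills the constant term and yields $-\tfrac13 p(t)$ and $-\tfrac14 p'(t)$ exactly as in the lemma. The only (cosmetic) difference is in justifying the residue: the paper cites the result of \cite{SEPI} that the pole coefficient is the orthogonal projection onto $\ker L(t)^T$, which equals $I$ since $L(t)=0$, whereas you derive $S_t=\dot J J^{-1}=(u-t)^{-1}I+O(u-t)$ directly from a matrix Jacobi solution vanishing at $t$ with invertible derivative; both are sound.
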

\begin{proof}
  We showed in \cite{SEPI} that $S_t(u) = (u-t)^{-1}A + O(1)$ where $A$ is the orthogonal projection onto the kernel of $L(t)^T$.  But $L(t)=0$, so $A=I$.  So write
  $$S_t = (u-t)^{-1}I + B + (u-t)C + (u-t)^2D + O(u-t)^3.$$
  We have $\dot S_t = -(u-t)^{-2} + C + 2(u-t)D + O(u-t)^2$, and
  $$S_t^2=(u-t)^{-2} + 2(u-t)^{-1}B + O(1).$$
  Therefore, the Sachs equation gives that $B=0$ because $p$ is regular at $u=t$.  Finally,
  $$\dot S_t + S_t^2 = 3C + 4(u-t)D + O(u-t)^2.$$
  Evaluating the Sachs equation at $u=t$ gives $3C=-p(t)$ and $4D=-p'(t)$.
    
\end{proof}

\subsection{$\omega=0$ microcosm}
Consider now the Sachs equation in the special case that $p$ is a given real constant symmetric matrix:
\[ \dot{S}(u) + S^2(u) + p = 0.\]
We want a solution with $S(u)$ symmetric.  There are at least two clear-cut one parameter families of solutions.  To formulate them rigorously, introduce  the entire holomorphic functions of a complex variable $z$ by the series:
\[ c(z) = \sum_{n = 0}^\infty \frac{(-z)^n}{(2n)!},  \hspace{7pt}   s(z) = \sum_{n = 0}^\infty \frac{(-z)^n}{(2n + 1)!}.\] 
So we have $c(z) = \cos(x)$ and $zs(z) = x\sin(x)$, where $x^2 = z$.  Note that the functions $c(z)$ and $s(z)$ are related by the differential equations:
\[ 2 c'(z) + s(z) = 0, \hspace{7pt} 2zs'(z) + s(z) - c(z) = 0.\]
Here a prime denotes the derivative with respect to $z$.  In terms of the functions $s(z)$ and $c(z)$ define:
\[ T(z) = \frac{s(z)}{c(z)}, \hspace{7pt} U(z) = \frac{c(z)}{s(z)}.\]
Here we take the (complex) domains of $T(z)$ and $U(z)$  to be given by $  |z| < \frac{\pi^2}{4}$.  We have:
\[ z T(z) = x\tan(x), \hspace{7pt} U(z) = x\cot(x), \hspace{7pt} z = x^2, \hspace{7pt} |x| < \frac{\pi}{2}.\]
We have that $T(0) = U(0) = 1$ and the functions $T(z)$ and $U(z)$ obey the differential equations:
\[ 2zT'(z) + T(z) - zT^2(z) - 1 = 0, \hspace{7pt} 2zU'(z) + U^2(z) - U(z) + z = 0.\]
The series for the function $T(z)$ based at the origin begins:
\[ T(z) = 1 + \frac{z}{3} +  \frac{2z^2}{15} + \frac{17z^3}{315} + \frac{62z^4}{2835} + \frac{1382 z^5}{155925} + \frac{21844z^6}{6081075} + \dots.\]
Here we need $|z| < \frac{\pi^2}{4}$ for convergence.
The series for $U(z)$, the reciprocal of $T(z)$, actually convergent for  $|z| < \pi^2$,  begins:
\[ U(z) = 1 - \frac{z}{3} - \frac{z^2}{45} - \frac{2z^3}{945} - \frac{z^4}{4725} - \frac{2z^5}{93555}  -  \frac{1382z^6}{638512875} + \dots. \]
Note that the function $U(z)$ vanishes when $z = \frac{\pi^2}{4}$.  These series have nice expressions in terms of the even Riemann zeta function values, or equivalently in terms of the Bernoulli numbers:
\[ T(z) =  \frac{2}{\pi^2}\sum_{n = 1}^{\infty} (4^n -1 ) \zeta(2n)\left(\frac{z}{\pi^2}\right)^{n-1}  = \sum_{n = 1}^\infty\frac{ 4^n B_{2n}(-z)^{n - 1}(4^n - 1)}{(2n)!}, \hspace{7pt} |z| < \frac{\pi^2}{4}\]
\[ U(z) = 1-2\sum_{n = 1}^{\infty}  \zeta(2n)\left(\frac{z}{\pi^2}\right)^n  = 1 - \sum_{n = 1}^{\infty}  \frac{4^n B_{2n}(-z)^n}{(2n)!},  \hspace{7pt}|z| < \pi^2.\]
Note that $zT(z) = U(z) - U(4z)$, or $U(4z) = U(z) - \frac{z}{U(z)}$.

\begin{itemize} \item  For the  initial condition $S(t) = 0$ we have the solution, denoted $S_t^0(u)$:
\[ S_t^0(u) = p(t - u) T(p(t - u)^2).\]
Less formally, we have $S_t^0(u) = \sqrt{p} \tan\left(\sqrt{p}(t - u)\right)$.\\
For $L_t^0$, such that $\dot{L}_t^0 = L_t^0 S_t^0$, we may take:
\[ L^0_t(u) = \cos\left(\sqrt{p}(t - u)\right) = c\left(p(t - u)^2\right).\]
\end{itemize}
\begin{itemize} 
\item For the solution blowing up as $(u - t)^{-1} I$ as $u \rightarrow t$, denoted $S_t^{\infty}(u)$, we have:
\[  (u - t)S_t^{\infty}(u) = U(p(t - u)^2).\]
Less formally, we have $S_t^{\infty}(u) = \sqrt{p}\cot\left(\sqrt{p}(u - t)\right)$.
For $L_t^{\infty}$, such that $\dot{L}_t^{\infty} = L_t^{\infty} S_t^{\infty}$, we may take, informally:
\[ L^{\infty}_t(u) = (\sqrt{p})^{-1}\sin\left(\sqrt{p}(t - u)\right)\]
Formally this is:
\[ L^{\infty}_t(u)  = (t - u)s(p(t - u)^2).\]
\end{itemize}
Note that $S_t^\infty(u) S_t^0(u) = -p$.  Also note that when $p \ne 0$, such that the trace of $p$ is non-negative, then these solutions will develop singularities as $|u - t|$ increases, controlled by the maximum eigen-value of the matrix $p$. 
\subsection{General Sachs equation}
Now we return to the power description of the completely general Sachs equation $\dot{S}(u) + S^2(u) + p(u) = 0$.  We wish to analyze the series for the solution blowing up as $(u - t)^{-1}$ times the identity as $u \rightarrow t$ for $t$ a basepoint in the domain of the symmetric matrix $p(u)$.  We write out the Taylor expansion for the given symmetric matrix $p(u)$ :
\[ \mathcal{T}(p) =  \sum_{n = 0}^\infty \frac{(u - t)^n}{n!} p_n.\]
So here we have $p_n = \frac{d^n}{du^n} p(u)|_{u = t}$.
Similarly we have the series expansion for $S(u)$:
\[ S(u) = (u - t)^{-1} I - \sum_{n =0}^\infty \frac{(u- t)^n S_n}{n!}.\]
Note that Lemma 5 above gives that $S_0 =  0$, $S_1 = \frac{p_0}{3}$ and $S_2 = \frac{p_1}{4}$.   Then for $S^2$ and $\dot S$ we have:
\[ S^2(u) = (u-t)^{-2} I +  \sum_{n =0}^\infty \frac{(u-t)^n}{n!} \left(- \frac{2S_{n+1}}{n +1} + \sum_{m =0}^n \binom{n}{m}S_m S_{n-m}\right)\]
\[ \dot{S}(u) = - (u-t)^{-2} I - \sum_{n =0}^\infty \frac{(u- t)^{n}}{n!} S_{n + 1}\]
Then the equation $\dot{S}(u) + S^2(u) + p(u) = 0$ gives the  recursion relation, valid for each non-negative integer $n$:
\[ S_{n + 1} = \frac{(n +1)}{(n + 3)}\left(p_n +  \sum_{m =0}^n \binom{n}{m}S_m S_{n - m}\right)\]
Note that by induction each $S_n$ is trivially seen to be symmetric.  When $p$ is constant, so all $p_i$, for $i > 0$ vanish, this recursion relation gives the Bernoulli numbers, as first observed by Euler \cite{Euler}.

\subsection{General solution}
Up until now, we have studied solutions to the Sachs equations corresponding to $L(0)=0$, and those corresponding to $L'(0)=0$, which respectively are the cases in which $S(u)=u^{-1}I+O(u)$ and $S(u) = O(u)$ as $u\to 0$.  We now turn to the problem of determining the general solutions to the Sachs equation on a (Brinkmann) microcosm:
\begin{equation}\label{Sachs1}
\dot S + S^2 + e^{-\omega u}pe^{\omega u} = 0
\end{equation}
where $S$ is an unknown symmetric, complex matrix, $\omega,p$ are constant and real, and $\omega^T=-\omega, p^T=p$.  We first treat the simpler case of $\omega=0$.  In that case, the equation is
\begin{equation}\label{Sachs1a}
\dot S + S^2 + p = 0.
\end{equation}
\begin{theorem}\label{TheoremSachsOmega0}
  Let $\Sigma$ be a constant solution to \eqref{Sachs1a}, so that $\Sigma^2+p=0$, where $p$ is constant.  Then the solution $S$ to \eqref{Sachs1a} in a neighborhood of $u=0$, having initial condition $S_0=S(0)$ a given symmetric (real or complex) matrix, is
\begin{equation}\label{SFromConstant1}
  S = \Sigma + e^{-u\Sigma}(S_0-\Sigma)e^{-u\Sigma}[I + uE(2u\Sigma)e^{-u\Sigma}(S_0-\Sigma)e^{-u\Sigma}]^{-1}.
\end{equation}
where $E$ is the analytic entire function such that $zE(z) = e^z-1$.
\end{theorem}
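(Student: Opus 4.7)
The plan is to reduce the matrix Riccati \eqref{Sachs1a} to a linear equation by the standard Riccati trick, solve the linear equation using an integrating factor built from $e^{-u\Sigma}$, and then perform an algebraic rearrangement to match the form of \eqref{SFromConstant1}.

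First, I would set $T := S-\Sigma$. Because $\Sigma^{2} = -p$, equation \eqref{Sachs1a} transforms into
$$\dot T + T^{2} + T\Sigma + \Sigma T = 0, \qquad T(0) = S_{0}-\Sigma.$$
Provisionally assuming $T$ is invertible near $u=0$ (which holds whenever $S_{0}\ne\Sigma$; the degenerate case is recovered at the end), I set $U:=T^{-1}$. Differentiating $TU = I$ and substituting, I would obtain the \emph{linear} inhomogeneous equation
$$\dot U - \Sigma U - U\Sigma = I, \qquad U(0) = (S_{0}-\Sigma)^{-1}.$$

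Next, I would strip off the homogeneous part by setting $V := e^{-u\Sigma} U\, e^{-u\Sigma}$. Since $\Sigma$ commutes with its own exponential, a direct computation collapses the left-hand side to $\dot V$, giving $\dot V = e^{-2u\Sigma}$. Integrating and identifying $\int_{0}^{u} e^{-2s\Sigma}\,ds = u\,E(-2u\Sigma)$ via the defining relation $zE(z)=e^{z}-1$ gives
$$V(u) = (S_{0}-\Sigma)^{-1} + u\,E(-2u\Sigma),$$
and hence
$$S(u) = \Sigma + e^{-u\Sigma}\bigl[(S_{0}-\Sigma)^{-1} + u\,E(-2u\Sigma)\bigr]^{-1} e^{-u\Sigma}.$$
The final rearrangement uses two algebraic identities. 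The scalar identity $E(-z)\,e^{z} = E(z)$ (immediate from $zE(z)=e^{z}-1$) implies that conjugating $u E(-2u\Sigma)$ by $e^{u\Sigma}$ on both sides yields $uE(2u\Sigma)$. The standard shuffle $A(I + CA)^{-1} = (A^{-1}+C)^{-1}$, applied with $A = e^{-u\Sigma}(S_{0}-\Sigma)e^{-u\Sigma}$ and $C = u E(2u\Sigma)$, then produces exactly the right-hand side of \eqref{SFromConstant1}.

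The main obstacle is bookkeeping in this last rearrangement: $\Sigma$ does not commute with $S_{0}$, whereas all the scalar functions of $\Sigma$ (the exponentials and the entire series $E$) commute with $\Sigma$ and with each other, so the noncommutativity patterns must be tracked carefully when moving factors through inverses. Two loose ends close the argument: the symmetry of $S(u)$ is automatic since \eqref{Sachs1a} preserves symmetry and the datum is symmetric; and the $S_{0}=\Sigma$ case, in which the heuristic inversion is invalid, is handled by observing that the right-hand side of \eqref{SFromConstant1} then reduces to $\Sigma$, together with a direct substitution check that \eqref{SFromConstant1} satisfies the ODE and initial condition for any symmetric $S_{0}$, which both covers this case and makes the derivation rigorous end-to-end.
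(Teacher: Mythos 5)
Your proposal is correct and, at bottom, runs the same mechanism as the paper's proof, only in the forward direction: the paper's lemma posits $S=\Sigma+(LHL^{T})^{-1}$ with $L=e^{u\Sigma}$ and $L\dot HL^{T}=I$ and then verifies the Riccati equation by differentiating, whereas you derive that same form by setting $U=(S-\Sigma)^{-1}$, obtaining the linear equation $\dot U-\Sigma U-U\Sigma=I$, and integrating with the factor $e^{-u\Sigma}(\cdot)e^{-u\Sigma}$. What your write-up adds is the explicit algebraic bridge from the invertible-case formula $S=\Sigma+\bigl[e^{u\Sigma}(S_{0}-\Sigma)^{-1}e^{u\Sigma}+uE(2u\Sigma)\bigr]^{-1}$ (the paper's \eqref{SFromConstant}) to the stated form \eqref{SFromConstant1}, via the identities $E(-z)e^{z}=E(z)$ and $A(I+CA)^{-1}=(A^{-1}+C)^{-1}$, which the paper leaves implicit; and where the paper passes from invertible $S_{0}-\Sigma$ to general $S_{0}$ ``by continuity,'' you propose the direct substitution check that the paper itself mentions as the tedious alternative. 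One small slip: invertibility of $T(0)=S_{0}-\Sigma$ is not equivalent to $S_{0}\ne\Sigma$ once $n\ge 2$, since $S_{0}-\Sigma$ can be singular without vanishing, so your provisional hypothesis should read ``$S_{0}-\Sigma$ invertible''; this does not damage the argument, because your closing step (verification of \eqref{SFromConstant1} for arbitrary symmetric $S_{0}$, or equivalently a density/continuity argument in $S_{0}$) covers the singular-but-nonzero case as well.
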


The significance of this theorem is that the general (real) solution to the Sachs equation can be expressed in terms of a single complex symmetric square root of $-p$.  If $p$ has any positive eigenvalue, then a non-real solution is required.  In particular, in a microcosm of positive energy, so that the trace of $p$ is positive, \eqref{Sachs1} admits a constant non-real complex solution, and every real solution of \eqref{Sachs1} is expressed in terms of complex variables.

\begin{proof}
  We use Lemma \ref{SFromConstant} below, which is for the generic case in which $S_0-\Sigma$ is invertible.    The theorem now follows by continuity (it also follows a direct but tedious calculation).
\end{proof}

\begin{lemma}
  Let $\Sigma$ be a constant solution to \eqref{Sachs1}.  Then every solution $S$ to \eqref{Sachs1} (with constant $p$) in a neighborhood of $u=0$, such that $S(0)-\Sigma$ is invertible, has the form
  \begin{equation}\label{SFromConstant}
    S=\Sigma + [e^{u\Sigma}H_0e^{u\Sigma} + uE(2u\Sigma)]^{-1},\qquad zE(z) = e^z-1
  \end{equation}
  where $H_0=(S(0)-\Sigma)^{-1}$ is determined by the initial condition $S(0)$.
\end{lemma}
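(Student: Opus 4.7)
The plan is to reduce the matrix Riccati equation \eqref{Sachs1a} to a linear inhomogeneous ODE via the standard change of variable. Set $S = \Sigma + R$. Using $\Sigma^2 + p = 0$ and the fact that $\Sigma$ is constant, the Sachs equation collapses to
$$\dot R + \Sigma R + R\Sigma + R^2 = 0.$$
Since $R(0) = S(0) - \Sigma$ is invertible by hypothesis, $R$ remains invertible on a neighborhood of $0$. Then with $H = R^{-1}$ and $\dot R = -R\dot H R$, multiplication on left and right by $H$ converts the displayed equation into the linear inhomogeneous matrix ODE
$$\dot H - \Sigma H - H\Sigma = I, \qquad H(0) = H_0 = (S(0) - \Sigma)^{-1}.$$

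Next I would solve this linear equation explicitly. Because left and right multiplication by $e^{u\Sigma}$ commute and $\Sigma$ commutes with any function of itself, the general homogeneous solution is $e^{u\Sigma} C e^{u\Sigma}$. For a particular solution I would use variation of parameters, writing $H_p(u) = e^{u\Sigma} K(u) e^{u\Sigma}$; the equation then reduces to $\dot K = e^{-2u\Sigma}$, and termwise integration of the series for $e^{-2u\Sigma}$ together with the relation $zE(z) = e^z - 1$ gives $K(u) = u\,E(-2u\Sigma)$ with $K(0) = 0$. Combining this with the homogeneous contribution meeting the initial data yields
$$H(u) = e^{u\Sigma} H_0 e^{u\Sigma} + e^{u\Sigma}\,u\,E(-2u\Sigma)\,e^{u\Sigma}.$$

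The final step is to simplify the particular term. Using $e^{u\Sigma}(I - e^{-2u\Sigma})e^{u\Sigma} = e^{2u\Sigma} - I$ (which is immediate since $\Sigma$ commutes with all the factors) one obtains directly that $e^{u\Sigma}\,u\,E(-2u\Sigma)\,e^{u\Sigma} = u\,E(2u\Sigma)$, and therefore
$$H(u) = e^{u\Sigma} H_0 e^{u\Sigma} + u\,E(2u\Sigma).$$
Inverting gives the desired formula $S = \Sigma + H^{-1}$.

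I do not anticipate a serious obstacle: every step is a routine manipulation once the Riccati-to-linear linearization is in place. The one point that needs care is that $\Sigma$ may not be invertible (indeed the cases of physical interest can have eigenvalues at the origin), which is precisely why the answer is packaged in terms of the entire function $E$ rather than $(e^z-1)/z$; all series manipulations in the proof must be performed termwise to avoid introducing spurious poles.
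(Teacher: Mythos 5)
Your proof is correct. It differs from the paper's argument mainly in direction rather than in substance: the paper takes the formula \eqref{SFromConstant} as an ansatz, writes it as $S=\Sigma+(LHL^T)^{-1}$ with $\dot L=\Sigma L$ and $L\dot H L^T=I$, and verifies by direct differentiation that $\dot S=-S^2-p$ and that the initial condition matches, the claim that \emph{every} such solution has this form then resting implicitly on uniqueness for the Riccati initial value problem. You instead derive the formula: the substitution $H=(S-\Sigma)^{-1}$ linearizes the Riccati equation to the Lyapunov-type equation $\dot H=\Sigma H+H\Sigma+I$, which you solve by variation of parameters, obtaining $H(u)=e^{u\Sigma}H_0e^{u\Sigma}+uE(2u\Sigma)$ and hence \eqref{SFromConstant}; here the ``every solution'' statement follows from uniqueness for a \emph{linear} equation, which is a slightly cleaner way to justify the phrasing of the lemma. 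Both arguments pivot on the same substitution $H=(S-\Sigma)^{-1}$ and the same identity $\int_0^u e^{-2s\Sigma}\,ds=uE(-2u\Sigma)$ (equivalently $e^{z}E(-z)=E(z)$ applied to $2u\Sigma$), and your explicit remark that all manipulations are termwise on entire series correctly handles the case of singular $\Sigma$, so the constructions buy the same result; yours is marginally more self-contained, the paper's is shorter and reuses the computation needed again in Theorem \ref{TheoremSachsOmega}.
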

\begin{proof}
  Evaluating at $u=0$, we have $S(0)=\Sigma + H_0^{-1}$ and the initial condition is correct, and so it is sufficient to prove that $S$ satisfies the differential equation \eqref{Sachs1}.  To prove this, first rewrite \eqref{SFromConstant} as
  $$S = \Sigma + (LHL^T)^{-1},\qquad L\dot HL^T = I.$$
  Then
  \begin{align*}
    \dot S &= - (LHL^T)^{-1}(\dot L H L^T + L\dot HL^T + LH\dot L^T)(LHL^T)^{-1}\\
           &= - (LHL^T)^{-1}(\Sigma L H L^T + I + LHL^T\Sigma)(LHL^T)^{-1}\\
           &= -(LHL^T)^{-1}\Sigma - \Sigma(LHL^T)^{-1} - (LHL^T)^{-2}\\
           &= - S^2 + \Sigma^2 = -S^2-p
  \end{align*}
  as required.
\end{proof}

\subsubsection{General microcosm}\label{SachsSolGeneral}
We next turn to the general case of an Alekseevsky microcosm $\mathcal G_A(\omega, p)$ with constant $\omega$ (which is skew and not necessarily zero) and $p$ (which is symmetric).  The relevant Sachs equation is
\begin{equation}\label{Sachs2}
\dot S + S^2 + e^{-\omega u}pe^{\omega u} = 0
\end{equation}
where $S$ is symmetric.  If we put $S=e^{-\omega u}Te^{\omega u}$ where $T$ is symmetric, then
$$\dot S = e^{-\omega u}\dot Te^{\omega u} + e^{-\omega u}[T,\omega]e^{\omega u}$$
so that \eqref{Sachs2} is now in the more convenient form
\begin{equation}\label{Sachs2b}
  \dot T + [T,\omega] + T^2 + p = 0.
\end{equation}
We show how to construct a solution to \eqref{Sachs2}.  We later prove that \eqref{Sachs2b} admits constant (complex, symmetric) solutions $T$, so that $[T,\omega]+T^2+p=0$.  Let $T=\Sigma$ be such a constant solution to \eqref{Sachs2b}.  Then, as we have observed, one solution to \eqref{Sachs2} is $S=\sigma=e^{-\omega u}\Sigma e^{\omega u}$.

We shall prove the following:
  \begin{theorem}\label{TheoremSachsOmega}
    Let $\Sigma$ be a constant complex symmetric solution to \eqref{Sachs2b}: $\Sigma^2 + [\Sigma,\omega] + p=0$ where $p,\omega$ are constant $n\times n$ real matrices with $p$ symmetric and $\omega$ skew.  Let $S_0$ a constant symmetric complex matrix such that $S_0-\Sigma$ is invertible.  Then the initial value problem solution $S$ to \eqref{Sachs2}, such that $S(0)=S_0$ is, in a neighborhood of $u=0$:
    \begin{equation}\label{SachsSolutionMicrocosm}
      S(u) = e^{-\omega u}\Sigma e^{\omega u} + e^{-\omega u}e^{(-\Sigma+\omega)u}H(u)^{-1}e^{(-\Sigma-\omega)u}e^{\omega u}
    \end{equation}
    where $H$ is a symmetric complex matrix satisfying $\dot H  = e^{(-\Sigma -\omega)u}e^{(-\Sigma+\omega)u}$ and $H(0) = (S_0-\Sigma)^{-1}$.
  Alternatively, in the case when $S_0-\Sigma$  is not necessarily invertible, let $F=(S_0-\Sigma)H$, so that
  $$F(0) = I, \ \text{and}\ \dot F=(S_0-\Sigma)e^{(-\Sigma -\omega)u}e^{(-\Sigma+\omega)u}$$
  and the solution to \eqref{Sachs2}, such that $S(0)=S_0$, becomes
  \begin{equation}\label{SachsSolutionMicrocosm2}
      S(u) = e^{-\omega u}\Sigma e^{\omega u} + e^{-\omega u}e^{(-\Sigma+\omega)u}F(u)^{-1}(S_0-\Sigma)e^{(-\Sigma-\omega)u}e^{\omega u}.
    \end{equation}
  \end{theorem}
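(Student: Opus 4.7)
The proof should follow the pattern of the preceding Lemma on \eqref{SFromConstant}, adapted to accommodate the skew part $\omega$. My plan has two reductions followed by a direct verification.

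First, I would make the substitution $T = e^{\omega u}Se^{-\omega u}$, which is already established in the text to transform \eqref{Sachs2} into the conjugated equation \eqref{Sachs2b}, namely $\dot T + [T,\omega] + T^2 + p = 0$. Because $\omega$ is skew and $\Sigma$ is symmetric, the formula \eqref{SachsSolutionMicrocosm} is exactly $S = e^{-\omega u}Te^{\omega u}$ with
\[
T(u) = \Sigma + e^{(-\Sigma+\omega)u}H(u)^{-1}e^{(-\Sigma-\omega)u},
\]
so it suffices to verify that this $T$ solves \eqref{Sachs2b}. The key algebraic observation is that $e^{(-\Sigma-\omega)u}$ is the transpose of $e^{(-\Sigma+\omega)u}$, so setting $L := e^{(\Sigma+\omega)u}$ gives $T = \Sigma + (LHL^T)^{-1}$, with $\dot L = (\Sigma+\omega)L$ and the defining equation for $H$ taking the clean form $L\dot H L^T = I$. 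This is the exact structural analogue of the $\omega=0$ case.

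Next, I would differentiate and expand, essentially copying the Lemma's calculation but now tracking the extra $\omega$ terms. Using $\dot L = (\Sigma+\omega)L$ and $\dot L^T = L^T(\Sigma-\omega)$, one obtains
\[
\dot T = -(LHL^T)^{-1}(\Sigma+\omega) - (\Sigma-\omega)(LHL^T)^{-1} - (LHL^T)^{-2}.
\]
Expanding $T^2 = \Sigma^2 + \Sigma(LHL^T)^{-1} + (LHL^T)^{-1}\Sigma + (LHL^T)^{-2}$ and $[T,\omega] = [\Sigma,\omega] + (LHL^T)^{-1}\omega - \omega(LHL^T)^{-1}$, the quadratic term in $(LHL^T)^{-1}$ cancels, the $\omega$ pieces neatly kill the cross-commutators on each side of $(LHL^T)^{-1}$, and what remains is $\Sigma^2 + [\Sigma,\omega] + p$, which vanishes by the standing hypothesis on $\Sigma$. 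The initial condition $S(0)=S_0$ follows from $L(0) = I$ and $H(0)=(S_0-\Sigma)^{-1}$.

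For the second formulation \eqref{SachsSolutionMicrocosm2}, I would simply substitute $H = (S_0-\Sigma)^{-1}F$ into \eqref{SachsSolutionMicrocosm}, noting $H^{-1} = F^{-1}(S_0-\Sigma)$ and $\dot F = (S_0-\Sigma)\dot H = (S_0-\Sigma)e^{(-\Sigma-\omega)u}e^{(-\Sigma+\omega)u}$, with $F(0)=I$. This shows the two formulas agree wherever $S_0-\Sigma$ is invertible. To extend to the case when $S_0-\Sigma$ is singular, I would argue by continuity: for fixed $u$ in a sufficiently small neighborhood of $0$, both $F(u)$ and $S(u)$ as given by \eqref{SachsSolutionMicrocosm2} depend continuously on $S_0$, and $F(u)$ remains invertible near $u=0$ because $F(0)=I$. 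Approximating a singular $S_0-\Sigma$ by invertible perturbations and invoking uniqueness of the initial value problem for the smooth ODE \eqref{Sachs2} then transfers the identity to the limit.

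The main obstacle is the bookkeeping in the direct verification of \eqref{Sachs2b}: because $\Sigma$ and $\omega$ do not commute and appear on both sides of the non-commutative factor $(LHL^T)^{-1}$, one must be careful to split the linear-in-$(LHL^T)^{-1}$ contributions by side before cancelling. Beyond that, the argument is essentially a decorated rerun of the preceding lemma, and no further ingredients are needed.
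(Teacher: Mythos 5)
Your proposal is correct and follows essentially the same route as the paper: verify the Riccati solution by differentiating $\Sigma + (LHL^T)^{-1}$ with $L\dot H L^T = I$, using the algebraic Riccati identity for $\Sigma$, and handle a non-invertible $S_0-\Sigma$ by the substitution $F=(S_0-\Sigma)H$ plus continuity. The only (cosmetic) difference is that you check the conjugated equation \eqref{Sachs2b} with $L=e^{(\Sigma+\omega)u}$, whereas the paper verifies \eqref{Sachs2} directly using the particular solution $\sigma=e^{-\omega u}\Sigma e^{\omega u}$ and $L=e^{-\omega u}e^{(\Sigma+\omega)u}$, which makes the two linear terms symmetric in $\sigma$; both computations close in the same way.
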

  \begin{proof}
    As in the proof of Theorem \ref{TheoremSachsOmega0}, continuity implies that the second statement of the theorem follows from the first.  So assume $S_0-\Sigma$ is invertible.
    
Let $L$ be a solution to $\dot L = \sigma L = e^{-\omega u}\Sigma e^{\omega u}L$.  Write $L=e^{-\omega u}M$.  Then
$$\dot L = e^{-\omega u}(-\omega M + \dot M) = e^{-\omega u}\Sigma M$$
so that
$$\dot M = (\Sigma + \omega)M.$$
Thus we get a solution
$$L = e^{-\omega u}e^{uA} $$
where $A=\Sigma+\omega$.  Let $H$ be such that $L\dot HL^T=I$.\footnote{In a generic case, we can write
$$H = H_0 + e^{uA}H_1e^{uA^T},\qquad A=\Sigma+\omega$$
with $H_0, H_1$ symmetric. Then
$$\dot H = e^{uA}(AH_1+H_1A^T)e^{uA^T} $$
Assuming we can solve $AH_1+H_1A^T=I$, let $H_1$ be such a solution.  The initial condition $H_0$ is now a free symmetric matrix.  The non-generic case is completely solved in \S\ref{Nongeneric}, although for the argument here, we only need that a solution to $L\dot H L^T=I$ exists with prescribed initial conditions $H(0)$ on a sufficiently small interval, which is obviously the case on any neighborhood of $u=0$ on which $L$ is invertible.}

Next, put
$$\sigma = e^{-\omega u}\Sigma e^{\omega u}, \qquad \dot L = \sigma L, \qquad L\dot HL^T=I$$
$$S = \sigma + (LHL^T)^{-1}.$$
We claim that $\dot S + S^2 + e^{-\omega u}pe^{\omega u} = 0$.  We first observe that $\dot\sigma + \sigma^2 + e^{-\omega u}pe^{\omega u}=0$, which follows from the selection of $\Sigma$.  Next, we have, as above
  \begin{align*}
    \dot S &= \dot\sigma - (LHL^T)^{-1}(\dot L H L^T + L\dot HL^T + LH\dot L^T)(LHL^T)^{-1}\\
           &= -\sigma^2 - e^{-\omega u}pe^{\omega u} - (LHL^T)^{-1}(\sigma L H L^T + I + LHL^T\sigma)(LHL^T)^{-1}\\
           &= - e^{-\omega u}pe^{\omega u} -\sigma^2 -(LHL^T)^{-1}\sigma - \sigma(LHL^T)^{-1} - (LHL^T)^{-2}\\
           &= - e^{-\omega u}pe^{\omega u} - S^2,
  \end{align*}
  as claimed.
\end{proof}

\subsection{Conjugate points}
Let $\gamma$ be a null geodesic in a space-time $M$, affinely parameterized by $u\in\mathbb U$, in the open interval $\mathbb U$ from which we choose a base point $u_0\in\mathbb U$.
\begin{definition}
  Let $\mathbb J(\gamma)$ denote the space of Jacobi fields abreast of $\gamma$.  A point $\gamma(u_1)$ is {\em conjugate} to $\gamma(u_0)$ if there is a non-zero Jacobi field $J\in\mathbb J(\gamma)$ such that $J(u_0)=0$ and $J(u_1)=0$.
\end{definition}
One of the principal applications of the Sachs equations is to give sufficient conditions for the existence of conjugate points, principally via the Raychaudhuri equation, which we show how to formulate in this setting.

Firstly, let $S$ be a solution to the Sachs equation
\begin{equation}\label{SachsConjugatePoints}
  \dot S + S^2 +p = 0
\end{equation}
where as usual $p$ is the tidal curvature along $\gamma$, and the dot is differentiation in the affine parameter $u$, and everything is symmetric.

Define the energy-momentum as
\[E=E(u) = \op{tr}(p(u)).\]
Also, decompose $S$ as $s+\tilde S$, where $s$ is a scalar and $\tilde S$ is tracefree.  Then taking traces on \eqref{SachsConjugatePoints} gives the {\em Raychaudhuri (optical) equation}:
\begin{equation}\label{Raychaudhuri}
  \dot s(u) + s(u)^2 + \op{tr}(\tilde S(u)^2) + E(u) = 0
\end{equation}
which is a first-order scalar differential equation.  Moreover, $\tilde S^2$ is a nonnegative operator, and therefore its trace is nonnegative.

Thus the following:
\begin{lemma}\label{ConjugatePointsLemma1}
  Suppose that $\epsilon$ is such that $\mathbb U_0(\epsilon)=[0, \epsilon^{-1/2}\pi]$ is contained in $\mathbb U$.  If $E(u)>\epsilon>0$ on $\mathbb U_0(\epsilon)$, then $\mathbb U_0(\epsilon)$ contains a point conjugate to $u=0$.
\end{lemma}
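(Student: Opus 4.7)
The plan is to reduce to a scalar Riccati comparison using the Raychaudhuri equation~\eqref{Raychaudhuri}. I would specialise to the Sachs solution with pole at $u=0$, namely $S(u) = u^{-1} I + O(u)$ near $u = 0^+$, and let $L$ be the associated Lagrangian frame satisfying $\dot L = LS$ with $L(0) = 0$. A point $u_1 > 0$ is then conjugate to $0$ exactly when $\det L(u_1) = 0$; equivalently, since $(\log \det L)' = \op{tr}(S)$ is a positive multiple of the scalar part $s(u)$, when $s(u) \to -\infty$ at $u = u_1$.

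Substituting the pointwise estimates $\op{tr}(\tilde S(u)^2) \ge 0$ and $E(u) > \epsilon$ into~\eqref{Raychaudhuri} yields a scalar Riccati differential inequality $\dot s + s^2 \le -\epsilon$ on $\mathbb U_0(\epsilon)$ (modulo a normalisation factor from the decomposition of $S$ into scalar and tracefree parts), together with the boundary asymptotic $s(u) \sim u^{-1}$ as $u \to 0^+$ from Lemma~5. I would compare $s$ against the explicit cotangent solution $\sigma(u) = \sqrt\epsilon\,\cot(\sqrt\epsilon\,u)$ of the equality $\dot\sigma + \sigma^2 = -\epsilon$, which starts at $+\infty$ as $u \to 0^+$ and decreases monotonically to $-\infty$ at $u = \pi/\sqrt\epsilon$. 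The difference $y = s - \sigma$ obeys $\dot y \le -(s + \sigma)\,y$, so $y\exp\int(s+\sigma)$ is non-increasing; hence if $s(u_0) \le \sigma(u_0)$ at some small $u_0 > 0$, then $s \le \sigma$ wherever both remain finite. Since $\sigma$ blows down to $-\infty$ at $u = \pi/\sqrt\epsilon$, $s$ is forced to do the same at some $u_1 \in (0, \pi/\sqrt\epsilon]\subset\mathbb U_0(\epsilon)$, which is the sought conjugate point.

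The main obstacle I anticipate is the shared singularity of $s$ and $\sigma$ at $u = 0$: both diverge like $u^{-1}$, so the comparison cannot be initialised at the endpoint itself. One instead picks a small positive $u_0$ and verifies $s(u_0) \le \sigma(u_0)$ by matching the next-order Laurent coefficients from Lemma~5. By that lemma, the subleading coefficient of $s$ at $0$ is governed by $E(0)$, while the corresponding coefficient of $\sigma$ is governed by $\epsilon$, and the strict hypothesis $E > \epsilon$ is precisely what makes the initial inequality go the right way. Once $s(u_0) \le \sigma(u_0)$ is secured, the monotonicity of $y\exp\int(s+\sigma)$ closes the argument on the whole remaining interval.
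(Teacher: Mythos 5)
Your proposal is correct and follows essentially the same route as the paper: both apply the Raychaudhuri equation \eqref{Raychaudhuri} to the Sachs solution with pole at $u=0$, drop the nonnegative term $\op{tr}(\tilde S^2)$, and compare $s$ with $\sqrt\epsilon\,\cot(\sqrt\epsilon\,u)$ to force blow-down of $\op{tr}(S)$, hence degeneration of $L$ and a conjugate point, by $u=\pi/\sqrt\epsilon$. The only difference is technical: the paper initializes the Riccati comparison through the condition $\lim_{u\to 0}\cot^{-1}s(u)=0$ (an arccotangent substitution at the shared pole), which sidesteps your matching of subleading Laurent coefficients at a small $u_0>0$, but both devices handle the singularity at $u=0$ adequately.
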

\begin{proof}
  We shall show that the maximal solution to \eqref{Raychaudhuri} with $s(u) = Cu^{-1}+O(u)$ as $u\to 0$ must blow up in time $u<\epsilon^{-1/2}\pi$.  By the Raychaudhuri equation,
  \[\dot s \le -s^2 - \epsilon.\]
  Integrating this, with initial condition $\lim_{u\to 0}\cot^{-1}s(u)=0$ gives
  \[s \le \sqrt \epsilon\cot(u\sqrt \epsilon),\]
  and the claimed result follows.

  Now, given a Lagrangian matrix $L$ such that $L(0)=0$ and $\dot L=SL$, we have shown that $\op{tr}(S)$ blows up in time $u<\epsilon^{-1/2}\pi$.  So the matrix $L(u)$ degenerates there, and we are done.
\end{proof}

\section{Structure of microcosms}\label{StructureMicrocosm}

We now analyze the Rosen form of a microcosm, with no restriction on the dimension $n$.  As we mentioned in the last section, the Alekseevsky form is a little bit less convenient because the symplectic form depends on a parameter $\omega$, so we shall use the Brinkmann form.  We proved in \cite{SEPI}, that every microcosm can be put into the form \eqref{MicrocosmBrink}:
\begin{equation}
\mathcal G_B(P_0,\omega) = 2\,du\,dv + x^Te^{-\omega u}P_0e^{\omega u}x\,du^2 - dx^Tdx
\end{equation}
for constant $P_0=P_0^T$ and $\omega=-\omega^T$.

\begin{theorem}\label{SpecialRosens}
  For the microcosm \eqref{MicrocosmBrink}, the curve $\mathbb H(u)$ in the Lagrangian Grassmannian of $\mathbb J$ is an orbit of a one-parameter group of the real symplectic group $\operatorname{Sp}(2n,\mathbb R)$.
\end{theorem}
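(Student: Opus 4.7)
The plan is to exhibit the one-parameter subgroup geometrically, as the lift to Jacobi fields of the extra Killing symmetry that distinguishes a microcosm from a generic plane wave. The computation is direct because Brinkmann form makes the symmetry explicit.

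First, I recall (as noted in the discussion after Lemma \ref{GAGBIso}) that the microcosm $\mathcal G_B(P_0,\omega)$ admits the one-parameter group of isometries
\[
\phi_c:(u,v,x)\mapsto (u+c,\,v,\,e^{-c\omega}x),
\]
in addition to the $2n+1$ generic symmetries of any plane wave. The central null geodesic is $\gamma(u)=(u,0,0)$, and manifestly $\phi_c(\gamma(u))=\gamma(u+c)$, so $\phi_c$ restricts to $\gamma$ as a translation by $c$ in the affine parameter.

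Next, I push Jacobi fields forward along this symmetry. Because $\phi_c$ is an isometry and $\phi_c\circ\gamma$ is $\gamma$ reparametrized by $u\mapsto u+c$, the formula
\[
\Phi_c(J)(u)\;=\;d\phi_c\bigl(J(u-c)\bigr)
\]
sends abreast Jacobi fields to abreast Jacobi fields, and defines a linear map $\Phi_c:\mathbb J\to\mathbb J$. Since $\phi_{c+c'}=\phi_c\circ\phi_{c'}$, the family $c\mapsto\Phi_c$ is a one-parameter group in $\mathrm{GL}(\mathbb J)$. Moreover, because $\phi_c$ preserves $g$, it preserves the expressions $g(\dot X,Y)-g(X,\dot Y)$ that define the symplectic form $\omega$ on $\mathbb J$; hence $\Phi_c\in\mathrm{Sp}(\mathbb J,\omega)\cong\mathrm{Sp}(2n,\mathbb R)$.

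Finally, I match this orbit with the curve $\mathbb H(u)$. By definition, $J\in\mathbb H(u_0)$ iff $J(u_0)=0$. Since $d\phi_c$ is a linear isomorphism of the relevant tangent spaces, $\Phi_c(J)(u_0+c)=d\phi_c(J(u_0))=0$ iff $J(u_0)=0$; therefore
\[
\Phi_c\bigl(\mathbb H(u_0)\bigr)\;=\;\mathbb H(u_0+c).
\]
Setting $u_0=0$ gives $\mathbb H(u)=\Phi_u\bigl(\mathbb H(0)\bigr)$, which is precisely the statement that the curve $\mathbb H$ in the Lagrangian Grassmannian is an orbit of the one-parameter symplectic subgroup $\{\Phi_c\}\subset\mathrm{Sp}(2n,\mathbb R)$.

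The main (mild) obstacle is the bookkeeping for the pushforward: one should check that $\Phi_c(J)$ is still an abreast Jacobi field, which amounts to observing that $\phi_c$ preserves $\dot\gamma$ up to the reparametrization and that the Jacobi equation is covariant under isometries. Everything is real, so no complexification is needed, in contrast with the solution of the Riccati equation \eqref{AlgRiccati1}. One could alternatively build $\Phi_c$ by hand from the explicit Sachs solution \eqref{SachsSolutionMicrocosm}, using the constant solution $\Sigma$ to \eqref{Sachs2b} to diagonalize the monodromy; but the Killing-symmetry argument above is intrinsic and avoids choosing $\Sigma$.
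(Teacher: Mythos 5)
Your argument is correct, but it proves the theorem by a genuinely different route than the paper. You lift the extra Killing symmetry $\phi_c:(u,v,x)\mapsto(u+c,v,e^{-c\omega}x)$ of \eqref{MicrocosmBrink} to the pushforward action $\Phi_c$ on abreast Jacobi fields; since isometries commute with covariant differentiation and preserve $g$, $\Phi_c$ preserves the symplectic form $\omega(X,Y)=g(\dot X,Y)-g(X,\dot Y)$ and the quotient by $\dot\gamma$ (because $d\phi_c\dot\gamma=\dot\gamma$), and it visibly sends $\mathbb H(u_0)$ to $\mathbb H(u_0+c)$, so $\mathbb H(u)=\Phi_u\mathbb H(0)$. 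Two small points are worth making explicit: (i) geodesic completeness (the microcosm hypothesis) is what makes $\phi_c$ globally defined for all $c$ and makes $\mathbb J$ a single space of globally defined Jacobi fields on which $\Phi_c$ acts, and (ii) since the paper defines a one-parameter group as the exponential of a fixed Hamiltonian matrix, you should note that $c\mapsto\Phi_c$ is a continuous (indeed smooth) homomorphism $\mathbb R\to\operatorname{Sp}(\mathbb J,\omega)\cong\operatorname{Sp}(2n,\mathbb R)$, hence of the form $e^{cZ}$ with $Z\in\mathfrak{sp}(2n,\mathbb R)$; both are routine. The paper instead proves the theorem constructively: it solves the Sachs equation using a constant complex solution of the algebraic Riccati equation \eqref{AlgRiccati1} (whose existence is established via an invariant complex Lagrangian subspace and symplectic triangularization), exhibits the curve as an orbit of an explicit complex generator $W=\begin{bmatrix}-A^T&0\\ M_0&A\end{bmatrix}$ in both the generic and non-generic cases, and then uses Goto's orbit criterion to replace the complex generator by a real one. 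What the paper's longer route buys is precisely the effective determination of the generator (used for the $n=2$ formulas, the conjugate-point analysis, and the algorithm in the code section), at the price of complexification; your route buys brevity, stays entirely real, and explains conceptually why the orbit structure exists (it is the Killing symmetry acting on $\mathbb J$), but by itself it does not produce the explicit Hamiltonian generator beyond identifying it abstractly as the induced action of the Killing field on Jacobi fields.
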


We recall that a one-parameter group is the group generated by the exponential of a constant Hamiltonian matrix (element of the symplectic Lie algebra).  We shall often refer to {\em one-parameter group orbit} as just a {\em group orbit} when it is clear from context that the orbit is one-dimensional.

The rest of this section contains a proof.  The outline of the proof is to show first that every plane wave is a one-parameter group orbit of the {\em complex} symplectic group.  The construction is slightly easier in the case of ``generic'' conditions, so we do that case first.  Then, invoking Lemma \ref{ComplexificationLemma}, we prove that this is the orbit of a one-parameter group in the real symplectic group.

\subsection{Generic case}
For the solution to the Sachs equation, as in \S\ref{SachsSolGeneral}, we make the ansatz
$$S = e^{-\omega u}S_0e^{\omega u}.$$
Applying this, we get
$$\dot S = -e^{-\omega u}[\omega,S_0] e^{\omega u},$$
so the Sachs equation is
$$-[\omega, S_0] + S_0^2 + P_0=0.$$
This has the type of an {\em algebraic Riccati equation}.  Such equations are frequently encountered in control theory, but the setting is slightly different there.  We shall show in \S\ref{SachsEquationsAlekseevsky} that this equation arises in a perhaps more natural way from an Alekseevsky microcosm.

Consider the Hamiltonian matrix
\begin{equation}\label{Zdefinition}
  Z = \begin{bmatrix}\omega & I \\ -P_0 & \omega\end{bmatrix}
\end{equation}
(where the symplectic form is represented by  $J=\begin{bmatrix}0&-I\\I&0\end{bmatrix}$).
In the control theory setting, it is often assumed that $Z$ has no imaginary eigenvalue.  This assumption implies that the solutions are in some sense controllable.  We shall not make such an assumption, so most of the general results on algebraic Riccati equations in the literature are not quite strong enough for our purposes, and we analyze the relevant structure {\em ab initio}.

There exists a complex symplectic $2n\times 2n$ matrix $U$ such that $U^{-1}ZU=\Sigma$ is upper triangular.  Such a pair $\Sigma, U$ shall be designated a {\em symplectic upper triangularization} of the Hamiltonian matrix $Z$.  In terms of $n\times n$ blocks,
$$\Sigma = \begin{bmatrix}\Sigma_{11}&\Sigma_{12}\\0&\Sigma_{22}\end{bmatrix},\quad U = \begin{bmatrix} U_{11}&U_{12}\\U_{21}&U_{22}\end{bmatrix}=\begin{bmatrix}\alpha&\beta\\\gamma&\delta\end{bmatrix}.$$
\begin{lemma}\label{SymplecticUpperLemma}
  Every $2n\times 2n$ complex Hamiltonian matrix has a symplectic upper triangularization, and there is an effective algorithm for finding it.
\end{lemma}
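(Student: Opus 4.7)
The plan is to reduce the lemma to the existence of a $Z$-invariant Lagrangian subspace of $\mathbb{C}^{2n}$, then prove that such a subspace always exists by induction on $n$.

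First I would observe the equivalence. Writing the columns of $U$ as $u_1, \ldots, u_n, w_1, \ldots, w_n$, symplecticity of $U$ is exactly the statement that $L := \mathrm{span}(u_1, \ldots, u_n)$ is Lagrangian and that $\{w_j\}$ is a symplectic dual frame, while the vanishing of the lower-left $n\times n$ block of $U^{-1}ZU$ is exactly $Z$-invariance of $L$. So the lemma is equivalent to producing, effectively, a $Z$-invariant Lagrangian subspace; the passage back to $(\Sigma, U)$ is then a symplectic Gram--Schmidt.

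Next I would induct on $n$. The base $n = 1$ is immediate, since any eigenline of $Z$ is automatically Lagrangian and invariant. For the inductive step, pick an eigenvector $v$ of $Z$ with eigenvalue $\lambda$ (which exists over $\mathbb{C}$). The line $\langle v \rangle$ is isotropic. Writing $\Omega(u, u') = u^{T} J u'$ for the symplectic form, the Hamiltonian identity $Z^{T}J + JZ = 0$ yields $\Omega(Zu, v) = -\Omega(u, Zv) = -\lambda\,\Omega(u, v)$, so the symplectic orthogonal $v^{\perp}$ is $Z$-invariant and contains $\langle v \rangle$. The quotient $W := v^{\perp}/\langle v \rangle$ inherits a non-degenerate symplectic form of dimension $2(n-1)$, and $Z$ descends to a Hamiltonian operator $\bar Z$ on $W$. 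The inductive hypothesis applied to $\bar Z$ yields an invariant Lagrangian $L' \subset W$; its preimage in $v^{\perp}$ is an $n$-dimensional $Z$-invariant Lagrangian $L \subset \mathbb{C}^{2n}$, as required.

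The construction is manifestly algorithmic: at each stage one solves an eigenvalue problem for $Z$, chooses a symplectic dual to the eigenvector, represents $Z$ in the new basis, extracts the smaller Hamiltonian block on the quotient, and recurses. Composing the symplectic changes of basis across all levels yields the matrix $U$; reading off the transformed matrix gives $\Sigma$. The main obstacle I anticipate is the bookkeeping that closes the induction: that the form descended to $W$ is non-degenerate (the radical of $\Omega|_{v^{\perp}}$ must be shown equal to $\langle v \rangle$), that $\bar Z$ is Hamiltonian with respect to the descended form, and that the preimage of a Lagrangian in the quotient is Lagrangian in $v^{\perp}$ (not merely coisotropic). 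None of these is deep, but each must be checked to be sure the recursion proceeds on a genuinely smaller Hamiltonian problem.
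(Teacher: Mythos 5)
Your proposal is correct and follows essentially the same route as the paper: pick an eigenvector $v$, note that $Z$ preserves $\ker(v^TJ)$, descend $J$ and $Z$ to the symplectic quotient $v^{\perp}/\langle v\rangle$, apply induction to get an invariant Lagrangian there, and lift it by adjoining $v$. The paper additionally spells out the ``effective'' part concretely (an SVD-based projection scheme for carrying out the recursion numerically), whereas you only assert algorithmicity, but the mathematical core coincides.
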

\begin{proof}
  In the $n=1$ case, we have
  \[Z=\begin{bmatrix}0&1\\-p&0\end{bmatrix}\]
  and
  \[U=\frac{1}{\sqrt{1+q\bar q}}\begin{bmatrix}1&\bar q\\ -q& 1\end{bmatrix}\]
  where
  $$q^2=-p$$
  does the trick.
  
We now show how to go from $n$ to $n-1$.  Represent the symplectic form by a $2n\times 2n$ matrix $J$ with $J+J^T=J^2+I=0$.  Let $Z$ be a complex Hamiltonian matrix (meaning that $Z^TJ+JZ=0$).  The basic idea is to show that $Z$ stabilizes a Lagrangian subspace.  The argument goes as follows.  Let $J(x,y)=x^TJy$.  Let $v\in\mathbb C^{2n}$ be an eigenvector of $Z$ with eigenvalue $\lambda$.  Then $Z$ stabilizes $v$, and the kernel $\ker(v^TJ)$.  Indeed, the first assertion is obvious because $v$ is an eigenvector, and the second is immediate from $J(Zv,x)=\lambda J(v,x) = J(v,Zx)$. so if $J(v,x)=0$, then $J(v,Zx)=0$.  For the inductive step, observe that $J$ descends to a symplectic form on the $2(n-1)$ dimensional space $v^\perp/v$, and $Z\mod v$ defines a Hamiltonian transformation.  Let $W\subset \ker (v^TJ)/v$ be an invariant Lagrangian subspace.  Then $W+v\subset\mathbb C^{2n}$ is an invariant Lagrangian subspace in $2n$ dimensions.

To make this argument effective, it is clearly sufficient to find $2n\times n$ complex matrix $U$ such that the following conditions hold:
\begin{itemize}
\item $ZU=U\Sigma$ for some $n\times n$ matrix $\Sigma$
\item $U^TJU=J$
\item $U^*U = I$
\end{itemize}
where the star denotes the conjugate transpose.  In general, let $A$ be a rectangular complex matrix, and denote by
$$[\ker A], [\im A], [\ker A^T], [\im A^T]$$
the rectangular unitary maps associated with each of the four subspaces, coming from the singular value decomposition of $A$.  Thus, for example: $[\im A]$ is a complex matrix with $r$ columns, where $r$ is the rank of $A$, onto the range of $A$, and such that $[\im A]^*[\im A]=I_r$ and $[\im A][\im A]^*$ is the hermitian projection onto the range of $A$.

Letting $P=[\im v][\im v]^*$ be the projection onto the span of $v$, note that the operator
$$B=(I-P)[\ker v^TJ]$$
has rank $2n-2$, and therefore the matrix $[\im B]$ has $2n-2$ columns: its image is the hermitian complement $\ker(v^TJ)-\im v$.  Now $J_1=[\im B]^TJ[\im B]$ is a symplectic form on $\mathbb C^{2n-2}$, with $Z_1=[\im B]^*Z[\im B]$ a Hamiltonian matrix, and so by induction there is a $2(n-1)\times (n-1)$ matrix $U_1$ satisfying the conditions above decomposition $Z_1U_1=U_1\Sigma_1$, $U_1^TJ_1U_1=0$, $U_1^*U_1=I$.  Finally, $U=[\im v] \oplus [\im B]U_1 $ finishes the construction.
\end{proof}

\begin{lemma}
For a symplectic matrix $\begin{bmatrix}\alpha&\beta\\\gamma&\delta\end{bmatrix}$ with $\alpha$ invertible, $\gamma\alpha^{-1}$ is symmetric.
\end{lemma}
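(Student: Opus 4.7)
The plan is to unpack the defining condition $U^T J U = J$ in $n\times n$ blocks, using the representation $J = \begin{bmatrix}0 & -I \\ I & 0\end{bmatrix}$ fixed just above in \eqref{Zdefinition}. Writing $U = \begin{bmatrix}\alpha & \beta \\ \gamma & \delta\end{bmatrix}$ and multiplying out, the upper-left $n\times n$ block of $U^T J U - J$ is
$$\gamma^T \alpha - \alpha^T \gamma,$$
and the symplectic condition forces this to vanish. In other words, $\alpha^T\gamma$ is symmetric. (The other two independent blocks give $\delta^T\beta$ symmetric and $\alpha^T\delta - \gamma^T\beta = I$, but these play no role in the present claim.)

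Since $\alpha$ is invertible by hypothesis, I then conjugate the relation $\alpha^T\gamma = \gamma^T\alpha$ by $\alpha^{-1}$: multiplying on the right by $\alpha^{-1}$ and on the left by $(\alpha^T)^{-1}$ gives
$$\gamma\alpha^{-1} = (\alpha^T)^{-1}\gamma^T = (\gamma\alpha^{-1})^T,$$
which is precisely the claim. There is no genuine obstacle here; the lemma is a direct consequence of the block expansion of $U^T J U = J$, and the only thing to be careful about is the sign convention for $J$. I anticipate that the role of this lemma in the sequel is to certify that the Lagrangian subspace spanned by the first $n$ columns of a symplectic upper triangularization $U$ from Lemma~\ref{SymplecticUpperLemma}, which is a graph over the first $n$ coordinates exactly when $\alpha$ is invertible, can be realized as the graph of the symmetric matrix $\gamma\alpha^{-1}$; this in turn will provide the constant symmetric solution $S_0$ to the algebraic Riccati equation $-[\omega,S_0] + S_0^2 + P_0 = 0$ required in the proof of Theorem~\ref{SpecialRosens}.
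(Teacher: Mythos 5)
Your proof is correct and follows essentially the same route as the paper: both rest on the block relation $\gamma^T\alpha=\alpha^T\gamma$ from the symplectic condition and then multiply by $\alpha^{-1}$ (the paper simply asserts this relation rather than deriving it from $U^TJU=J$, but that is the only difference).
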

\begin{proof}
For a symplectic matrix, $\gamma^T\alpha=\alpha^T\gamma$.  Therefore, $\alpha^T\gamma\alpha^{-1} = \gamma^T = \alpha^T(\gamma\alpha^{-1})^T$.  So $\gamma\alpha^{-1}$ is symmetric.
\end{proof}

\begin{lemma}\label{SpecialInvariantSubspace}
  Consider a complex matrix $Z$ of the form
  $$Z = \begin{bmatrix}A&B\\ C&-A^T\end{bmatrix} $$
  where $B=B^T, C=C^T$ are symmetric, and $B$ is invertible.  Then $Z$ has an invariant complex Lagrangian subspace complementary to the Lagrangian subspace $\mathbb L = \im\begin{bmatrix}0\\ I\end{bmatrix}$.
\end{lemma}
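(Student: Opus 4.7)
The plan is to deduce the lemma by combining Lemma \ref{SymplecticUpperLemma}, which already supplies a $Z$-invariant complex Lagrangian subspace $\mathbb M \subset \mathbb C^{2n}$, with a short observation that $\mathbb L$ itself contains no non-trivial $Z$-invariant subspace when $B$ is invertible. A more direct-looking alternative would be to parameterize Lagrangian complements of $\mathbb L$ as graphs $\im\begin{bmatrix}I\\X\end{bmatrix}$ of symmetric matrices $X$, translate the $Z$-invariance condition into the algebraic Riccati equation
\[XBX + XA + A^TX - C = 0,\]
and construct a solution directly. I would avoid this route in favor of the subspace approach: existence of a symmetric solution $X$ is precisely what the lemma asserts, and recasting it via Lemma \ref{SymplecticUpperLemma} reduces the task to a dimension count.

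Given the $Z$-invariant Lagrangian $\mathbb M$ from Lemma \ref{SymplecticUpperLemma}, the plan is to show $\mathbb M \cap \mathbb L = 0$, since $\dim \mathbb M + \dim \mathbb L = 2n$ will then force $\mathbb C^{2n} = \mathbb M \oplus \mathbb L$. Because $\mathbb M \cap \mathbb L$ is itself $Z$-invariant and contained in $\mathbb L$, it will be enough to prove that $\mathbb L$ contains no non-trivial $Z$-invariant subspace.

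The key computation is a single line: for any $w = \begin{bmatrix}0\\y\end{bmatrix}\in \mathbb L$, one has $Zw = \begin{bmatrix}By\\-A^Ty\end{bmatrix}$, and this lies in $\mathbb L$ only when $By = 0$. Since $B$ is invertible, $y=0$, and therefore any $Z$-invariant subspace of $\mathbb L$ is trivial, as required.

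I do not anticipate a substantial obstacle: the serious work has already been done in Lemma \ref{SymplecticUpperLemma}, and the hypothesis that $B$ is invertible enters just once, in the concluding step. The symmetry of $B$ and $C$ plays no explicit role in this particular argument; it is encoded in the Hamiltonian structure of $Z$, which in turn is what guarantees (via the symplectic triangularization) that the invariant subspace $\mathbb M$ produced above is Lagrangian rather than merely $n$-dimensional and invariant.
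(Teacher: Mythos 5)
There is a genuine gap in the reduction to Lemma \ref{SymplecticUpperLemma}. Your key step asserts that $\mathbb M\cap\mathbb L$ is $Z$-invariant; but the intersection of an invariant subspace with a \emph{non}-invariant one need not be invariant, and $\mathbb L$ is precisely not $Z$-invariant here (your own computation shows $Z\mathbb L\cap\mathbb L=0$ when $B$ is invertible). So from $w\in\mathbb M\cap\mathbb L$ you may conclude $Zw\in\mathbb M$, but not $Zw\in\mathbb L$, and the argument stops. Worse, the statement you are implicitly relying on --- that \emph{every} $Z$-invariant Lagrangian subspace is automatically complementary to $\mathbb L$ when $B$ is invertible --- is false over $\mathbb C$. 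Take $n=2$, $A=0$, $C=0$, $B=I$, so
$$Z=\begin{bmatrix}0&I\\0&0\end{bmatrix},$$
and let $y=(1,i)^T$, so $y^Ty=0$. Then $\mathbb M$ spanned by $(0,y)$ and $(y,0)$ is $Z$-invariant (since $Z(0,y)=(y,0)$ and $Z(y,0)=0$) and Lagrangian (the only pairing to check is $y^Ty=0$), yet $\mathbb M\cap\mathbb L\ne 0$. Since Lemma \ref{SymplecticUpperLemma} chooses eigenvectors arbitrarily, nothing prevents it from returning such an $\mathbb M$, so the lemma cannot simply be quoted and combined with a dimension count.

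This is exactly why the paper's proof does not invoke Lemma \ref{SymplecticUpperLemma} as a black box but runs its own induction: at each step it chooses an eigenvector $v$ whose top $n$ components $v_1$ are nonzero (this is where $B$ invertible enters, as in your one-line computation), and then --- this is the real work --- it verifies that the reduced Hamiltonian matrix $Z_1$ on $v^\perp/v$ still has an invertible top-right block, by showing $(1-v_1v_1^*)[PZ_1Q]_{12}=(1-v_1v_1^*)B$ has rank $n-1$. In other words, complementarity with $\mathbb L$ must be maintained throughout the construction of the invariant Lagrangian; it cannot be recovered afterwards from invariance alone. Your observation that no nonzero $Z$-invariant subspace lies inside $\mathbb L$ is correct and is the right first step, but by itself it only rules out $\mathbb M\subseteq\mathbb L$, not a proper nontrivial intersection, as the example above shows.
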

\begin{proof}
  We proceed by induction on $n$, the base case $n=0$ being trivial.  To go from $n$ to $n-1$, we observe that at least one of the first $n$ components of an eigenvector $v$ of $Z$ must be non-zero, because if $v=0\oplus v_2$, 
  with nonzero $v_2\in\mathbb C^n$, then
  $$Zv = \begin{bmatrix}Bv_2\\ *\end{bmatrix} = \begin{bmatrix}0\\v_2\end{bmatrix}\lambda$$
  so $Bv_2=0$, which contradicts the invertibility of $B$.  So we can suppose that $v=\begin{bmatrix}v_1\\v_2\end{bmatrix}$ where $v_1\ne 0$, and $v^*v=1$.

  Let
  $$P=\begin{bmatrix}I_n&0\\ 0&0\end{bmatrix},\qquad Q=I-P=\begin{bmatrix}0&0\\ 0&I_n\end{bmatrix}.$$
  The condition of the lemma is that $PZQ$ has rank $n$.
 
  Let $V = I-vv^*$ be the hermitian projection onto the complement of $v$, and let $J_1=V^TJV$ be the pullback of the symplectic form onto the image of $V$, and $Z_1=V^*ZV$ the pullback of $Z$ onto the image of $V$.  We have to show that $Z_1$ satisfies the conditions of the lemma, namely that $PZ_1Q$ has rank $n-1$.

  We find the non-zero (top right) block of $PZ_1Q$ is
  $$[PZ_1Q]_{12} = [(-1+v_1v_1^*)A+v_1v_2^*C]v_1v_2^* + [(1-v_1v_1^*)B+v_1v_2^*A^T](1-v_2v_2^*). $$
  Because $v$ is an eigenvector of $Z$, we have
  $$Av_1 = \lambda v_1 - Bv_2,$$
  and so
  \begin{align*}
    [PZ_1Q]_{12} &= (-1+v_1v_1^*)(\lambda v_1-Bv_2)v_2^*+v_1v_2^*Cv_1v_2^* + [(1-v_1v_1^*)B+v_1v_2^*A^T](1-v_2v_2^*)\\
                 &= (1-v_1v_1^*)B + v_1v_2^*[Cv_1v_2^* + A^T(1-v_2v_2^*)].
  \end{align*}
  Applying $(1-v_1v_1^*)$ on the left gives
  $$(1-v_1v_1^*)[PZ_1Q]_{12} = (1-v_1v_1^*)B$$
  which is the product of $n\times n$ matrices of respective ranks $n-1$ and $n$.  Therefore it has rank $n-1$ and we are done by induction.
  
\end{proof}

\begin{lemma}
  Consider the Sachs equation, for the unknown symmetric matrix $X$
  \begin{equation}\label{AlgSachs}
    -[\omega, X] + X^2 + P = 0 
  \end{equation}
  where $\omega, P$ are given matrices with $\omega$ skew and $P$ symmetric.  Then \eqref{AlgSachs} admits a (possibly complex) solution $X$.
\end{lemma}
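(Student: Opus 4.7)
The plan is to reduce the existence of a symmetric solution $X$ of the algebraic Sachs equation \eqref{AlgSachs} to the existence of a complex invariant Lagrangian subspace of the Hamiltonian matrix
$$Z = \begin{bmatrix}\omega & I \\ -P & \omega\end{bmatrix},$$
transverse to the ``vertical'' Lagrangian $\mathbb L = \im\begin{bmatrix}0\\I\end{bmatrix}$, and then invoke Lemma \ref{SpecialInvariantSubspace} to produce such a subspace. Since $\omega$ is skew, we have $-\omega^T=\omega$, so $Z$ fits the shape $\begin{bmatrix}A&B\\C&-A^T\end{bmatrix}$ treated in Lemma \ref{SpecialInvariantSubspace} with $A=\omega$, $B=I$ (visibly symmetric and invertible), and $C=-P$ (symmetric). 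Hence that lemma applies directly.

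Next, I would translate ``complex invariant Lagrangian complementary to $\mathbb L$'' into the language of symmetric matrices. A Lagrangian subspace of $\mathbb C^{2n}$ transverse to $\mathbb L$ is precisely the graph of a symmetric matrix $X$: it has the form $\im\begin{bmatrix}I\\X\end{bmatrix}$, and the isotropy condition with respect to $J=\begin{bmatrix}0&-I\\I&0\end{bmatrix}$ collapses to $X=X^T$. Lemma \ref{SpecialInvariantSubspace} yields such an $X$ once we check that $Z$ stabilizes the graph, i.e.\ there is some $n\times n$ matrix $\Sigma$ with $Z\begin{bmatrix}I\\X\end{bmatrix}=\begin{bmatrix}I\\X\end{bmatrix}\Sigma$.

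Computing both sides,
$$Z\begin{bmatrix}I\\X\end{bmatrix} = \begin{bmatrix}\omega+X\\-P+\omega X\end{bmatrix}, \qquad \begin{bmatrix}I\\X\end{bmatrix}\Sigma = \begin{bmatrix}\Sigma\\X\Sigma\end{bmatrix},$$
so $\Sigma=\omega+X$ from the top block, and substituting into the bottom block gives
$$X(\omega+X) = -P+\omega X, \qquad\text{i.e.}\qquad X^2 + X\omega - \omega X + P = 0,$$
which is exactly $-[\omega,X]+X^2+P=0$. Thus the symmetric matrix $X$ produced by the invariant-Lagrangian argument solves \eqref{AlgSachs}.

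The only thing to verify is that the hypotheses of Lemma \ref{SpecialInvariantSubspace} really are met: here $B=I$ is obviously invertible, so there is no obstacle, and the inductive construction inside that lemma terminates in an invariant Lagrangian subspace complementary to $\mathbb L$. In particular there is no need to impose that $Z$ have no purely imaginary eigenvalue, which is the standard hypothesis in the control-theoretic literature. Writing the lemma this way also makes the construction effective, since the proof of Lemma \ref{SymplecticUpperLemma} gives an algorithm for producing invariant subspaces from eigenvectors of $Z$ by successive projection. The main (minor) subtlety will be simply keeping track of the transversality condition as one passes to the quotient $v^\perp/v$ at each inductive step in the argument of Lemma \ref{SpecialInvariantSubspace}; this is where one uses that $B$ (here the identity) remains invertible on the reduced symplectic space, which is automatic in the present setting.
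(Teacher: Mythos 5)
Your proposal is correct and takes essentially the same route as the paper: both apply Lemma \ref{SpecialInvariantSubspace} to the Hamiltonian matrix $Z=\begin{bmatrix}\omega&I\\-P&\omega\end{bmatrix}$ (with $B=I$ invertible, $C=-P$ symmetric) and convert the resulting invariant Lagrangian subspace complementary to $\im\begin{bmatrix}0\\I\end{bmatrix}$ into a symmetric solution of \eqref{AlgSachs}. The only difference is presentational: the paper takes the subspace as the first column block $\begin{bmatrix}\alpha\\\gamma\end{bmatrix}$ of the symplectic upper triangularization and sets $X=\gamma\alpha^{-1}$ (symmetry coming from the earlier lemma on symplectic matrices), whereas you parametrize it directly as the graph of $X$ and read off the Riccati equation from invariance — the same computation in a different basis.
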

\begin{proof}
  By Lemma \ref{SpecialInvariantSubspace}, the Hamiltonian matrix $Z$ stabilizes a Lagrangian subspace complementary to $\op{im}\begin{bmatrix}0\\I\end{bmatrix}$.  Therefore symplectic upper triangularization $U$ constructed in Lemma \ref{SymplecticUpperLemma}, using this invariant subspace, is such that $\alpha=U_{11}$ is invertible.  That is, we have
  \begin{align*}
    \gamma &= \alpha\Sigma_{11} - \omega\alpha\\
    \omega\gamma &= P_0\alpha + \gamma\Sigma_{11}
  \end{align*}
  where $\alpha$ is an invertible $n\times n$ complex matrix.
  
  Let $X=\gamma\alpha^{-1}$.  We claim that $X$ solves \eqref{AlgSachs}.  We have
  $$\omega X = \omega\gamma\alpha^{-1} = P_0 + \gamma\Sigma_{11}\alpha^{-1}.$$
  Also,
  $$X^2 = \gamma\alpha^{-1}(\alpha\Sigma_{11}-\omega\alpha)\alpha^{-1} = -X\omega + \gamma\Sigma_{11}\alpha^{-1}.$$
  Thus
  $$X^2-\omega X = - X\omega - P_0,$$
  which becomes \eqref{AlgSachs} after moving terms to the left-hand side.

\end{proof}

For the rest of this subsection, assume that we are in the following generic case:
\begin{definition}
  A solution $X=S_0$ of \eqref{AlgSachs} is called {\em generic} if $\omega - S_0$ and $\omega + S_0$ share no eigenvalue.
\end{definition}
Note that genericity is a condition on $S_0,$ {\em and} $\omega,P_0$, but not separately in $S_0$ alone.  For example, when $\omega=P_0=0$, then there are no ``generic'' solutions $S_0$.

To formulate the genericity precisely, let $\mathbb M$ be the space set of $n\times n$ complex matrices, and $\mathbb X^2\subset\mathbb M$ the space of complex symmetric matrices.  Consider the mapping $F:\mathbb M\to\mathbb X^2$ given by
$$F(M) = k[M,M^T] + (M+M^T)^2$$
where $k$ is an arbitrary constant.  Let $\mathbb U\subset\mathbb M$ denote the Zariski open set of matrices $M$ having no eigenvalues $\lambda,\mu$ such that $\lambda+\mu=0$.  Then:
\begin{lemma}
The image of $\mathbb M$ under $F$ contains a nonempty Zariski open subset of $\mathbb X^2$.
\end{lemma}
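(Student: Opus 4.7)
My plan is to reduce the claim to showing that the differential $dF$ is surjective at a single convenient point, and then invoke standard facts about dominant polynomial maps. Since $F:\mathbb M\to\mathbb X^2$ is a polynomial map between irreducible complex affine varieties, its image $F(\mathbb M)$ is constructible by Chevalley's theorem. If the image is also Zariski dense, then irreducibility of $\mathbb X^2$ forces this constructible dense set to contain a nonempty Zariski open subset of $\mathbb X^2$. Moreover, density of the image follows (via lower semi-continuity of rank and dimension comparison) once $dF_M$ is surjective at some point $M\in\mathbb M$.

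To exhibit such a point, I would linearize at $M=I$. A direct expansion gives $F(I)=k[I,I]+(2I)^2=4I$, and writing $M=I+\epsilon N$, the first-order variation of $k[M,M^T]$ vanishes because $I$ commutes with everything (so $MM^T-M^TM$ begins at order $\epsilon^2$), while the first-order variation of $(M+M^T)^2=(2I+\epsilon(N+N^T))^2$ is $4(N+N^T)$. Hence $dF_I(N)=4(N+N^T)$. Given any $S\in\mathbb X^2$, the choice $N=S/8$ (itself symmetric) yields $dF_I(N)=S$, so $dF_I$ maps surjectively onto $\mathbb X^2$.

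The point of choosing $M=I$ is precisely that the nonlinear commutator term $k[M,M^T]$ is quadratic in the perturbation there, so its differential drops out and only the symmetrization map $N\mapsto 4(N+N^T)$ remains, which is transparently surjective onto $\mathbb X^2$. The main obstacle here is not the computation but articulating the algebro-geometric conclusion cleanly: a dominant morphism of irreducible varieties has image containing a Zariski open subset of the target, which is standard from Chevalley's theorem plus the observation that a constructible subset dense in an irreducible variety must contain a nonempty open subset.
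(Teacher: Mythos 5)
Your proof is correct and follows essentially the same route as the paper: both compute the differential of $F$ at the identity, where the commutator term drops to second order, obtaining $dF_I(N)=4(N+N^T)$, which is visibly surjective onto $\mathbb X^2$, and then conclude that the image contains a nonempty Zariski open set. You spell out the algebro-geometric endgame (Chevalley plus dominance) more explicitly than the paper does, but this is the same argument.
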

\begin{proof}
  The differential of $F$ at the identity $I\in\mathbb U$ is:
  $$dF(I) = 4(dM + dM^T).$$
  Since this has full rank, there is a Zariski open neighborhood of the identity on which $F$ is an open mapping.
\end{proof}

In the generic case, Theorem \ref{SpecialRosens} can be proven by an explicit construction.  Recalling that
$$S=e^{-\omega u}S_0e^{\omega u},\qquad -[\omega,S_0] + S_0^2 + P_0 = 0.$$
We now solve the equation $\dot L=SL$ with 
$$L = e^{-\omega u}e^{(S_0+\omega)u}L_0.$$

To get the Rosen form, we seek a symmetric tensor $H$ such that
$$\frac{dH}{du} = (L^TL)^{-1}.$$
We try, as in \ref{SachsSolGeneral}, with $A=-\omega-S_0$,
$$H = L_0^{-1}e^{Au}L_0H_0L_0^Te^{A^Tu}L_0^{-T}.$$
Then
$$\frac{dH}{du} = L_0^{-1}e^{Au}\left(AL_0H_0L_0^T + L_0H_0L_0^TA^T\right)e^{A^Tu}L_0^{-T},$$
which is $(L^TL)^{-1}$ provided that the term in brackets is the identity.  Generically, $A,-A^T$ do not share an eigenvalue, and so the function $F(X) = AX + XA^T$ is an isomorphism from the space of symmetric matrices to itself, and therefore there exists a solution $L_0H_0L_0^T=X$ of $F(X) = I$.\footnote{
  Lemma: $F(X) = AX + XA^T$ is invertible as linear operator $F:\mathbb X^2\to\mathbb X^2$ provided $A$ has no eigenvalues $\lambda,\mu$ such that $\lambda+\mu=0$.
  Proof:
    Suppose that $AX=-XA^T$.  Then, for any polynomial $p$, we have $p(A)X = Xp(-A^T)$.  If $p$ is the characteristic polynomial of $A$, then we have $Xp(-A^T)=0$, and $p(-A^T)$ is invertible by hypothesis, so $X=0$.
}  

  With this choice of $H_0$, we have
  $$\begin{bmatrix}I\\H\end{bmatrix} \sim \begin{bmatrix}(L_0^{-1}e^{uA}L_0)^{-T}\\ (L_0^{-1}e^{uA}L_0)H_0\end{bmatrix} $$
  i.e., $H$ belongs to the orbit of a one-parameter subgroup of $\operatorname{Sp}(\mathbb J\otimes\mathbb C)$ (which happens to be block diagonal in this case).

{\em Generically}, the equation $AL_0H_0L_0^T + L_0H_0L_0^TA^T=I$ is solvable for $L_0H_0L_0^T$, but of course not always.  To analyze the obstruction in the remaining cases, we may as well take $L_0=I$.
We look for a modification
$$H = e^{uA}H_0e^{uA^T} + M(u),\qquad \dot H = e^{uA}e^{uA^T}$$
where $H_0$ is a constant to be determined and $M(u)$ is an unknown function of $u$.  All told, we have
$$e^{uA}(I - AH_0 - H_0A^T)e^{uA^T} - \dot M = 0.$$
We show how to solve this equation.  Equip the space $\mathbb X^2$ of symmetric $n\times n$ matrices with the trace form.  Consider the linear operator on $\mathbb X^2$ given by $F(X) = AX+XA^T.$  Note that $F^T(Y)=A^TY + YA$ is the adjoint with respect to the trace form.  If we didn't have the $\dot M$ term, the equation for $H_0$ that we would need to solve is
$$I - F(H_0) = 0.$$
By standard linear algebra, there exist $H_0\in\mathbb X^2, M_0\in\ker(F^T)$ such that
$$I - F(H_0) - M_0=0$$
with $M_0$ unique.  Explicitly, let $P$ be the orthogonal projection on $\mathbb X^2$ onto the image of $F$, and $Q=\mathcal I_{\mathbb X^2}-P$ the orthogonal projection onto the kernel of $F^T$, then:
$$M_0=Q(I).$$
If $M$ is such that
$$\dot M = e^{Au}M_0e^{A^Tu},$$
with $F^T(M_0) = A^TM_0+M_0A=0$, then we have our solution.  (This is the generic case of the theorem, and we are done.)

\subsection{Non-generic case}\label{Nongeneric}
We now drop the assumption of genericity, continuing the discussion of the preceding subsection.

Suppose we have selected $H_0,M_0$ such that
\begin{equation}\label{AH0M0}
  I = AH_0 + H_0A^T + M_0,\quad A^TM_0 + M_0A=0.
\end{equation}
For $w$ an unknown constant symmetric matrix, put
$$W=\begin{bmatrix}-A^T&0\\ w&A\end{bmatrix}$$
and
$$e^{uW} = \begin{bmatrix}e^{-uA^T}&0\\ \gamma &e^{Au}\end{bmatrix}.$$
Hence
$$\dot\gamma = -\gamma A^T + e^{uA}w.$$
Also, let $H=(\gamma + e^{uA}H_0)e^{uA^T}$ so that
$$\begin{bmatrix}I\\H\end{bmatrix}\sim e^{uW}\begin{bmatrix}I\\ H_0\end{bmatrix}.$$
We want $\dot H=e^{uA}e^{uA^T}$.  Now
$$\dot H = \dot\gamma e^{uA^T} + \gamma A^T e^{uA^T} + e^{uA}(AH_0+H_0A^T)e^{uA^T}$$
so we need
\begin{align*}
  \dot\gamma e^{uA^T} + \gamma A^T e^{uA^T} &= e^{uA}M_0e^{uA^T}\\
  \dot\gamma + \gamma A^T &= e^{uA}M_0\\
  e^{uA}w &= e^{uA}M_0\\
  w &= M_0.
\end{align*}
Therefore we have proven that $H$ belongs to an orbit of the one-parameter group generated by
$$W =
\begin{bmatrix}
  -A^T & 0 \\
  M_0 & A
\end{bmatrix}.
$$

\subsection{Recovering reality}
Finally, to complete the proof, we must show that the one parameter group can be taken to be real.  The problem is that the Sachs tensor $S$ is complex, corresponding to a complex polarization of the symplectic space $\mathbb J$.  The propagator is thus apparently complex, being written in a complex basis.  In fact, it is clear that the Rosen metric, with $g(u)=L^TL$ must also be complex when it is computed relative to the Lagrangian basis determined by $S$.  However, this is precisely the case contemplated in Lemma \ref{ComplexificationLemma}, which we now re-frame as:

\begin{lemma}
Suppose that the subspace $\mathbb H_{\mathbb C}(u)$ is an orbit of $Z\in\mathfrak{sp}(2n,\mathbb C)$.  Then $e^{uZ}\mathbb H_{\mathbb C}\subset\mathbb J\otimes\mathbb C$ is the complexification of a real space $\mathbb H(u)\subset\mathbb J$.
\end{lemma}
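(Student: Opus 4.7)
My plan is to observe that the statement is a direct reframing of Lemma \ref{ComplexificationLemma} adapted to the situation in which the curve of complex Lagrangian subspaces has been realized as a complex one-parameter group orbit. The two pieces are already on the table: the hypothesis furnishes an analytic description of the curve as $u\mapsto e^{uZ}\mathbb H_{\mathbb C}(0)$, while Lemma \ref{ComplexificationLemma} furnishes a geometric description of the same curve as $u\mapsto \mathbb H(u)\otimes\mathbb C$, where $\mathbb H(u)\subset\mathbb J$ is the real subspace of abreast real Jacobi fields vanishing at $\gamma(u)$.

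First I would chain these two identifications together: for every $u$ in the domain of interest,
\[ e^{uZ}\mathbb H_{\mathbb C}(0) \;=\; \mathbb H_{\mathbb C}(u) \;=\; \mathbb H(u)\otimes\mathbb C, \]
where the first equality is the hypothesis and the second is Lemma \ref{ComplexificationLemma}. This exhibits the orbit as the complexification of the real Lagrangian subspace $\mathbb H(u)\subset\mathbb J$, which is precisely the conclusion sought.

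The only point that deserves careful attention is the middle equality, namely that the subspace $\mathbb H_{\mathbb C}(u)$ constructed via the Sachs-equation analysis of \S\ref{StructureMicrocosm} and propagated by $e^{uZ}$ genuinely coincides with the subspace named $\mathbb H_{\mathbb C}(u)$ in Lemma \ref{ComplexificationLemma}. Both are, by definition, the space of complex Jacobi fields vanishing at $\gamma(u)$, so there is nothing substantive to verify beyond matching the definitions. I do not anticipate any real obstacle: the conceptual content is simply that although the polarization $S$ produced by the algebraic Riccati equation is complex, and hence the Lagrangian basis in which the propagator is computed is complex, the subspace being propagated is intrinsically real, a fact that holds on dimensional grounds via Lemma \ref{ComplexificationLemma}. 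This is what licenses the transition, in the proof of Theorem \ref{SpecialRosens}, from the complex orbit constructed in the preceding subsections to a curve in the real Lagrangian Grassmannian of $\mathbb J$.
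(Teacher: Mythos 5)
Your argument is exactly the paper's: the paper presents this lemma as a direct re-framing of Lemma \ref{ComplexificationLemma}, and your chain $e^{uZ}\mathbb H_{\mathbb C}(0)=\mathbb H_{\mathbb C}(u)=\mathbb H(u)\otimes\mathbb C$ (hypothesis, then Lemma \ref{ComplexificationLemma}) is precisely that identification. Correct, and essentially identical to the paper's treatment.
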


We must show that the curve in the Grassmannian $\mathbb H(u)\subset\mathbb J$, which is known to be both real and the orbit of a $Z\in\mathfrak{sp}(2n,\mathbb C)$, is in fact the orbit of the real part of $Z$.  This is true in more generality: let $G$ be a real Lie group with Lie algebra $\mathfrak g$, $G_{\mathbb C}$ its complexification, with $\mathfrak g_{\mathbb C}=\mathfrak g\otimes\mathbb C$ and $G\subset G_{\mathbb C}$ the canonical inclusion.  Let $L\subset G_{\mathbb C}$ be a closed subgroup, with Lie algebra $\mathfrak l$.  We must show that, if $Z\in\mathfrak g_{\mathbb C}$ is such that
$$\exp(tZ)L = \exp(t\bar Z)L $$
then, with $X=\frac{Z+\bar Z}{2}$,
$$\exp(tX)L = \exp(tZ)L.$$

We use the following criterion from Goto \cite{goto1970orbits}.  For $Z\in\mathfrak g_{\mathbb C}$, define
$$ P_L(Z) = \{ A\in\mathfrak l\,|\, (\operatorname{ad}Z)^nA \in \mathfrak l, n=0,1,2,\dots\}.$$
Then $\exp(tZ)L=\exp(tW)L$ for all $t\in\mathbb R$ if and only if $Z-W\in P_L(Z)$.  From the assumption $\exp(tZ)L=\exp(t\bar Z)L$, one direction of Goto's criterion gives $Z-\bar Z\in P_L(Z)$.  With
$$X = \frac{Z+\bar Z}{2},$$
we have
$$X - Z = \frac{\bar Z - Z}{2}\in P_L(Z)$$
and so the other direction of Goto's criterion now gives
$$\exp(tX)L = \exp(tZ)L$$
for all $t\in\mathbb R$.  Thus the orbit is that of a real group.

Specializing this to the the present case of $G=\operatorname{Sp}(2n,\mathbb R)$ and $L_{\mathbb R}\subset G$, $L\subset G_{\mathbb C}$ the respective stabilizers of {\em real} Lagrangian subspace $\mathbb H$, we obtain that the trajectory $\exp(tZ)L=\exp(tX)L$ in $G_{\mathbb C}/L$ is exactly the trajectory $\exp(tX)L_{\mathbb R}$ in $G/L_{\mathbb R}$ under the canonical inclusion of the latter in the former.

This completes the proof of Theorem \ref{SpecialRosens}.

\subsection{Examples}
We give a collection of examples illustrative of Theorem \ref{SpecialRosens} and its proof.  

1. Consider the case where $\omega=0$, $P=P_0=I$.  The Jacobi equation is
$$\ddot L + L = 0,$$
with $L(0)$ invertible, $\dot LL^{-1}$ symmetric.  Two obvious solutions, one obviously real and the other obviously non-real, are
$$L_{\mathbb R} = \cos u I,\qquad L_{\mathbb C} = e^{iu}I.$$
We will show how the complex solution is secretly real.  We have
$$H_{\mathbb R} = \tan u I,\qquad H_{\mathbb C} = (-2i)^{-1}e^{-2iu}I.$$
The content of the theorem is that there exists a constant complex symplectic matrix $S$ and $A=A(u)$ in $GL(n,\mathbb C)$ such that
\begin{equation}\label{SymplecticChangeComplex1}
  S\begin{bmatrix}I\\ H_{\mathbb C}\end{bmatrix} = \begin{bmatrix}I\\ H_{\mathbb R}\end{bmatrix}A(u).
\end{equation}
Up to a constant complex factor, put $S=\begin{bmatrix}-\tfrac12I&iI\\\tfrac12 iI & -I\end{bmatrix}$.  Then \eqref{SymplecticChangeComplex1} is seen to follow, for a suitable factor $A(u)$ from the observation:
$$S\begin{bmatrix}(-2i)e^{2iu}\\1\end{bmatrix} = \begin{bmatrix}i(e^{2iu}+1)I\\ (e^{2iu}-1)I\end{bmatrix}.$$

2. Consider a case where $A=\omega-S_0$ has a degenerate repeated zero eigenvalue
$$A=\begin{bmatrix}0&1\\0&0\end{bmatrix}.$$
The differential equation we have to solve is
$$\dot H  = e^{uA}e^{uA^T} = \begin{bmatrix}1&u\\0&1\end{bmatrix}\begin{bmatrix}1&0\\u&1\end{bmatrix}=\begin{bmatrix}u^2+1&u\\u&1\end{bmatrix}$$
so that
$$H = \begin{bmatrix}\frac{u^3}{3} + u& \frac{u^2}{2}\\\frac{u^2}{2} & u\end{bmatrix} + H_0.$$
We show how to select the initial condition $H_0$ and verify that $H$ is an orbit of the symplectic group using the procedure in the proof of the theorem.

First, solve the equation
$$A^TM_0 + M_0A = 0,$$
with
$$M_0 = \begin{bmatrix}0&0\\0&1\end{bmatrix}.$$
Note that $AM_0=A$, $M_0A^T=A^T$.  Also, put
$$H_0=\tfrac12\begin{bmatrix}0&1\\1&0\end{bmatrix}$$.  Then we have $AH_0+H_0A^T + M_0=I$, so that we have constructed a solution to \eqref{AH0M0}.
Finally, consider
$$W = \begin{bmatrix}-A^T & 0\\ M_0 & A\end{bmatrix}.$$
Then
$$\exp(uW) = \begin{bmatrix}1&0&0&0\\ -u&1&0&0\\ -u^3/6+u&u^2/2 &1&u\\ -u^2/2& u&0&1\end{bmatrix}=\begin{bmatrix}S_{11}&S_{12}\\S_{21}&S_{22}\end{bmatrix}.$$
We then see that
$$\exp(uW)\begin{bmatrix}I\\H_0\end{bmatrix}H = \begin{bmatrix} 1 & 0 \\
 -u & 1 \\
 \frac{u}{2}-\frac{u^3}{6} & \frac{u^2}{2}+\frac{1}{2} \\
 \frac{1}{2}-\frac{u^2}{2} & u
\end{bmatrix} = \begin{bmatrix} I \\ H \end{bmatrix}\begin{bmatrix} 1 & 0 \\
 -u & 1
\end{bmatrix}$$
as required.

3. Consider a case where $A=\omega-S_0$ has a pair of opposite eigenvalues $\lambda=\pm1$, say
$$A=\begin{bmatrix}1&0\\0&-1\end{bmatrix}.$$
Here we can take $M_0=0$ and
$$H_0=A/2.$$
Then we have to solve
$$\dot H = e^{uA}e^{uA^T} = e^{2uA}.$$
Thus, again omitting constants,
$$H = 2^{-1}\begin{bmatrix}e^{2u}&0\\0&-e^{-2u}\end{bmatrix}.$$
On the other hand, with
$$W = \begin{bmatrix}-A^T&0\\0&A\end{bmatrix},$$
we have
$$\exp(uW)\begin{bmatrix}I\\H_0\end{bmatrix} = \begin{bmatrix}e^{-u}&0\\0&e^u\\ \frac{1}{2}e^{u}&0\\0&\frac{1}{2}e^{-u}\end{bmatrix} = \begin{bmatrix}I\\H\end{bmatrix}\begin{bmatrix}e^{-u}&0\\0&e^u\end{bmatrix}. $$

4. For another example, suppose
$$A=\begin{bmatrix}2&-3\\1&-2\end{bmatrix}.$$
The projection $Q(I)$ of the identity matrix onto the kernel of $F^T(X) = A^TX+XA$ is
$$M_0=Q(I) = \frac29 \begin{bmatrix}1&-2\\ -2&3\end{bmatrix}.$$
Also, we see that
$$H_0=-\frac{1}{54}\begin{bmatrix}0&7\\7&8\end{bmatrix}$$
is such that
$$AH_0 + H_0A^T + M_0 = I.$$
Putting
$$W = \begin{bmatrix} -A^T & 0 \\ M_0 & A\end{bmatrix},$$
we have
$$\exp(uW)\begin{bmatrix}I\\ H_0\end{bmatrix} = \begin{bmatrix}W_1\\ W_2\end{bmatrix} = \begin{bmatrix}I\\H\end{bmatrix}W_1 $$
where we find with the aid of Mathematica that
$$H(u) = W_2W_1^{-1} = \frac{1}{108}\begin{bmatrix}
 27 \left(-24 u-5 e^{-2 u}+9 e^{2 u}-4\right) & -432 u-135 e^{-2 u}+81 e^{2 u}+40 \\
 -432 u-135 e^{-2 u}+81 e^{2 u}+40 & -216 u-135 e^{-2 u}+27 e^{2 u}+92 \\
\end{bmatrix}
.$$

And (again with Mathematica), we find
$$\dot H = e^{uA}e^{uA^T}.$$

\section{Alekseevsky metrics}\label{2x2case}
Until now, we have looked at the Sachs equation for a Brinkmann metric.  Now we consider instead the Alekseevsky metric $\mathcal G_A(\omega, p)$ \eqref{AlekseevskyConst} whose inputs are constant real $2\times 2$ matrices $\omega=-\omega^T, p=p^T$.  With these data, we shall show that the Sachs equation for the (symmetric) shear tensor $S$ is:
$$\dot S + S^2 - [\omega,S] + p -\omega^2 = 0.$$

As an application, we show how to find the constant (complex) solutions to the Sachs equation for a four-dimensional microcosm ($n=2$).  Then we also determine the symplectic group orbit from this solution.  Our emphasis is on providing explicit analytical formulae.  Finally, we shall derive a formula for the conjugate points along a geodesic, and show that there are infinitely many conjugate points if and only if $p-\omega^2$ has a positive eigenvalue.

\subsection{Sachs equation}\label{SachsEquationsAlekseevsky}

We specialize again to the case of an Alekseevsky metric
\begin{equation}\label{AlekseevskyConst}
\begin{array}{rl}
 \mathcal G_A(\omega,p) &= du\,\alpha - dx^Tdx, \\
 \alpha &= 2\,dv - 2\,x^T\omega\,dx + x^Tpx\,du
\end{array}
\end{equation}
(Here $\omega$ is skew and $p$ is symmetric $n\times n$ real matrices, both real and constant, and $x\in\mathbb R^n$.)

We collect the main results of this section into two theorems:
\begin{theorem}
The Jacobi equation for the metric $\mathcal G_A(\omega,p)$ is
$$\ddot X - 2\omega\dot X + pX = 0,$$
with symplectic form given by
$$\Omega(X,Y) = X^T \dot Y - Y^T \dot X - 2X^T\omega Y.$$
The Sachs equations are
    \begin{equation}\label{SachsAlekseevsky}
      \begin{array}{l}
        \dot S + S^2 - [\omega,S] + p - \omega^2 = 0\\
        \dot L = (S+\omega)L.
      \end{array}
    \end{equation}
\end{theorem}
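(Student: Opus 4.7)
The plan is to leverage Lemma \ref{GAGBIso}, which provides the isometric isomorphism $\mathcal G_A(\omega, p) \cong \mathcal G_B(\omega, p - \omega^2)$ implemented by the change of variable $X = e^{u\omega} x$, where $x$ is the transverse coordinate in the Brinkmann form and $X$ in the Alekseevsky form. This reduces the first two assertions to their known Brinkmann analogues, transported by the coordinate change, while the Sachs equations then follow from a direct substitution of the ansatz $\dot L = (S+\omega)L$ into the Jacobi equation.

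For the Jacobi equation, the Brinkmann form has $\ddot x + e^{-u\omega}(p - \omega^2) e^{u\omega} x = 0$. Substituting $x = e^{-u\omega} X$ and computing
\[ \dot x = e^{-u\omega}(\dot X - \omega X), \qquad \ddot x = e^{-u\omega}(\ddot X - 2\omega \dot X + \omega^2 X), \]
one obtains, after left-multiplication by $e^{u\omega}$ and cancellation of the $\omega^2 X$ terms against $-\omega^2 X$ from $p-\omega^2$, the equation $\ddot X - 2\omega \dot X + pX = 0$. For the symplectic form, the Brinkmann expression (read off from $\omega(X,Y) = g(\dot X, Y) - g(X, \dot Y)$ with the $-dx^Tdx$ transverse metric) is $\omega_B(x,y) = x^T \dot y - y^T \dot x$. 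Using $(e^{u\omega})^T = e^{-u\omega}$, together with the scalar identity $Y^T \omega X = -X^T \omega Y$ coming from the skewness of $\omega$, the cross terms collect into
\[ \omega_B(x,y) = X^T \dot Y - Y^T \dot X - 2 X^T \omega Y = \Omega(X,Y), \]
which is the claimed formula.

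For the Sachs equations, the ansatz $\dot L = (S + \omega) L$ is natural: the Lagrangian condition $L^T \dot L - \dot L^T L - 2 L^T \omega L = 0$ under this ansatz reduces to $L^T(S - S^T) L = 0$, which (for invertible $L$) is satisfied exactly when $S = S^T$. Differentiating $\dot L = (S+\omega)L$ once more gives $\ddot L = \dot S L + (S+\omega)^2 L$, and substituting into the Jacobi equation yields $\dot S + (S+\omega)^2 - 2\omega(S+\omega) + p = 0$. Expanding,
\[ (S+\omega)^2 - 2\omega(S+\omega) = S^2 + S\omega + \omega S + \omega^2 - 2\omega S - 2\omega^2 = S^2 - [\omega, S] - \omega^2, \]
so the equation becomes $\dot S + S^2 - [\omega, S] + p - \omega^2 = 0$, as asserted.

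The only real obstacle is bookkeeping: tracking signs, transposes, and the conjugations by $e^{u\omega}$ correctly, and in particular using the skewness of $\omega$ to convert $Y^T\omega X$ terms into $-X^T\omega Y$ terms when collecting the cross-terms of the symplectic form. A direct approach via computing the Christoffel symbols of $\mathcal G_A$ from scratch would also work but is considerably messier because of the cross term $-2x^T\omega\,dx\,du$ in the metric, so routing through the Brinkmann form via Lemma \ref{GAGBIso} is cleaner.
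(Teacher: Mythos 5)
Your proof is correct, but it takes a genuinely different route from the paper. The paper works forwards from the metric: it proves the Sachs system \eqref{SachsAlekseevsky} as the necessary and sufficient condition for the explicit coordinate change $v = V + \tfrac12 x^TS(u)x$, $x=L(u)X$ to bring $\mathcal G_A(\omega,p)$ into Rosen form (Theorem \ref{AlekseevskyToRosen}, by forcing the $du\,dX$ and $du^2$ terms of the transformed metric to vanish), then obtains the Jacobi equation $\ddot X - 2\omega\dot X + pX=0$ by differentiating $\dot L=(S+\omega)L$ and using the first Sachs equation, and finally checks by direct differentiation that $\Omega$ is conserved along solutions. You instead transport the already-established Brinkmann data through the isometry $X=e^{u\omega}x$ of Lemma \ref{GAGBIso} (using $\mathcal G_A(\omega,p)\cong\mathcal G_B(\omega,p-\omega^2)$) to get the Jacobi equation and the formula for $\Omega$, and then derive the Sachs system in the opposite direction: writing $\dot L=(S+\omega)L$, the Lagrangian condition for $\Omega$ forces $S=S^T$, and substitution into the Jacobi equation gives $\dot S+S^2-[\omega,S]+p-\omega^2=0$; all the algebra you display checks out. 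What each approach buys: the paper's computation simultaneously delivers the explicit Rosen conversion of the Alekseevsky metric (the actual content of Theorem \ref{AlekseevskyToRosen}, used later), whereas your route is shorter, reuses the Brinkmann theory, and in fact pins down $\Omega$ as the intrinsic form $g(\dot X,Y)-g(X,\dot Y)$ transported from the parallel Brinkmann frame, which is slightly more than the paper's Lemma \ref{AlekseevskySymplectic} verifies (constancy only). Two small points you rely on tacitly and should state: the isometry of Lemma \ref{GAGBIso} fixes the central null geodesic and its affine parameter and acts on abreast Jacobi fields (taken modulo $\partial_v$) precisely by $x\mapsto e^{u\omega}x$, since the quadratic-in-$x$ shift of $v$ has vanishing differential along $x=0$; and the Brinkmann transverse coordinate frame is parallel along the central geodesic, so the ordinary dots in $x^T\dot y-y^T\dot x$ are legitimate covariant derivatives there.
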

(This theorem is the combination of Theorem \ref{AlekseevskyToRosen} and Lemma \ref{AlekseevskySymplectic} below.)

\subsection{Conversion to Rosen form}
This section recounts one of the constructions of \cite{SEPI} which is particularly perspicacious in the setting of an Alekseevsky microcosm $\mathcal G_A(\omega,p)$ given as in \eqref{AlekseevskyConst}.  We show how to construct the Rosen form of the metric in this case.  To this end, consider coordinate changes of the form:
\begin{equation}\label{rosenCoordSys}
  v = V + 2^{-1}x^TS(u)x,  \quad  x = L(u)X
\end{equation}
where $S, L$ are given endomorphisms of $\mathbb X$ (smooth in $u$), with $S$ symmetric and $L$ invertible.
\begin{theorem}\label{AlekseevskyToRosen}
  The following are equivalent:
  \begin{itemize}
  \item The metric \eqref{AlekseevskyConst} transforms to a Rosen metric under \eqref{rosenCoordSys}:
    \begin{equation*}
      \mathcal G_A = 2\,du\,dV - dX^Th(u)dX, \quad \text{with\ } h(u)=L(u)^TL(u);
    \end{equation*}
  \item $L$ and $S$ satisfy the system of Sachs equations \eqref{SachsAlekseevsky}:
    \begin{equation}
      \begin{array}{l}
        \dot S + S^2 - [\omega,S] + p - \omega^2 = 0\\
        \dot L = (S+\omega)L.
      \end{array}
    \end{equation}
  \end{itemize}
\end{theorem}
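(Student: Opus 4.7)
The plan is a direct substitution: plug the coordinate change \eqref{rosenCoordSys} into the Alekseevsky metric \eqref{AlekseevskyConst}, expand everything in the new coordinates $(u,V,X)$, and read off the conditions required for the cross-terms $du\,dX$ and $du^2$ to vanish. The $dX\,dX$-part is automatic: from $x = LX$ we get $dx = \dot L X\,du + L\,dX$, so the $-dX^T\,dX$ piece in new coordinates contributes $-dX^T L^T L\,dX = -dX^T h\,dX$ as required. Similarly, from $v = V + \tfrac12 x^TSx$ (and $S$ symmetric) we get $dv = dV + x^TS\,dx + \tfrac12 x^T\dot S x\,du$, so the coefficient of $du\,dV$ is exactly $2$.

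The crux is therefore to collect the coefficients of $du\,dX$ and of $du^2$. The $du\,dX$ coefficient, after substituting $x=LX$, comes out to $2X^T\bigl(L^T(S-\omega)L - \dot L^T L\bigr)\,dX$. Since $L$ is invertible, this vanishes if and only if $L^T(S-\omega)=\dot L^T$; transposing, using $S^T=S$ and $\omega^T=-\omega$, gives exactly $\dot L=(S+\omega)L$. So the $du\,dX$ term cancels iff the second Sachs equation holds.

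Assuming $\dot L=(S+\omega)L$, the $du^2$ coefficient becomes $X^T L^T M L X$ with
\[
M = \dot S + 2S(S+\omega) - 2\omega(S+\omega) - (S+\omega)^T(S+\omega) + p.
\]
Expanding, using $(S+\omega)^T = S-\omega$, and simplifying,
\[
M = \dot S + S^2 - [\omega,S] + p - \omega^2.
\]
Since $L$ is invertible and $X$ is arbitrary, this expression vanishes identically in $X$ iff $M=0$, which is precisely the first Sachs equation. Thus (ii) $\Longrightarrow$ (i); conversely, the expansion identifies both coefficients term by term, so the vanishing of the $du\,dX$ and $du^2$ parts forces the two Sachs equations.

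The only step that requires some care is the simplification of $M$: the cross-terms $S\omega$ and $\omega S$ must be tracked with the correct signs, using that $\omega$ is skew and $S$ is symmetric so that $(S+\omega)^T(S+\omega) = (S-\omega)(S+\omega) = S^2 + [S,\omega] - \omega^2$. No step presents a real obstacle; the whole argument is a bookkeeping exercise to show the substitution converts the Alekseevsky metric into Rosen form exactly when the Sachs system is satisfied.
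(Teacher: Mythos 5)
Your proof is correct and follows essentially the same route as the paper: substitute the coordinate change, note the $du\,dV$ and $dX^TdX$ parts come out right automatically, and show that vanishing of the $du\,dX$ coefficient is equivalent (via invertibility of $L$ and the symmetric/skew split) to $\dot L=(S+\omega)L$, after which the $du^2$ coefficient reduces to $\dot S+S^2-[\omega,S]+p-\omega^2$. The algebraic simplification $(S+\omega)^T(S+\omega)=(S-\omega)(S+\omega)$ and the resulting $M$ match the paper's computation, so there is nothing to add.
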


\begin{proof}
Let $B$ be such that
\[ \dot L= BL.\]  We want the metric $\mathcal G_A(\omega,p)$ to be of Rosen form in the coordinates $V,X$.  
In particular, the $du\,dX$ term is zero, which is
$$-dX^TL^TBx + dX^TL^TSx - 2 x^T\omega L\,dX - x^TB^TL\,dX + x^TSL\,dX=0$$
$$- 2 x^T\omega L\,dX - 2x^TB^TL + 2x^TSL=0$$
$$-\omega - B^T + S=0.$$
From the last equation, since $S$ is symmetric and $\omega$ is skew, we find
$$S=B_\odot,\quad \omega=B_\wedge.$$
The $du^2$ term must also vanish, giving (with $B=S+\omega$):
\begin{align*}
  0 &= p - \omega B + B^T\omega + SB + B^TS - B^TB + \dot S\\
    &= p - \omega^2 - [\omega,S] + S^2 + \dot S.
\end{align*}

So \eqref{SachsAlekseevsky} now follows directly.
\end{proof}

\subsection{Jacobi equation and symplectic form}
Differentiating the second equation of \eqref{SachsAlekseevsky},
\begin{align*}
  \ddot L &= -(S^2 - [\omega,S] + p - \omega^2)L + (S+\omega)^2L\\
          &=(-p + 2\omega^2+2\omega S)L\\
          &=2\omega\dot L - pL.
\end{align*}
So the Jacobi equation is
\begin{equation}\label{AlekseevskyJacobi}
  \ddot X - 2\omega\dot X + pX = 0.
\end{equation}

\begin{lemma}\label{AlekseevskySymplectic}
The symplectic form on solutions $X,Y$ to \eqref{AlekseevskyJacobi} is
$$\Omega(X,Y) = X^T \dot Y - Y^T \dot X - 2X^T\omega Y.$$
\end{lemma}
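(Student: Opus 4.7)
The plan is to verify the three defining properties of a symplectic form directly: skew-symmetry, conservation along $u$, and non-degeneracy. Since $\Omega$ is a pointwise formula in $X,Y,\dot X,\dot Y$, all three reduce to short calculations using $\omega^T=-\omega$, $p^T=p$, and the Jacobi equation \eqref{AlekseevskyJacobi}.

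First, I would check that $\Omega$ is well-defined as a bilinear form on solutions $X,Y$, then verify skew-symmetry. The first two terms $X^T\dot Y - Y^T\dot X$ are patently alternating. For the coupling term, since $X^T\omega Y$ is a scalar, $X^T\omega Y=(X^T\omega Y)^T=-Y^T\omega X$, so $-2X^T\omega Y = 2Y^T\omega X$, and skew-symmetry follows.

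Next, I would compute $\frac{d\Omega}{du}$ and simplify. The terms $\dot X^T\dot Y - \dot Y^T\dot X$ cancel as transposed scalars. Substituting $\ddot X = 2\omega\dot X - pX$ and $\ddot Y = 2\omega \dot Y - pY$ from \eqref{AlekseevskyJacobi} produces
\begin{align*}
\dot\Omega &= X^T(2\omega\dot Y - pY) - Y^T(2\omega \dot X - pX) - 2\dot X^T\omega Y - 2 X^T\omega\dot Y.
\end{align*}
The $2X^T\omega\dot Y$ pieces cancel outright; the $pX,pY$ pieces cancel since $X^TpY=Y^TpX$ by symmetry of $p$; and the remaining pair $-2Y^T\omega\dot X - 2\dot X^T\omega Y$ cancels by the same scalar-transpose trick that established skew-symmetry. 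Hence $\dot\Omega\equiv 0$, which makes $\Omega$ well-defined as a bilinear form on the (finite-dimensional) solution space.

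Finally, for non-degeneracy, I would evaluate $\Omega$ at $u=0$ on Cauchy data $(X(0),\dot X(0))=(x_0,\dot x_0)$, $(Y(0),\dot Y(0))=(y_0,\dot y_0)$, so that $\Omega$ is represented on $\mathbb R^{2n}$ by the block matrix
$$M = \begin{pmatrix} -2\omega & I \\ -I & 0 \end{pmatrix}.$$
Its kernel, from $-2\omega y_0+\dot y_0=0$ and $-y_0=0$, is trivial, so $M$ is invertible and $\Omega$ is non-degenerate. Since the solution space of \eqref{AlekseevskyJacobi} is $2n$-dimensional, this shows $\Omega$ is a symplectic form. No step is genuinely hard; the only thing to be careful about is tracking the sign conventions in the scalar-transpose manipulations involving $\omega$ so that the cross-terms really do cancel in $\dot\Omega$.
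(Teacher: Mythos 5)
Your proof is correct and its core step — differentiating $\Omega$, substituting $\ddot X = 2\omega\dot X - pX$, $\ddot Y = 2\omega\dot Y - pY$, and cancelling via the symmetry of $p$ and antisymmetry of $\omega$ — is exactly the paper's argument. The additional explicit checks of skew-symmetry and of non-degeneracy via the block matrix $\begin{pmatrix} -2\omega & I \\ -I & 0 \end{pmatrix}$ on Cauchy data are fine and merely make explicit what the paper leaves implicit.
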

\begin{proof}
  We have to show that the derivative of $\Omega$ with respect to $u$ is constant.  We have
  $$\ddot X = 2\omega\dot X - pX,\qquad \ddot Y = 2\omega\dot Y - pY.$$
  We have, taking into account the symmetry of $p$ and antisymmetry of $\omega$,
  \begin{align*}
    \frac{d}{du}\Omega(X,Y)
    &= X^T \ddot Y - Y^T \ddot X - 2\dot X^T\omega Y - 2X^T\omega \dot Y\\
    &= 2X^T \omega\dot Y - 2Y^T\omega\dot X + 2 Y^T\omega\dot X - 2X^T\omega\dot Y = 0.
  \end{align*}
\end{proof}

We put:
\[ Z = \begin{bmatrix} L\\\dot L\end{bmatrix}\quad .\]
Then we get:
\[ \dot Z =   \begin{bmatrix} \dot L\\\ddot L\end{bmatrix} =    \begin{bmatrix} \dot L\\2\omega \dot L  - pL\end{bmatrix} =    \begin{bmatrix} 0&I\\-p&2\omega\end{bmatrix}    \begin{bmatrix} L\\\dot L\end{bmatrix}, \]
\[ \dot Z = QZ, \quad  Q =   \begin{bmatrix} 0&I\\-p&2\omega\end{bmatrix}.\]

\subsection{Solutions for $n=2$}
We now specialize to the case of $n=2$ for the rest of the section.  Put
$$\omega = \begin{bmatrix}0&-w\\w&0\end{bmatrix},\qquad p = \begin{bmatrix} A+B& C\\ C& A-B\end{bmatrix},\qquad S = \begin{bmatrix}s+u&t\\ t&s-u\end{bmatrix}.$$
Observe that the Einstein stress-energy-momentum is represented by the scalar $E=\op{tr}(p-\omega^2) = 2(A+w^2)$.  Thus the energy density is positive if $A > -w^2$, negative if $A < -w^2$, and vacuum if $A=w^2$.

We now examine the Sachs equation, for {\em constant} $S$:
\[S^2 - [\omega,S] + p - \omega^2 = 0.\]
Taking traces, we see that if there is a \textbf{real solution} for $S$, then $\textrm{tr}(S^2) \ge 0$, so $E  \le 0$, so the energy must be negative or zero.  The Sachs equation expands to
\[\begin{bmatrix}X-Y & Z\\ Z& X+Y\end{bmatrix} = 0\]
\[X = A + s^2 + t^2+u^2+w^2,\quad Y = B+2su+2tw,\quad Z = C+2st-2uw.\]
So the triple $(s, t, u)$ is required to solve the following equations, given the real numbers $A, B,  C$ and $w$:
\[  B + 2su + 2zt=0, \quad C + 2st - 2wu =0, \quad  A + s^2 + z^2 + t^2 + u^2=0.\]

When these equations hold, we note, in particular,  the relations:
\[ B^2 + C^2 =  4(st + wu)^2 + 4(su - wt)^2 = 4(s^2 + w^2)(t^2 + u^2), \]
\[  A^2 - B^2 - C^2 = \left(s^2 + w^2   -  t^2 -u^2\right)^2.\]
There is always a complex solution.  If $B^2+C^2 > 0$, then $s^2+w^2\ne 0$, and the unique solution of
$$B + 2su + 2wt=0, \quad C +  2st - 2wu=0$$
is
$$u = \frac{-Bs+Cw}{2(s^2+w^2)},\quad t = \frac{-(Cs + Bw)}{2(s^2+w^2)}.$$
Then $A^2 = B^2 + C^2 + \left(s^2 + w^2   -  t^2 -u^2\right)^2$ is a quadratic in $(s^2+w^2)^2$ with nonzero constant term, and therefore admits a nonzero solution.  

\subsubsection{Conformally trivial case}
If $p$ is pure trace, then $B$ and $C$ are both zero, so either $s=w=0$ or $u=t=0$.  Suppose that $s=w=0$.  Then the remaining equation is the quartic $A^2=4(s^2+t^2)^2$, which has complex solutions for $s^2+t^2$.

Summarizing, recalling that the energy density is $E=2(A+w^2)$,
\begin{theorem}
  In the conformally trivial case ($B = C = 0$), there is an infinity of solutions, $(s, t, u)$, unless both $A$ and $w$ are non-zero, in which case the only solutions are $(s, t, u) = (\pm \sqrt{-E/2}, 0, 0)$.
\end{theorem}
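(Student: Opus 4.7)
The plan is to substitute $B = C = 0$ into the three scalar Sachs equations recorded just above the theorem, which reduces them to
$$su + wt = 0, \qquad st - wu = 0, \qquad s^2 + t^2 + u^2 + w^2 + A = 0. \quad (\star)$$
The crucial algebraic observation is that the first two equations combine into the factored form
$$(s+iw)(t+iu) = (st-wu) + i(su+wt) = 0, \qquad (s-iw)(t-iu) = 0,$$
so the bilinear part of the system factors completely in the complex variables $s \pm iw$, $t \pm iu$. This replaces a quadratic case analysis with a linear one.

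The solution set of $(\star)$ then decomposes into three (possibly overlapping) branches, and I would intersect each with the remaining quadric equation of $(\star)$. \emph{Branch 1:} $t = u = 0$. The third equation collapses to $s^2 = -A - w^2 = -E/2$, giving the two isolated solutions $(s,t,u) = (\pm\sqrt{-E/2},0,0)$. \emph{Branch 2:} $s = \epsilon iw$ with $w \ne 0$, for a sign $\epsilon \in \{\pm 1\}$. Then the bilinear equations force $t = -\epsilon iu$, and substitution into the third equation yields the identity $A = 0$; so when $A = 0$ and $w \ne 0$, this branch contributes a one-complex-parameter family parameterized by $u \in \mathbb{C}$. \emph{Branch 3:} $w = 0$. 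Then the bilinear equations become $su = st = 0$, so either we fall back into Branch 1 or $s = 0$, and in the latter sub-case the third equation becomes $t^2 + u^2 = -A$, a complex conic with infinitely many solutions for every value of $A$.

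Collecting these cases, infinitely many solutions arise precisely when $w = 0$ (via the conic in Branch 3) or $A = 0$ (via Branch 2), which is the negation of the condition ``both $A$ and $w$ are non-zero.'' In the complementary regime where both $A \ne 0$ and $w \ne 0$, only Branch 1 survives, producing exactly the two solutions $(\pm\sqrt{-E/2},0,0)$. There is no real obstacle beyond the bookkeeping of overlapping branches; the factorization $(s \pm iw)(t \pm iu) = 0$ reduces what would otherwise be a tangled case split over the vanishing loci of the quadratic forms to a simple linear decomposition.
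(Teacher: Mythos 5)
Your proof is correct, and it is in fact more complete than the paper's own treatment, although the underlying idea is the same. The paper works from the identity $B^2+C^2=4(s^2+w^2)(t^2+u^2)$, so with $B=C=0$ it splits into the cases where one of the two factors vanishes; but its text records these cases as ``either $s=w=0$ or $u=t=0$'', which is only the correct reading for real $(s,t,u)$. Over $\mathbb{C}$ --- the setting the theorem requires, since the infinite families are genuinely complex --- the vanishing of $s^2+w^2$ with $w\neq 0$ means $s=\pm iw$, and it is exactly this branch (your Branch~2, where the third equation collapses to $A=0$ and $u$ remains a free complex parameter) that produces the infinite family when $A=0$, $w\neq 0$, and hence the precise dichotomy ``unless both $A$ and $w$ are non-zero''. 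Your factorization $(s\pm iw)(t\pm iu)=0$ is simply the complex factorization of the same product identity, but carrying it through makes the case analysis exhaustive (Branches 1--3 cover everything, overlaps acknowledged) and extracts the constraint $A=0$ explicitly, whereas the paper's sketch leaves that branch implicit and its displayed quartic ($A^2=4(s^2+t^2)^2$) appears to be a typo for the relation coming from $A^2=(s^2+w^2-t^2-u^2)^2$. One cosmetic remark: when $A=-w^2$ (so $E=0$) your ``two'' Branch-1 solutions coincide at $(0,0,0)$; this is still consistent with the statement as written, but worth a parenthetical.
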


\subsubsection{Conformally non-trivial case}
Next, we consider the number of solutions when $(B,C)\ne(0,0)$.  Denote by $|\cdot|$ the determinant of a $2\times 2$ matrix.  We have
\begin{equation}\label{detp}
  |p| = A^2 - B^2 - C^2 = \left(s^2 + w^2   -  t^2 -u^2\right)^2.
\end{equation}
Also, put
\[ F = (A + 2w^2)^2 - |p| = 4w^4 + 4Aw^2 + B^2 + C^2.\]
\begin{theorem}
 When $(B, C) \ne (0, 0)$, the number of solutions for $(s, t, u)$ is finite,  between one and four.   Each solution is determined uniquely once $s$ is known and $(s, t, u)$ is real if and only if $s$ is real.  The solutions for $s$  are all non-real complex numbers if and only if $|p| < 0$; on the other hand all solutions are real if and only $|p| \ge 0$,  $F\ge 0$ and $A \ge w^2$.
\end{theorem}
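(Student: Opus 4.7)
The plan is to exploit the linear structure of the first two Sachs equations to reduce the counting problem to a single quadratic in $y := s^2 + w^2$, and then read off the reality conditions from signs of discriminants. Since $(B,C)\ne(0,0)$, the identity $B^2+C^2 = 4(s^2+w^2)(t^2+u^2)$ recorded just before the theorem forces $s^2+w^2\ne 0$, so the $2\times 2$ linear system in the unknowns $(t,u)$ has nonvanishing determinant $-4(s^2+w^2)$. Cramer's rule then produces the explicit rational expressions for $t,u$ in terms of $s$ already displayed above the theorem. These are real whenever $s$ is real; conversely, since $B,C,w$ are real and $(B,C)\ne(0,0)$, the formulas force $s$ to be real whenever $t$ or $u$ is. This settles the uniqueness claim and the ``real iff $s$ is real'' claim.

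Substituting these expressions into the third equation $A+s^2+w^2+t^2+u^2=0$ and reusing $B^2+C^2 = 4(s^2+w^2)(t^2+u^2)$ yields
$$y+\frac{B^2+C^2}{4y}+A = 0,$$
which rearranges to $4y^2 + 4Ay + (B^2+C^2) = 0$ with discriminant $16|p|$ and roots
$$y_{\pm} = \frac{-A\pm\sqrt{|p|}}{2}.$$
Each root determines $s$ via $s^2 = y_\pm - w^2$, so the count of solutions is at most four. The paragraph just before the theorem guarantees at least one (complex) solution, giving the bound ``between one and four''.

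For the reality dichotomy, note that $y_\pm$ are real precisely when $|p|\ge 0$. If $|p|<0$, then both $y_\pm$ are non-real, so $s^2 = y_\pm - w^2$ is non-real and every root $s$ fails to be real; this gives the first equivalence. If $|p|\ge 0$, all four values of $s$ are real precisely when $y_\pm\ge w^2$ for both signs, i.e., $\pm\sqrt{|p|}\ge A+2w^2$. The $(-)$-sign inequality is the binding one: it forces a sign condition on $A+2w^2$, after which squaring (legitimate under that sign condition) gives $(A+2w^2)^2\ge |p|$, i.e., $F\ge 0$. Conversely, $|p|\ge 0$, $F\ge 0$, and the appropriate sign condition together imply both $y_\pm\ge w^2$, recovering the stated equivalent conditions.

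The main obstacle I expect is the reality bookkeeping in the case $|p|\ge 0$: one must carefully track the sign constraint that licenses squaring the square-root inequality, and then translate it back into the intrinsic scalar invariants $|p|$, $F$, and the trace/energy condition relating $A$ and $w^2$. A subsidiary subtlety is the degenerate strata where $y_+=y_-$ or where some $y_\pm$ equals $w^2$ exactly (so $s=0$ and $(t,u)$ are recovered from the linear system at that limit); these collapse cases must be folded into the count but do not alter the reality conclusion.
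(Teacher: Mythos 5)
Your proposal is essentially the paper's own proof: the same elimination of $(t,u)$ via the formulas \eqref{tuBC}, followed by reduction to a single quadratic with real coefficients --- your $4y^2+4Ay+(B^2+C^2)=0$ in $y=s^2+w^2$ is exactly the paper's quartic \eqref{squartic} rewritten as \eqref{squadratic} --- with the same discriminant $16|p|$ and the same sign analysis of its roots. The two loose ends in your write-up are also present in the paper's proof, so they are not new gaps on your side: the sign condition actually forced is $A+2w^2\le-\sqrt{|p|}$ (equivalently $F\ge 0$ together with $A\le-2w^2$), which you leave as ``the appropriate sign condition'' and which does not literally match the statement's ``$A\ge w^2$''; and only the ``if'' half of the $|p|<0$ equivalence is argued (when $|p|\ge 0$ both roots $y_\pm$ may still lie below $w^2$, making every $s$ purely imaginary and hence non-real).
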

\begin{proof}
By assumption, we have $(B, C) \ne (0, 0)$, so $0 < B^2 + C^2  = 4(s^2 + z^2)(t^2 + u^2)$. So we have $z = s^2 + w^2 \ne 0$, so $t$ and $u$ are determined directly in terms of $s$: 
\begin{equation}\label{tuBC}
  t =  \frac{-(Bw + Cs)}{2z}, \hspace{7pt} u =  \frac{-Bs + Cw}{2z}.
\end{equation}
Note that if $s$ is real, then so are $z$, $t$ and $u$.

Eliminating $t,u$ from the Sachs equations, we have the quadratic equation in $s^2$, with real coefficients:
\begin{equation}\label{squartic}
  4s^4+4(A+2w^2)s^2 + (B^2+C^2+4Aw^2+4w^4) = 0
\end{equation}
whose discriminant is $16|p|$.  Thus if $|p|<0$, there are four distinct complex roots.  All roots are real if and only if $|p|\ge 0$ and $A+2w^2\le -\sqrt{|p|}$, i.e., $(A+2w^2)^2 - |p|\ge 0$.
\end{proof}

Equation \eqref{squartic} can be rewritten in the variable $x=s^2+w^2$ as
\begin{equation}\label{squadratic}
  4x^2 + 4Ax + B^2+C^2=4x^2 + 2\op{tr}(p)\,x  + 2^{-1}\op{tr} (\tilde p^2)=0
\end{equation}
where $\tilde p$ is the trace free part of $p$.  This quadratic equation is also satisfied by the square of the trace-free part of $S$, which we show in Lemma \ref{QuadraticRootsLemma}.







\subsection{Symplectic group orbit}\label{GroupOrbit2x2}
Now, supposing that we have a constant solution $S$ to the Sachs equation, we now put $S=s+\Sigma_\circ$ where $\Sigma_\circ$ is symmetric tracefree and $s$ is a scalar, and $\Sigma = \Sigma_\circ + \omega$.  We solve the equation $\dot L(u) = (s+\Sigma) L(u)$, with $L$ invertible.  We may take $L(u)=e^{(s+\Sigma) u}$.  Given $L(u)$, we then solve the equation $\dot H(u) = (L^T(u)L(u))^{-1} = L(-u)L(-u)^T$, with $H(u)$ symmetric.  Then $H(u)$ represents the orbit of the symplectic group, discussed in the last section.

We use the fact that the algebra $\mathcal A$ of $2\times 2$ matrices is the central algebra of split quaternions, with generators
\[J = \begin{bmatrix}0&-1\\1&0\end{bmatrix},\quad K=\begin{bmatrix}1&0\\0&-1\end{bmatrix},\]
\[JK=-KJ,\quad J^2=-1,\quad K^2=1.\]
A split quaternion $X=a + b\,J + c\,K +d\,JK$, where $a,b,c,d$ are (possibly complex) numbers is called {\em trace-free} if $a=0$.  Note that for a trace-free quaternion $X$, we have $X^T = JXJ$.  Also, if $X$ is trace-free, then $X^2$ is central.

We shall often deal in this section with analytic functions of split-quaternionic parameters.  By an {\em entire function} on $\mathbb C$ (or, more generally, $\mathbb C^n$), we understand has its ordinary meaning as a complex-valued function $f(w)$ of $w\in\mathbb C^n$ having a globally convergent Taylor series.  Note that if $f(w)$ vanishes identically on a hyperplane $a.w=0$, then $f(w) = (a.w)g(w)$ for some entire function $g(w)$, which we will then write as $f(w)/(a.w)$.

Decompose the exponential as
\[\exp(z) = \cosh z + z\sigma(z)\]
where $\cosh z$ and $\sigma(z) = z^{-1}\sinh z$ are even entire functions.  We have
\[d\cosh z = z\sigma\,dz, \quad d(z\sigma) = \cosh z\,dz.\]
We write $\cosh z = 1 + 2^{-1}z^2\gamma(z)$, where again $\gamma$ is even and entire.  We have
\[\cosh^2(z) = \frac12(\cosh(2z)+1),\quad \sigma^2(z) = \gamma(2z),\quad \cosh(z)\sigma(z)=\sigma(2z)\]

Note that $\exp(JxJ) = \exp(-J^{-1}xJ)=J^{-1}\exp(-x)J=-J\exp(-x)J$.  Put $L=\exp(u(s+\Sigma))$, so that
\[\dot L = (s+\Sigma) L.\]
We have to integrate
\begin{align*}
  \dot H &= \exp(-u(S+\omega))\exp(-u(S+\omega)^T)\\
         &= \exp(-2us)\exp(-u\Sigma)\exp(-uJ\Sigma J)\\
         &= -\exp(-2us)\exp(-u\Sigma)J\exp(u\Sigma)J\\
         &= -\exp(-2us)\left(\cosh(u\Sigma) - u\Sigma\sigma(u\Sigma)\right)J\left(\cosh(u\Sigma) + u\Sigma\sigma(u\Sigma)\right)J\\
         &=  \exp(-2us)\left(\cosh^2(u\Sigma) + u^2\sigma^2(u\Sigma)\Sigma J\Sigma J + u\cosh(u\Sigma)\sigma(u\Sigma)[\Sigma,J]J\right).
\end{align*}
With the initial condition $H(0)=0$, we have
\[H = e^{-2su}\left(A + B\Sigma J\Sigma J + C[\Sigma,J]J\right)\]
where
\begin{align*}
  A &= \frac{1}{4s(s^2-\Sigma^2)}\left( (\Sigma^2-s^2)(1-e^{2su}) + s^2(e^{2su}-\cosh(2u\Sigma)) - s\Sigma\sinh(2u\Sigma)\right)\\
  B &= \frac{1}{4s\Sigma^2(s^2-\Sigma^2)}\left( -(\Sigma^2-s^2)(1-e^{2su}) + s^2(e^{2su}-\cosh(2u\Sigma)) - s\Sigma\sinh(2u\Sigma)\right)\\
  C &= \frac{1}{4\Sigma(s^2-\Sigma^2)}\left( e^{2su}\Sigma - \Sigma\cosh(2u\Sigma) + s\sinh(2u\Sigma)\right)
\end{align*}
and each quotient is well-defined as an entire function of all variables, and in particular is well-defined if $s=0$, or if $\Sigma$ or $\Sigma\pm s$ is singular.  For example, the numerator of $A$ is an analytic function of $z=s$ and $w=\Sigma$, which vanishes identically at $z=\pm w$, and so is exactly divisible by $z^2-w^2$.  Note that $\Sigma,A,B,$ and $C$ all commute.

Now recall $\Sigma=\Sigma_\circ + \omega$ is the decomposition into symmetric and skew parts, and
\[\Sigma_\circ J + J\Sigma_\circ = \omega J-J\omega = \Sigma_\circ\omega + \omega\Sigma_\circ = 0.\]
We then have
\begin{align*}
  \Sigma J\Sigma J &= (\Sigma_\circ^2-\omega^2) + 2\omega\Sigma_\circ\\
  [\Sigma,J]J &= -2\Sigma_\circ.
\end{align*}
So
\[H = e^{-2su}\left[(A + (\Sigma_\circ^2-\omega^2)B) + 2B\omega\Sigma_\circ - 2C\Sigma_\circ\right].\]

\subsection{Conjugate points}\label{ConjugatePoints2x2}
To find the conjugate points, we compute $|H| = HH^*$:
\begin{align*}
  e^{4su}|H| &= A^2 + 2AB(\Sigma_\circ^2-\omega^2) + (\Sigma_\circ^2-\omega^2)^2B^2 - 4B^2(\omega\Sigma_\circ)^2 - 4C^2\Sigma_\circ^2\\
                &=A^2 + 2AB(\Sigma_\circ^2-\omega^2)+ (\Sigma_\circ^2+\omega^2)^2B^2 - 4C^2\Sigma_\circ^2\\
                &= ( A - B(\Sigma_\circ^2+\omega^2))^2 + 4(AB - C^2)\Sigma_\circ^2\\
                &= ( A - B\Sigma^2)^2 + 4(AB - C^2)\Sigma_\circ^2.
\end{align*}
Now, we have
\begin{align*}
  A - B\Sigma^2 &= \frac{e^{2su}-1}{2s}\\
  4(AB-C^2) &= \frac{s^2-\Sigma^2 + \Sigma^2\cosh(2su) - s^2\cosh(2u\Sigma)}{2s^2\Sigma^2(s^2-\Sigma^2)}e^{2su}
\end{align*}
\begin{align*}
  4s^2\Sigma^2(s^2-\Sigma^2)&\left[(A - B\Sigma^2)^2 + 4(AB - C^2)\Sigma_\circ^2\right] \\
                            &=2e^{2su}\left( (\Sigma^2-s^2)\omega^2 + \Sigma^2(s^2-\omega^2)\cosh(2su) - s^2(\Sigma^2-\omega^2)\cosh(2u\Sigma)\right)\\
                            &=8s^2\Sigma^2e^{2su}\left((s^2-\omega^2)\gamma(2su) - (\Sigma^2-\omega^2)\gamma(2u\Sigma)\right)
\end{align*}
where $\gamma(z) = 2(\cosh z - 1)/z^2$.

From this last display, we conclude at once:
\begin{lemma}\label{ConjugatePointsLemma}
  The real number $u\not=0$ is a conjugate point of $u=0$ for the central null geodesic of the Alekseevsky microcosm $\mathcal G_\alpha(p,\omega)$ if and only if
  \begin{equation}\label{ConjugatePointsLemmaEqn}
    \frac{(s^2-\omega^2)\gamma(2su) - (\Sigma^2-\omega^2)\gamma(2u\Sigma)}{s^2-\Sigma^2} = 0
  \end{equation}
  where $\gamma(z)$ is the even entire function such that $z^2\gamma(z) = 2(\cosh(z)-1)$, and the quotient is exact division of power series.  In particular, when $s^2\ne\Sigma^2$, the condition \eqref{ConjugatePointsLemmaEqn} is
  \[(s^2-\omega^2)\gamma(2su) = (\Sigma^2-\omega^2)\gamma(2u\Sigma).\]
  In the limiting case of $\Sigma^2=s^2\ne 0$, \eqref{ConjugatePointsLemmaEqn} becomes
  \[u(s^2-\omega^2)\cosh(s u)\sinh(s u) + s\omega^2\sinh(s u)^2=0.\]
  Finally, when $\Sigma^2=s=0$, \eqref{ConjugatePointsLemmaEqn} is
  \[2\omega^2u^2=3,\]
  which has no real solutions.
\end{lemma}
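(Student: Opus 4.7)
The plan is to identify conjugate points to $u=0$ with zeros of $\det H(u)$, since $H(u)$ parametrizes the Lagrangian subspace of abreast Jacobi fields vanishing at the origin, as constructed in \S\ref{GroupOrbit2x2}. For $n=2$, this reduces to the vanishing of the determinant of a single symmetric $2\times 2$ matrix. The key computation has essentially already been carried out in the display immediately preceding the lemma, namely
\[
4s^2\Sigma^2(s^2-\Sigma^2)\,e^{4su}\,|H| \;=\; 8s^2\Sigma^2\,e^{2su}\bigl[(s^2-\omega^2)\gamma(2su) - (\Sigma^2-\omega^2)\gamma(2u\Sigma)\bigr].
\]
The crucial structural observation is that, viewed as an analytic function of $\lambda=s^2$ and $\mu=\Sigma^2$, the bracketed expression has the form $\phi(\lambda)-\phi(\mu)$, where $\phi(\lambda) = (\lambda-\omega^2)\gamma(2u\sqrt{\lambda})$ is entire in $\lambda$ (using that $\gamma$ is even). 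It is therefore exactly divisible by $s^2-\Sigma^2$ as an entire function; call the exact quotient $\Phi(s,\Sigma,u)$. This $\Phi$ governs the conjugate point condition globally, including at the degenerate loci.

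In the generic regime $s\neq 0$, $\Sigma^2\neq 0$, $s^2\neq \Sigma^2$, one simply cancels the common prefactors in the displayed identity and reads off $|H|=0$ iff $(s^2-\omega^2)\gamma(2su) = (\Sigma^2-\omega^2)\gamma(2u\Sigma)$, which is the form stated in the lemma. The substantive content is then the careful limiting analysis on the degenerate loci. For $s^2 = \Sigma^2 \neq 0$, I would compute $\Phi\bigr|_{s^2 = \Sigma^2} = \phi'(s^2)$ by L'H\^opital's rule applied in the parameter $\Sigma$, then simplify $\phi'(s^2)$ explicitly using $\gamma(z) = 2(\cosh z - 1)/z^2$ together with the double-angle identities $\cosh(2us) - 1 = 2\sinh^2(us)$ and $\sinh(2us) = 2\sinh(us)\cosh(us)$ to reach the trigonometric relation appearing in the statement. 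For $s = \Sigma^2 = 0$, I would extract $\Phi(0,0,u)$ directly from the Taylor expansion of $\phi(\lambda)$ about the origin, using $\gamma(z) = 1 + z^2/12 + O(z^4)$; the value at the origin is just the linear coefficient of $\phi$ in $\lambda$, which evaluates to the stated quadratic polynomial in $u$.

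The main subtlety I anticipate is bookkeeping: ensuring at each stage that dividing through by $s^2 - \Sigma^2$ or $s^2\Sigma^2$ in the displayed identity is legitimate as exact polynomial (or power series) division, so that the resulting entire quotient $\Phi$ really does detect $\det H(u) = 0$ at points where the naive identity is $0/0$. Once this divisibility is justified once and for all by the $\phi(\lambda)-\phi(\mu)$ observation above, the remaining work is routine power series manipulation and trigonometric simplification, with the non-existence of real solutions in the last case following from $\omega^2 < 0$.
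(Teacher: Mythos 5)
Your proposal follows essentially the same route as the paper: the paper's own proof consists of reading the lemma off the displayed computation of $e^{4su}|H|$ in the preceding subsection, identifying conjugate points with zeros of $\det H(u)$ and handling the degenerate loci by interpreting the quotient as exact division of entire functions, which is precisely the $\phi(\lambda)-\phi(\mu)$ divisibility observation plus the limiting analysis ($\phi'(s^2)$, Taylor expansion at the origin) that you describe. One caveat: if you actually carry out the limits, the case $\Sigma^2=s^2\neq 0$ gives $us(s^2-\omega^2)\sinh(su)\cosh(su)+\omega^2\sinh^2(su)=0$ and the case $\Sigma^2=s=0$ gives $\omega^2u^2=3$, which differ from the formulas displayed in the lemma by what appear to be a misplaced factor of $s$ and a stray factor of $2$ (the qualitative conclusion, e.g.\ the nonexistence of real solutions in the last case, is unaffected), so your claim that the computation lands exactly on the stated expressions should be amended accordingly.
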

  Here as above, the trace-free matrix $\Sigma$ and scalar $s$ are related by $s+\Sigma=S+\omega$ with $S$ a solution to $S^2 - [\omega,S] + p-\omega^2=0$.  Note that we can use the doubling formula $\gamma(2z) = \sigma^2(z)$ to rewrite \eqref{ConjugatePointsLemmaEqn} in terms of $\sigma^2$.

Lemma \ref{ConjugatePointsLemma} expresses the condition for conjugacy in terms of the quantities $s,\Sigma$ (which are determined by the (constant) solution $S$ to the Sachs equation).  It is desirable to reformulate it in terms of $p$ and $\omega$, the original real $2\times 2$ matrices determining the Alekseevsy metric.  To this end, 
\begin{lemma}\label{QuadraticRootsLemma}
  Let
  \[ x = (S-s)^2=\Sigma_\circ^2 = \Sigma^2-\omega^2, \quad y = s^2 -\omega^2.\]
  Then
  \[x + y = -P,\quad 4xy = \tilde p^2.\]
\end{lemma}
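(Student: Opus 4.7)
The plan is to exploit the very special algebra of $2\times 2$ matrices: every symmetric tracefree $2\times 2$ matrix squares to a scalar multiple of the identity, as does every skew $2\times 2$ matrix, so the expressions $x = \Sigma_\circ^2$ and $y = s^2 - \omega^2$ are both well-defined scalars. Moreover, in dimension two, a symmetric tracefree matrix anticommutes with a skew matrix, i.e.\ $\omega \Sigma_\circ + \Sigma_\circ \omega = 0$. (Both facts are recorded in the split-quaternion formalism of \S\ref{GroupOrbit2x2}, or can be verified directly from $\operatorname{tr}(\omega\Sigma_\circ)=0$.) These observations will make all the quantities in play commute, and reduce matrix identities to scalar ones. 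As an immediate corollary, $\Sigma^2 = (\Sigma_\circ+\omega)^2 = \Sigma_\circ^2 + \omega^2$, which gives the three expressions for $x$ listed in the lemma.

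Next, I would substitute $S = s + \Sigma_\circ$ into the algebraic Sachs equation $S^2 - [\omega,S] + p - \omega^2 = 0$. Because $s$ commutes with $\omega$ while $\Sigma_\circ$ anticommutes with $\omega$, the commutator reduces to $[\omega,S] = 2\omega\Sigma_\circ$, and $S^2 = s^2 + \Sigma_\circ^2 + 2s\Sigma_\circ$. Thus the Sachs equation takes the form
$$(s^2 + \Sigma_\circ^2 - \omega^2) + p + 2(s-\omega)\Sigma_\circ = 0.$$
Now split into scalar and tracefree parts. Since $\Sigma_\circ$ is tracefree and $\omega,s$ commute with it trivially, the term $2(s-\omega)\Sigma_\circ$ is tracefree, while $s^2 + \Sigma_\circ^2 - \omega^2$ is scalar. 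The scalar part reads $x + y + P = 0$, where $P = \tfrac12\operatorname{tr}(p)$ (matching the coefficient of $x$ in the quadratic \eqref{squadratic}); the tracefree part reads $\tilde p = -2(s-\omega)\Sigma_\circ$.

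Finally, I would compute $\tilde p^2$ by pushing the inner $\Sigma_\circ$ past $(s-\omega)$ using the anticommutation relation: $\Sigma_\circ(s-\omega) = (s+\omega)\Sigma_\circ$, so
$$\tilde p^2 = 4(s-\omega)\Sigma_\circ(s-\omega)\Sigma_\circ = 4(s-\omega)(s+\omega)\Sigma_\circ^2 = 4(s^2-\omega^2)\Sigma_\circ^2 = 4yx,$$
as required. There is no substantive obstacle: once one recognizes that tracefree symmetric and skew $2\times 2$ matrices anticommute (making all relevant quantities mutually commuting scalars), the lemma is a two-line algebraic rearrangement, with the only conceptual content being the identification of the scalar and tracefree parts of the Sachs equation with the Vieta relations for the quadratic \eqref{squadratic}.
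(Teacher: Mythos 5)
Your proof is correct and follows essentially the same route as the paper's: substitute $S=s+\Sigma_\circ$ into the algebraic Sachs equation, split into scalar and tracefree parts (using that $\Sigma_\circ$ and $\omega$ anticommute in dimension two, so all relevant squares are scalars), obtaining $x+y=-P$ and $\tilde p=-2(s-\omega)\Sigma_\circ$, and then square the latter with the anticommutation relation to get $4xy=\tilde p^2$. The identification of $P$ with the scalar (half-trace) part of $p$ and the consistency check against the quadratic \eqref{squadratic} are likewise in line with the paper.
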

\begin{proof}
  We have
  \[ x+y = s^2 + \Sigma^2 - 2\omega^2 = s^2 + \Sigma_\circ^2 - \omega^2 = -P\]
  which follows from the trace part of the Sachs equation $S^2 -[\omega,S] + p - \omega^2=0$, when $s+\Sigma=S+\omega$.  The trace-free part of the Sachs equation gives
  \[2\Sigma_\circ(s+\omega) = 2(s-\omega)\Sigma_\circ = -\tilde p.\]
  So we have
  \begin{align*}
    4xy &= 4\Sigma_\circ^2(s^2-\omega^2) \\
        &= 4\Sigma_\circ(s+\omega)(s-\omega)\Sigma_\circ \\
        &=  \tilde p^2,          
  \end{align*}
  as claimed.
\end{proof}
Lemma \ref{QuadraticRootsLemma} implies that $x$ and $y$ are roots of the quadratic 
\[z^2 + Pz + 4^{-1}\tilde p^2 =0.\]
Summarizing then,
\begin{theorem}
  Let $z=x$ and $z=y$ be roots (with multiplicity) of the quadratic \eqref{squadratic}:
  \begin{equation}\label{ConjugatePointsQuadratic}
    z^2 + Pz + 4^{-1}\tilde p^2=0.
  \end{equation}
  \begin{itemize}
  \item If $|p|\ne 0$, then the points $u$ conjugate to $u=0$ are the (real) solutions to
  \[x\sigma^2(u\sqrt{y+\omega^2}) = y\sigma^2(u\sqrt{x+\omega^2})\]
  where $\sigma(z)=\sinh z/z$.
  \item If $|p|=0$, then $x=y=-P/2$ and the points $u$ conjugate to $u=0$ are the non-zero real solutions to
    \[\frac{uP}{2}\sinh(u\sqrt{\omega^2-P/2})\cosh(u\sqrt{\omega^2-P/2}) = \omega^2\sqrt{\omega^2-P/2}\sinh^2(u\sqrt{\omega^2-P/2}).\]
  \end{itemize}
\end{theorem}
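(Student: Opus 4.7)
The plan is to combine Lemma \ref{ConjugatePointsLemma}, which encodes the conjugacy condition in terms of the scalar $s$ and the matrix $\Sigma$ attached to a constant Sachs solution, with Lemma \ref{QuadraticRootsLemma}, which identifies $y=s^{2}-\omega^{2}$ and $x=\Sigma^{2}-\omega^{2}$ as the two roots of $z^{2}+Pz+\tilde p^{2}/4=0$.  The bridge between the $\gamma$-form of Lemma \ref{ConjugatePointsLemma} and the $\sigma^{2}$-form appearing in the theorem is the doubling identity $\gamma(2z)=\sigma^{2}(z)$ recorded in \S\ref{GroupOrbit2x2}, where both sides equal $\sinh^{2}(z)/z^{2}$ as entire even functions.

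The first step is to verify that the case split in the theorem matches that in Lemma \ref{ConjugatePointsLemma}.  Equation \eqref{detp} gives $|p|=(s^{2}-\Sigma^{2})^{2}=(y-x)^{2}$, which is the discriminant of the quadratic; thus $|p|\ne 0$ corresponds precisely to the generic case $s^{2}\ne\Sigma^{2}$, while $|p|=0$ forces the double root $x=y=-P/2$, so that $s^{2}=\Sigma^{2}=\omega^{2}-P/2$.

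In the generic case, I would clear the $s^{2}-\Sigma^{2}$ denominator in \eqref{ConjugatePointsLemmaEqn}, apply the doubling identity, and substitute $s=\sqrt{y+\omega^{2}}$, $\Sigma=\sqrt{x+\omega^{2}}$ to obtain the displayed equation in $x$ and $y$ (using the symmetry of the two roots under relabeling).  In the degenerate case, I would invoke the second sub-case of Lemma \ref{ConjugatePointsLemma} (for $\Sigma^{2}=s^{2}\ne 0$), namely $u(s^{2}-\omega^{2})\cosh(su)\sinh(su)+s\omega^{2}\sinh^{2}(su)=0$, then substitute $s^{2}-\omega^{2}=-P/2$ and $s=\sqrt{\omega^{2}-P/2}$ and rearrange the signs to arrive at the claimed formula.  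The only fiddly boundary point is the sub-case $\Sigma^{2}=s=0$ (so the double root is zero and $P=2\omega^{2}$), where the stated degenerate formula collapses to $0=0$; here the sharper identity $2\omega^{2}u^{2}=3$ from the third sub-case of Lemma \ref{ConjugatePointsLemma} has no real solutions, so no real conjugate points arise, consistent with the theorem.  Because all of the substantive analysis has been packaged into the two preceding lemmas, no real obstacle remains beyond this careful reinterpretation of variables.
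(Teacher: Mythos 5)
Your overall plan is the same as the paper's: the theorem appears there only as a summary of Lemma \ref{ConjugatePointsLemma} and Lemma \ref{QuadraticRootsLemma}, glued together by the doubling identity $\gamma(2z)=\sigma^2(z)$, and your case split via $(x-y)^2=|p|$, together with the fallback to the third sub-case of Lemma \ref{ConjugatePointsLemma} when $\Sigma^2=s=0$, is exactly the intended bookkeeping.

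The gap is in the step claiming that the substitution $s^2=y+\omega^2$, $\Sigma^2=x+\omega^2$ ``obtains the displayed equation\dots using the symmetry of the two roots under relabeling''. Substituting into \eqref{ConjugatePointsLemmaEqn} and using $\gamma(2z)=\sigma^2(z)$ gives $y\,\sigma^2(u\sqrt{y+\omega^2})=x\,\sigma^2(u\sqrt{x+\omega^2})$, in which each root multiplies $\sigma^2$ evaluated at its \emph{own} shifted square root; the theorem's first display instead pairs each root with the \emph{other} root's argument. These are genuinely different equations, and the relabeling $x\leftrightarrow y$ maps each of them to itself, so it cannot convert one into the other. Concretely, take $\omega=0$ and $p=\mathrm{diag}(a,b)$ with $a\ne b>0$: then $x=-\tfrac14(\sqrt a-\sqrt b)^2$, $y=-\tfrac14(\sqrt a+\sqrt b)^2$, the conjugate points are $u=k\pi/\sqrt a$ and $u=k\pi/\sqrt b$, and these satisfy the uncrossed equation (it reduces to $\sin(u\sqrt a)\sin(u\sqrt b)=0$) but not the crossed equation of the display (which reduces to $x^2\sin^2(u\sqrt{|y|})=y^2\sin^2(u\sqrt{|x|})$). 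So either the display has $x$ and $y$ transposed --- in which case your argument proves the corrected statement and the write-up should say so explicitly --- or, read literally, your derivation does not reach the stated conclusion. Relatedly, in the corner $P=2\omega^2$ of the second bullet the displayed condition degenerates to $0=0$; your appeal to $2\omega^2u^2=3$ from Lemma \ref{ConjugatePointsLemma} gives the right answer (no real conjugate points), but that is a correction of the statement rather than something ``consistent with'' it, and should be flagged as such.
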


Note that the discriminant of \eqref{ConjugatePointsQuadratic} may be written:
\[(x-y)^2 = |p| = P^2 - \tilde p^2.\]
There are thus no real roots if $|p| < 0$, i.e., $p$ represents an indefinite form.  On the other hand if $|p| >0$, then there is a pair of real roots $x,y = 2^{-1}(-P\pm\sqrt{|p|})$, which have the same sign as each other and sign opposite that of $P$ (since $|p|=P^2-\tilde p^2$).

\begin{theorem}
  In the conformally trivial case of $\tilde p=0$, there is a conjugate point if and only if the energy-momentum $E=P-\omega^2$ is positive.
\end{theorem}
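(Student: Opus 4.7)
The plan is to specialize Lemma~\ref{ConjugatePointsLemma} to the conformally trivial case. First I would identify a constant Sachs solution when $\tilde p = 0$. The trace-free part of the algebraic Sachs equation $S^2 - [\omega, S] + p - \omega^2 = 0$ reads $2\Sigma_\circ(s+\omega) = 0$, so the natural branch is $\Sigma_\circ = 0$ (the alternative branch forces $w = 0$ and falls into a degenerate subcase treated below). The trace part then gives $s^2 = -(A + w^2) = -E$, while $\Sigma = \omega$ makes $\Sigma^2 - \omega^2 = 0$.

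Next I would substitute into Lemma~\ref{ConjugatePointsLemma}. Generically, when $s^2 \ne \Sigma^2$ (equivalently, $P \ne 0$), the criterion reduces to $(s^2-\omega^2)\gamma(2su) = 0$. Since $s^2 - \omega^2 = -E + w^2 = -P$ is nonzero by assumption, the condition becomes $\gamma(2su) = 0$. Because $\gamma(z) = 2(\cosh z - 1)/z^2$ vanishes precisely at $z = 2k\pi i$ with $k \ne 0$, this has a real nonzero solution $u$ if and only if $s$ is purely imaginary, i.e.\ $s^2 = -E < 0$, i.e.\ $E > 0$; the conjugate points are then $u = k\pi/\sqrt{E}$.

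The remaining case $P = 0$ (where $s^2 = \Sigma^2 = \omega^2$) falls under the degenerate subcases of Lemma~\ref{ConjugatePointsLemma}. If $w \ne 0$, the subcase $\Sigma^2 = s^2 \ne 0$ applies: since $s^2 - \omega^2 = 0$, the condition collapses to $s\omega^2 \sinh^2(su) = 0$, hence $\sin(uw) = 0$, giving conjugate points at $u = k\pi/w$ with $E = w^2 > 0$. If $w = 0$, then the further degenerate case $\Sigma^2 = s = 0$ gives $2\omega^2 u^2 = 3$ with $\omega^2 = 0$, a contradiction, so no conjugate points exist, consistent with $E = 0$.

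The main obstacle is the careful bookkeeping between the various degenerate subcases of Lemma~\ref{ConjugatePointsLemma}, especially the transition at $P = 0$ where $s^2 = \Sigma^2$. Conceptually, however, the content is transparent: conjugate points arise precisely from zeros of $\gamma(2su)$ on the real $u$-axis, which requires $s$ to be purely imaginary, which is in turn precisely the condition $E > 0$.
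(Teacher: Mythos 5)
Your proof is correct and takes essentially the same route as the paper: both specialize the conjugacy criterion of Lemma \ref{ConjugatePointsLemma} (equivalently, its root form with $x,y\in\{0,-P\}$) to the constant solution $\Sigma_\circ=0$, $s^2=-E$, and then check the degenerate subcases at $P=0$. (Your parenthetical that the alternative branch ``forces $w=0$'' is inaccurate --- it forces $s^2+w^2=0$ --- but this is harmless, since only the one constant solution you use is needed to apply the lemma.)
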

\begin{proof}
  The solutions of \eqref{ConjugatePointsQuadratic} are $y=0$ and $x=-P$.  If $E=0$, then $p=\omega^2$ and the Sachs equation has solution $S=0$, and there are no conjugate points.  If $P=0$ and $\omega\ne 0$, then $S=i\sqrt{|\omega|}$ is a solution of the Sachs equation with conjugate points, and we have $E=-\omega^2>0$.   Finally, if $P\ne 0$ and $E\ne 0$, then condition for conjugacy reads
  \[P\sigma^2(u\sqrt{\omega^2-P}) = 0.\]
  That is,
  \[P\sin(u\sqrt E) = 0\]
  which has a non-zero real solution if and only if $E>0$.
\end{proof}

\begin{theorem}
  If either of the following conditions is true, then there is a conjugate point:
  \begin{itemize}
  \item $|p|<0$; or
  \item the energy-momentum $E=P-\omega^2$ is positive.
  \end{itemize}  
\end{theorem}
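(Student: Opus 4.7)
My plan is to handle the two hypotheses separately.

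\textbf{Case 1: $E>0$.} By Lemma \ref{GAGBIso}, the Alekseevsky microcosm $\mathcal G_A(\omega,p)$ is isometric to the Brinkmann microcosm $\mathcal G_B(\omega,p-\omega^2)$, whose time-dependent tidal curvature is $e^{-u\omega}(p-\omega^2)e^{u\omega}$. Conjugation by $e^{u\omega}$ preserves trace, so the Brinkmann energy function is the constant $\op{tr}(p-\omega^2)=2(A+w^2)$, which is strictly positive under our hypothesis $E=P-\omega^2=A+w^2>0$. Since the microcosm is geodesically complete we have $\mathbb U=\mathbb R$, and Lemma \ref{ConjugatePointsLemma1} with any $\epsilon\in(0,2(A+w^2))$ immediately produces a conjugate point in $[0,\epsilon^{-1/2}\pi]$.

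\textbf{Case 2: $|p|<0$.} The discriminant of \eqref{ConjugatePointsQuadratic} is $(x-y)^2=|p|<0$, so its roots form a genuinely complex-conjugate pair $x=\alpha+i\beta$, $y=\bar x$ with $\beta\ne 0$. I set $s^2=y+\omega^2=\alpha-w^2-i\beta$ and $\Sigma^2=x+\omega^2=\overline{s^2}$, and choose complex square roots so that $\Sigma=\bar s$ (both choices give the same $\sigma^2$, since $\sigma$ is even). Since $\operatorname{Im}(s^2)=-\beta\ne 0$, neither $c:=\operatorname{Re}(s)$ nor $d:=\operatorname{Im}(s)$ can vanish (either would force $s^2$ real), and after possibly replacing $s$ with $-s$ I may arrange $c>0$. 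Lemma \ref{ConjugatePointsLemma} (its first bullet applies, as $x\ne y$ forces $s^2\ne\Sigma^2$) now identifies conjugate points $u\ne 0$ with real solutions of $x\sigma^2(us)=y\sigma^2(u\Sigma)$, and since $\sigma$ has real Taylor coefficients, $\sigma^2(u\Sigma)=\overline{\sigma^2(us)}$ for real $u$; thus the conjugate-point equation reduces to vanishing of the continuous real-valued function $\Phi(u):=\operatorname{Im}(x\,\sigma^2(us))$.

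To finish, I would apply the intermediate value theorem to $\Phi$. At the origin, $\Phi(0)=\operatorname{Im}(x)=\beta\ne 0$. For large real $u$, using $\sinh(us)=\tfrac12 e^{us}(1+O(e^{-2uc}))$ and $e^{2us}=e^{2uc}(\cos(2ud)+i\sin(2ud))$, one obtains
$$x\,\sigma^2(us)=\frac{x\,e^{2uc}}{4u^2 s^2}\bigl(\cos(2ud)+i\sin(2ud)\bigr)\bigl(1+O(e^{-2uc})\bigr).$$
Writing $x/s^2=\rho e^{i\phi}$ with $\rho>0$ (both $x$ and $s^2$ are nonzero), this gives $\Phi(u)=\rho\,e^{2uc}(4u^2)^{-1}(\sin(\phi+2ud)+o(1))$. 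Because $d\ne 0$, the factor $\sin(\phi+2ud)$ attains values of either sign bounded away from $0$ at arbitrarily large $u$, while the exponential prefactor drives the amplitude to infinity. Hence $\Phi$ assumes both signs on $(0,\infty)$, and IVT yields a zero, which is the desired conjugate point.

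The main obstacle is the technical non-degeneracy in Case 2: verifying that none of $c$, $d$, $x$, or $s$ vanishes and that the choice of complex square roots does not affect the equation. Both are direct consequences of $\operatorname{Im}(s^2)=-\beta\ne 0$ together with the evenness of $\sigma$. Everything else is routine asymptotic analysis and intermediate-value reasoning.
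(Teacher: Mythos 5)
Your proposal is correct and follows essentially the same route as the paper: for $E>0$ it invokes the Raychaudhuri-based Lemma \ref{ConjugatePointsLemma1} (with geodesic completeness supplying the full interval), and for $|p|<0$ it exploits that the roots $x,y$ are complex conjugates so the conjugacy criterion becomes reality of $x\,\sigma^2(u\sqrt{y+\omega^2})$, whose imaginary part oscillates with growing amplitude as $u\to\infty$, giving a zero by the intermediate value theorem. Your extra care in checking $\operatorname{Re}(s)>0$, $\operatorname{Im}(s)\ne 0$, and the branch/evenness issues fills in details the paper leaves implicit, but it is the same argument.
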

\begin{proof}
  If $|p|<0$, then $x$ and $y$ are complex conjugates, and we can choose a branch of the square root such that $\sqrt{x+\omega^2}$ and $\sqrt{y+\omega^2}$ also have this property, and with real part positive.  The condition for conjugacy is thus that $x\sigma^2(u\sqrt{x+\omega^2})$ is {\em real}.  Writing $\sqrt{x+\omega^2} = r + ia$, we have as $u\to\infty$
  \[x\sigma^2(u\sqrt{x+\omega^2}) = \frac{x}{u^2(r+ia)^2}\exp(2u(r+ia))+O(u^{-2})\]
  and the imaginary part of the first term changes sign infinitely often, and so by continuity, the left-hand side is real somewhere.

  If $E>0$, then existence of conjugate points follows by Lemma \ref{ConjugatePointsLemma1}.
\end{proof}

\begin{corollary}
If $p$ has a positive eigenvalue, then there exist conjugate points.
\end{corollary}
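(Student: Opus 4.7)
The plan is to deduce the corollary directly from the preceding theorem by a case split on the sign of $|p|$. Since $p$ is a real symmetric $2\times 2$ matrix, its eigenvalues $\lambda_1,\lambda_2$ are real, $|p|=\lambda_1\lambda_2$ and $\operatorname{tr}(p)=\lambda_1+\lambda_2$. The hypothesis is that (say) $\lambda_1>0$.

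First, suppose $|p|<0$. Then the first bullet of the preceding theorem immediately gives the existence of a conjugate point, and there is nothing to do. So assume $|p|\ge 0$; equivalently, $\lambda_1\lambda_2\ge 0$. Combined with $\lambda_1>0$, this forces $\lambda_2\ge 0$, and hence $\operatorname{tr}(p)=\lambda_1+\lambda_2>0$.

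Next I would observe that with $\omega = \begin{bmatrix}0 & -w\\ w & 0\end{bmatrix}$ one has $\omega^2=-w^2 I$, so $\operatorname{tr}(\omega^2)=-2w^2\le 0$, and therefore
\[ E = \operatorname{tr}(p-\omega^2) = \operatorname{tr}(p) + 2w^2 \;\ge\; \operatorname{tr}(p) \;>\; 0.\]
Now the second bullet of the preceding theorem (the $E>0$ case, which in turn rests on Lemma \ref{ConjugatePointsLemma1}) yields a conjugate point.

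There is no real obstacle here: the only thing to be careful about is the edge case $\lambda_2=0$, which falls under $|p|=0$ rather than $|p|<0$, so that one must indeed use the positive-energy branch of the previous theorem rather than the indefinite branch. Once the two cases $|p|<0$ and $|p|\ge 0$ are separated, everything is a one-line application of what has already been proved.
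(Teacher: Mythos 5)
Your proof is correct and matches the paper's argument: the paper likewise splits into the case of a positive and a negative eigenvalue (giving $|p|<0$) and the case of a positive plus a nonnegative eigenvalue (giving positive energy-momentum $E$), then invokes the same two bullets of the preceding theorem. The only cosmetic difference is that you phrase the split as $|p|<0$ versus $|p|\ge 0$ and spell out $E=\operatorname{tr}(p)+2w^2>0$, which is equivalent (up to the harmless factor of two between the paper's two normalizations of $E$).
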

\begin{proof}
  If $p$ has a positive and negative eigenvalue, then $|p|<0$.  If $p$ has a positive eigenvalue and a nonnegative eigenvalue, then $E = P-\omega^2 > 0$.
\end{proof}


\bibliographystyle{hplain} 
\bibliography{planewaves} 

\end{document}